\newtheorem{proposition}{Proposition}
\newtheorem{assumption}{Assumption}
\newtheorem{theorem}{Theorem}
\newtheorem{lemma}{Lemma}
\newtheorem{problem}{Problem}
\newtheorem{remark}{Remark}
\title{Safety-Constrained Learning and Control using Scarce Data and Reciprocal Barriers}
\author{Christos K. Verginis, Franck Djeumou, and Ufuk Topcu
\thanks{The authors are with the 
University of Texas at Austin, Austin, TX, USA. Email:
{\tt\small \{cverginis,fdjeumou, utopcu\}.utexas.edu}.
    }%
}
\begin{document}

\maketitle

\begin{abstract}
We develop a control algorithm that ensures the safety, in terms of confinement in a set, of a system with unknown, $2$nd-order nonlinear dynamics. The algorithm establishes novel connections between data-driven and robust, nonlinear control. It is based on data obtained online from the current trajectory and the concept of reciprocal barriers. More specifically, it first uses the obtained data to calculate set-valued functions that over-approximate the unknown dynamic terms. For the second step of the algorithm, we design a robust control scheme that uses these functions as well as  reciprocal barriers to render the system forward invariant with respect to the safe set. In addition, we provide an extension of the algorithm that tackles issues of controllability loss incurred by the nullspace of the control-direction matrix. The algorithm removes a series of standard, limiting assumptions considered in the related literature since it does not require global boundedness, growth conditions, or a priori approximations of the unknown dynamics’ terms.

\end{abstract}


\section{Introduction} \label{sec:intro}

Learning-based control is an important emerging topic of research that tackles uncertain autonomous systems. Control of uncertain systems has been widely studied in the literature, mostly by means of robust and adaptive control \cite{ioannou2012robust}. These techniques, however, require restrictive assumptions on the uncertainty type, such as linear parameterizations, a priori neural network approximations, or additive disturbances. 
Such assumptions might be too restrictive in cases where the dynamics sustain abrupt unknown changes, due to, for instance, unpredicted failures. Traditional control techniques might fail in such scenarios and one must turn to data-based approaches. At the same time, since we aim to tackle cases of abrupt dynamic changes, standard episodic reinforcement learning algorithms are inapplicable \cite{sutton2018reinforcement}; we are restricted to data obtained on the fly from the current trajectory, which limits greatly the available resources.

This paper  considers the problem of safety, in the sense of confinement in a given set, of $2$nd-order \textit{nonlinear systems} of the form (to be precisely defined in Sec.~\ref{sec:pf})
\begin{subequations} \label{eq:system 1}
\begin{align} 
\dot{x}_1 &= x_2 \\
\dot{x}_2 &= f(x) + g(x)u    
\end{align}
\end{subequations}
with \textit{a priori unknown} terms $f$ and $g$, for which the assumptions we impose are restricted to local Lipschitz continuity. 
Unlike previous works in the related literature, we do not impose growth conditions \cite{verginis2021adaptive} or \textit{global} Lipschitz continuity on the dynamics
\cite{franckACC,ornik2019control}, and we do not assume boundedness of the state \cite{ornik2019control}. Moreover, we do not restrict $g$ to be in the class of square positive definite matrices, a convenient property that has been commonly used in the related literature \cite{bechlioulis2008robust,verginis2019closed,liu2019barrier}. Finally, we do not 
employ a priori approximations of the system dynamics, such as linear parameterizations \cite{taylor2020adaptive,lopez2020robust} or neural networks \cite{bechlioulis2010prescribed}. 

Our proposed solution consists of a two-layered algorithm for the safety control of the unknown system in \eqref{eq:system 1}, integrating nonlinear feedback control with on the fly data-driven techniques. 
More specifically, the main contributions are as follows. 
Firstly, we use a discrete, finite set of data obtained from the current trajectory to compute an estimate of the control matrix $g$. 
Secondly, we use this estimate to design a novel feedback control protocol based on reciprocal barriers, rendering the system forward-invariant with respect to the given safe set under certain assumptions on the estimation error. We further provide an analytic relation between the estimation error and the frequency of the obtained data. Finally, we provide a provably correct extension that tackles controllability loss incurred by the control matrix $g$. 
The proposed algorithm is ``minimally invasive", in the sense that it acts only close to the boundary of the safe set, and does not require any expensive numerical operations or tedious analytic expressions to produce the control signal, enhancing thus its applicability.
\begin{figure}
	\centering
	\includegraphics[trim=.65cm 1cm 0 0,width=0.525\textwidth]{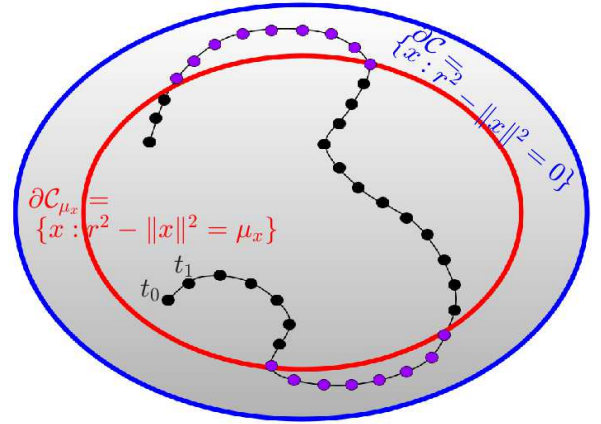}	
	\caption{ An example of a safe set $\mathcal{C} = \{x\in\mathbb{R}^n : h(x) = r^2 - \|x\|^2 > 0\}$ (with blue boundary) we aim to retain the system trajectory in, by using data obtained on the fly at the time instants $\{t_0,t_1,\dots\}$. The set $\mathcal{C}_{\mu_x} = \{x\in\mathbb{R}^n : h(x) = r^2 - \|x\|^2 \in (0,\mu_x)\}$ (with red boundary) dictates the region where the safety controller is activated.   }
	\label{fig:sets_1}
\end{figure}

This paper extends our preliminary work \cite{verginisFranck21} in the following directions: firstly, we consider second-order systems allowing position-like safety constraints with respect to $x_1$, in contrast to the first-order case of \cite{verginisFranck21}. Secondly, we provide better insight to the proposed solution by
relating the estimation error of $g$ with the frequency of the obtained data. Finally, we provide an analytic proof for the correctness of the algorithm that tackles the controllability loss incurred by the control matrix $g$. 

\textbf{Related Work:}
Safety of autonomous systems in the sense of set invariance \cite{blanchini1999set} is a topic that has been and still is undergoing intense study by the control community. The most widely used methodology is the concept of barrier certificates \cite{prajna2006barrier}, which provide a convenient and efficient way to guarantee invariance in a given set \cite{wieland2007constructive,ames2016control, ngo2005integrator}. 
Nevertheless, standard control based on barrier certificates relies heavily on the underlying dynamics since the respective terms are used in the control design. Extensions that tackle dynamic uncertainties have been considered in \cite{tee2009barrier,jankovic2018robust,xu2015robustness,taylor2020adaptive,lopez2020robust,jin2018adaptive,obeid2018barrier} using robust and adaptive control, restricted, however, to additive perturbations or linearly parameterized terms that include constant unknown parameters. Similarly, the recent works \cite{verginis2021sampling,verginis2019adaptive,verginis2019closed,arabi2020safety} use barrier functions to guarantee safety for systems whose dynamic terms satisfy linear parameterization with respect to uncertainties or growth and dissipative conditions. 
Therefore, the respective methodologies are not applicable to the class of systems considered in this paper.   

Another class of work dealing with unknown dynamics and using the concept of barrier certificates is that of funnel control, which guarantees confinement of the state in a given funnel,  \cite{bechlioulis2008robust,bechlioulis2010prescribed,hackl2013funnel,verginis2020asymptotic}. In contrast to the setup of the current paper, such methodologies either rely on approximation of the dynamics using neural networks  \cite{bechlioulis2010prescribed}, or require positive definite input matrix $g$  \cite{bechlioulis2008robust,verginis2020asymptotic} (see \eqref{eq:system 1}). The former 
has the drawbacks of lacking good heuristics for choosing radial basis functions and number of layers, as well as relying on strong assumptions on the amount of  data and the approximation errors. The latter is a convenient assumption on the controllability of the system; in fact, we show in this paper that when $g$ is square and positive definite, we achieve the safety of the system without resorting to the use of data (following similar steps with \cite{verginis2020asymptotic,bechlioulis2008robust}).


Barrier certificates have also been integrated with learning-based approaches to address the safety of uncertain systems \cite{cheng2019end,fisac2018general,jagtap2020control,liu2019barrier,srinivasan2020synthesis,robey2021learning,choi2020reinforcement}; \cite{cheng2019end,jagtap2020control,fisac2018general}, however, only consider additive uncertain terms modeled by Gaussian processes and assumed to evolve in compact sets. In \cite{choi2020reinforcement} the authors have access to a nominal model and propose an episodic reinforcement learning approach that tackles the residual disturbance. 
\cite{robey2021learning} and \cite{srinivasan2020synthesis} use data for learning barrier functions by employing the underlying dynamics, either partially or fully, and \cite{clark2019control} considers the safety problem for stochastic system with additive disturbances. Finally,
\cite{liu2019barrier} uses approximation of the dynamics using neural networks. On the contrary, we consider nonlinear systems of the form \eqref{eq:system 1} where both  $f$ and $g$ are unknown, without having access to any nominal model.

Moreover, many of the aforementioned works require large amounts of data in order to provide accurate resutls.
Recent methodologies that employ limited data obtained on the fly have been developed in \cite{ahmadi2020safe,ornik2019control,franckACC}, imposing, however, restrictive assumptions on the dynamics, such as global boundedness and Lipschitz continuity with known bounds, or known bounds on the approximation errors. In addition, the aforementioned works resort to online optimization techniques for safety specifications, increasing thus the complexity of the resulting algorithms. 
Other standard optimization-based algorithms that guarantee safety through state constraints \cite{frank_2003_nmpc_bible,herbert2019reachability} cannot tackle dynamic uncertainties more sophisticated than additive bounded disturbances. 
In the current paper, we rely on limited data without imposing any of the assumptions stated above, developing a computationally efficient safety control algorithm.


The remainder of this article is structured as follows. Section  \ref{sec:pf} gives the problem formulation. Section \ref{sec:dynamics approx} presents the approximation algorithm and the control design is provided in Section \ref{sec:control design}. Section \ref{sec:contr loss} investigates the case of controllability loss, and Section \ref{sec:simulations} presents simulation examples. Finally, Section \ref{sec:conclusion} concludes the paper.

\section{Problem Formulation} \label{sec:pf}

\subsection{Notation} 
We denote by $\bar{\mathbb{N}} \coloneqq \mathbb{N} \cup \{0\}$ the set of nonnegative integer numbers, where $\mathbb{N}$ is the set of natural numbers. The set of $n$-dimensional nonnegative reals, with $n\in\mathbb{N}$, is denoted by $\mathbb{R}^n_{\geq 0}$; $\textup{Int}(A)$, $\partial A$, and $\textup{Cl}(A)$ denote the interior,  boundary, and closure, respectively, of a set $A\subseteq \mathbb{R}^n$.
The open and closed ball of radius $r > 0$ around $x\in\mathbb{R}^n$ is denoted by ${\mathcal{B}}(x,r)$ and $\bar{\mathcal{B}}(x,r)$, respectively.
The minimum eigenvalue of a matrix $A\in \mathbb{R}^{n\times m}$ is denoted by $\lambda_{\min}(A)$. Given $a\in\mathbb{R}^n$, $\|a\|$ denotes its $2$-norm; $\nabla_y h(\cdot) \coloneqq \frac{\partial h(\cdot)}{\partial y}\in\mathbb{R}^m$ is the gradient of a real-valued function $h:\mathbb{R}^n\to\mathbb{R}$ with respect to $y\in \mathbb{R}^m$, and $\nabla h(x) \coloneqq \frac{\textup{d} h(x)}{\textup{d} x} \in \mathbb{R}^n$.  
An interval in $\mathbb{R}$ is denoted by $[a,b] = \{x\in\mathbb{R}:a \leq x \leq b\}$ and 
the set of intervals on $\mathbb{R}$ by $\mathbb{IR} \coloneqq \{ \mathcal{A} \coloneqq [\underline{\mathcal{A}},\bar{\mathcal{A}}] : \underline{\mathcal{A}},\bar{\mathcal{A}} \in \mathbb{R}, \underline{\mathcal{A}} \leq \bar{\mathcal{A}} \}$, which extends to the sets of interval vectors $\mathbb{IR}^n$ and matrices $\mathbb{IR}^{n\times m}$. We denote by $|\mathcal{A}| \coloneqq \max \{\overline{\mathcal{A}}, \underline{\mathcal{A}}\}$ the absolute value of an interval $\mathcal{A} \in \mathbb{IR}$,  and  the infinity norm of $\mathcal{B}  \coloneqq (\mathcal{B}_1,\hdots,\mathcal{B}_n) \in \mathbb{IR}^n$ by $\|\mathcal{B}\|_\infty = \max_{i\in \{1,\hdots,n\}} |\mathcal{B}_i|$. The width of an interval $\mathcal{A} \in \mathbb{R}^n$ is denoted by $\mathrm{wd}(\mathcal{A}) \coloneqq \overline{\mathcal{A}} - \underline{\mathcal{A}}$.
We carry forward the definitions \cite{moore1966interval} of arithmetic operations, set inclusion, and intersections of intervals to interval vectors and matrices 
componentwise.

\subsection{Problem Setup}

Consider a system characterized by $x \coloneqq [x_1,x_2]^\top \in\mathbb{R}^{2n}$, $x_i \coloneqq [x_{i_1},\dots,x_{i_n}]^\top \in \mathbb{R}^n$, $i\in\{1,2\}$, with dynamics 
\begin{subequations} \label{eq:system}
\begin{align} 
	\dot{x}_1 &= x_2, \\
	\dot{x}_2 &= f(x) + g(x)u
\end{align}
\end{subequations}
where $f\coloneqq [f_1,\dots,f_n]^\top:\mathbb{R}^{2n}\to \mathbb{R}^n$, $g\coloneqq [g_{ij}]:\mathbb{R}^{2n}\to\mathbb{R}^{n\times m}$ are \textit{unknown}, continuously differentiable functions, and $u \coloneqq [u_1,\dots,u_m]^\top \in\mathbb{R}^m$ is the control input. The problem this work considers is the invariance of the unknown system \eqref{eq:system} in a given closed set $\mathcal{C} \subset \mathbb{R}^n$ of the form 
\begin{align} \label{eq:C def}
	&\mathcal{C} \coloneqq \{ x_1 \in\mathbb{R}^n : h(x_1) \geq 0 \}  
\end{align}
where $h:\mathbb{R}^n \to \mathbb{R}$ is a continuously differentiable function, with bounded derivative $\frac{\textup{d} h(\mathsf{x})}{\textup{d} \mathsf{x}}$ in $\textup{Int}(\mathcal{C})$. 
More specifically, we aim to design a control law that achieves $x_1(t) \in \textup{Int}(\mathcal{C})$, i.e., $h(x_1(t)) > 0$, for all $t \geq t_0$, given that $x_1(t_0) \in \textup{Int}(\mathcal{C})$ for a positive   $t_0 \geq 0$. In our previous work \cite{verginisFranck21} we considered the safety of systems of the form $\dot{x} = f(x) + g(x)u$ in terms of $h(x) > 0$, assuming that $\nabla h(x)^\top g(x)$ is not identically zero, i.e., relative degree one. This, however, does not apply to safety specifications as dictated by \eqref{eq:C def} for systems of the form \eqref{eq:system}, whose relative degree is 2, since $\nabla h(x_1)^\top g(x) = 0$. Remark \ref{rem:high-order dyn} discusses the extension of the proposed algorithm to systems with higher relative degree.


As mentioned in Section \ref{sec:intro}, we aim to integrate a nonlinear feedback control scheme with a data-driven algorithm that approximates the dynamics \eqref{eq:system} by using data obtained on the fly from a finite-horizon trajectory. More specifically, consider an increasing time sequence $\{t_0, t_1, t_2, \dots \}$ signifying the time instants of data measurements.
That is, we assume that at each $t_i$, $i\in {\mathbb{N}}$, the system has access to the discrete dataset of $i$ points $\mathscr{T}_i \coloneqq \{(x^j,\dot{x}^j,u^j) \}^{i-1}_{j=0}$,  consisting of the system state $x^j = [x^j_1,\dots,x^j_n]^\top = x(t_j)$, the state derivative $\dot{x}^j = [\dot{x}^j_1,\dots,\dot{x}^j_n]^\top = \dot{x}(t_j)$, and the control input $u^j = [u^j_1,\dots,u^j_m]^\top = u(t_j)$ from a trajectory of \eqref{eq:system}. 
The  trajectory that produces the dataset $\mathscr{T}_i$ has finite horizon in the sense that, for each finite $i$, $\mathscr{T}_i$ is finite. 
We are now ready to give the problem statement treated in this paper. 

\begin{problem} \label{problem:1}
Let a system evolve subject to the unknown dynamics \eqref{eq:system}. Given the discrete dataset $\mathscr{T}_i$, $i\in {\mathbb{N}}$, and $x_1(t_0) \in \textup{Int}(\mathcal{C})$, compute a  timed-varying feedback control law $u:\mathbb{R}^{2n}\times [t_0,\infty) \to \mathbb{R}^m$ that guarantees $x_1(t) \in \textup{Int}(\mathcal{C})$, for all $t\geq t_0$.   
\end{problem}

We further impose the following assumptions, required for the solution of Problem \ref{problem:1}, where we use $\bar{\mathcal{A}} \coloneqq \mathcal{C} \times \mathbb{V}$, with $\mathbb{V} \subset \mathbb{R}^n$ a compact set.

\begin{assumption} \label{ass:C in ball}
 It holds that $\mathcal{C} \subset \mathcal{B}_r(0)$, where
 $\mathcal{B}_r(0)$ is the open ball of radius $r$ centered at $0$, for some $r>0$.     
\end{assumption}

\begin{assumption} \label{ass:lipshitz-bounds}
    There exist known positive constants $\bar{f}_k$, $\bar{g}_{k\ell}$ satisfying 
    $|f_k(\mathsf{x}) - f_k(\mathsf{y})| \leq \bar{f}_k |\mathsf{x} - 
    \mathsf{y}|$,  $|g_{k\ell}(\mathsf{x}) - g_{k\ell}(\mathsf{y})| \leq \bar{g}_{k\ell} |\mathsf{x} - \mathsf{y}|$, for all $k\in\{1,\dots,n\}$, $\ell\in\{1,\dots,m\}$, $\mathsf{x},\mathsf{y} \in \bar{\mathcal{A}}$.  
\end{assumption}

\begin{assumption} \label{ass:grad h}
    There exist positive constants $\nu_h$, $\varepsilon_h$ such that $\|\nabla h(x_1)\| \geq \varepsilon_h$ for all $x_1 \in \{ x_1 \in \mathbb{R}^n : h(x_1) \in (0,\nu_h] \} \subset \mathcal{C}$.
\end{assumption}


Assumption \ref{ass:C in ball} simply states that the system remains bounded when in the safe set $\mathcal{C}$. 
Assumption \ref{ass:lipshitz-bounds} considers knowledge of upper bounds of the Lipschitz constants of $f_k$ and $g_{kl}$ in the safe set $\bar{\mathcal{A}}$.
Note that we \textit{do not assume} that the system is bounded in any set or exact knowledge of the Lipschitz constants.  Assumption \ref{ass:grad h} is a simple controllability condition stating that the derivative of $h$ is not identically zero close to the boundary $\partial \mathcal{C}$. 

Note that the current problem setting exhibits a unique challenge due to the on the fly availability of the data measurements and the minor assumptions imposed on the dynamics \eqref{eq:system}.
In contrast to most related works, we do not assume global boundedness, Lipschitzness, or growth conditions on the dynamic terms, and we do not employ a priori approximation structures or data obtained offline.  

The solution of Problem \ref{problem:1}, consisting of a two-layered approach, is given in Sections \ref{sec:dynamics approx}-\ref{sec:contr loss}. Firstly, we use previous results on on the fly approximation of the unknown dynamics \cite{franckACC} and compute locally Lipschitz estimates for $g(x)$ (Section \ref{sec:dynamics approx}). Secondly, we use these estimates to design a closed-form feedback control law based on reciprocal barrier functions.

\section{ On-the-fly over-approximation of the dynamics} \label{sec:dynamics approx}

In this section, we provide a brief overview of the approximation algorithm of \cite{franckACC} based on data obtained online from a  single finite-horizon trajectory. More specifically, at each $t_i$, $i\in{\mathbb{N}}$, the algorithm uses the information from the finite dataset $\mathscr{T}_i$ in order to construct a data-driven differential inclusion $\dot{x} \in \mathcal{F}^i(x) + \mathcal{G}^i(x) u$ that contains the unknown vector fields of \eqref{eq:system}, where $\mathcal{F}^i : \mathbb{R}^{2n} \to \mathbb{IR}^n$ and $\mathcal{G}^i : \mathbb{R}^{2n} \to \mathbb{IR}^{n \times m}$ are known interval-valued functions. 
Such an over-approximation enables us to provide a locally Lipschitz  estimate $\hat{g}^i$ of $g$ to be used in the subsequent feedback control scheme. 



First, we propose in Lemma~\ref{lem:overapprox-f-G} closed-form expressions for $\mathcal{F}$ and $\mathcal{G}$ given over-approximations of $f$ and $g$ at some states. In the following, $\mathcal{A}$ is a closed subset of $\bar{\mathcal{A}}$.

\begin{lemma}[\textsc{\cite{franckACC}, Lemma $2$}]  \label{lem:overapprox-f-G}
	Let $i\in {\mathbb{N}}$ and consider the sets $\mathcal{A}$, $\mathscr{E}_i \coloneqq \{ (x^j,C^j_{\mathcal{F}}, C^j_{\mathcal{G}})\}_{j=0}^{i-1}$ where $C^j_{\mathcal{F}} \coloneqq (C^j_{\mathcal{F}_1},\dots,C^j_{\mathcal{F}_n})  \in\mathbb{IR}^n$, $C^j_{\mathcal{G}} \coloneqq (C^j_{\mathcal{G}_{k\ell}}) \in\mathbb{IR}^{n\times m}$ are intervals satisfying $f(x^j) \in C^j_{\mathcal{F}}$ and $g(x^j) \in C^j_{\mathcal{G}}$. Further, consider the locally Lipschitz constants $\bar{f}_k$, $\bar{g}_{k\ell}$ satisfying 
	$|f_k(\mathsf{x}) - f_k(\mathsf{y})| \leq \bar{f}_k |\mathsf{x} - \mathsf{y}|$,  $|g_{k\ell}(\mathsf{x}) - g_{k\ell}(\mathsf{y})| \leq \bar{g}_{k\ell} |\mathsf{x} - \mathsf{y}|$, for all $k\in\{1,\dots,n\}$, $\ell\in\{1,\dots,m\}$, $\mathsf{x},\mathsf{y} \in \mathcal{A}$ (see Assumption \ref{ass:lipshitz-bounds}). 
	The interval-valued functions $\boldsymbol{F}\coloneqq (\boldsymbol{F}_1,\dots,\boldsymbol{F}_n) : \mathbb{R}^{2n} \to \mathbb{IR}^{n}$ and $\boldsymbol{G} \coloneqq (\boldsymbol{G}_{k\ell}): \mathbb{R}^{2n} \to \mathbb{IR}^{n \times m}$, given for all $k\in\{1,\dots,n\}$ and $\ell \in \{1,\dots,m\}$, by the expressions
	\small
	\begin{subequations}	\label{eq:overapprox-f-G}
	\begin{align}
	\boldsymbol{F}_k(x) &\coloneqq  \bigcap_{(x^j, C^j_{\mathcal{F}}, \cdot) \in \mathscr{E}_i} \bigg\{ C^j_{\mathcal{F}_k} + \bar{f}_k \|x - x^j\| [-1,1] \bigg\}, \\
	\boldsymbol{G}_{k\ell}(x) &\coloneqq \bigcap_{(x^j, \cdot, C^j_{\mathcal{G}}) \in \mathscr{E}_i} \bigg\{ C^j_{\mathcal{G}_{k\ell}} + \bar{g}_{k\ell} \|x - x^j\| [-1,1] \bigg\},
	\end{align}
	\end{subequations}
	\normalsize
	satisfy $f(x) \in \boldsymbol{F}(x)$ 
	and $g(x) \in \boldsymbol{G}(x)$,
	for all $x \in \mathcal{A}$. 
\end{lemma}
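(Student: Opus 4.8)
The plan is to establish the two containments componentwise and pointwise, relying on nothing beyond Lipschitz continuity and elementary interval arithmetic; no optimization or fixed-point reasoning is needed. I would prove the claim for $\boldsymbol{F}$ in full, since the argument for $\boldsymbol{G}$ is obtained verbatim by replacing $f_k$, $\bar{f}_k$, $C^j_{\mathcal{F}_k}$ with $g_{k\ell}$, $\bar{g}_{k\ell}$, $C^j_{\mathcal{G}_{k\ell}}$.

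First, fix an arbitrary $x \in \mathcal{A}$, an index $k \in \{1,\dots,n\}$, and a data triple $(x^j, C^j_{\mathcal{F}}, \cdot) \in \mathscr{E}_i$. Since both $x$ and $x^j$ lie in $\mathcal{A}$, the Lipschitz bound hypothesized in the lemma (which is the restriction to $\mathcal{A} \subseteq \bar{\mathcal{A}}$ of Assumption \ref{ass:lipshitz-bounds}) gives $|f_k(x) - f_k(x^j)| \leq \bar{f}_k \|x - x^j\|$. I would then read this two-sided inequality as the interval membership $f_k(x) \in f_k(x^j) + \bar{f}_k\|x - x^j\|[-1,1]$, centered at the unknown value $f_k(x^j)$.

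The second step absorbs that unknown value into its known enclosure. Because $f_k(x^j) \in C^j_{\mathcal{F}_k}$ by hypothesis, monotonicity of the Minkowski sum under inclusion yields $f_k(x^j) + \bar{f}_k\|x-x^j\|[-1,1] \subseteq C^j_{\mathcal{F}_k} + \bar{f}_k\|x-x^j\|[-1,1]$, so $f_k(x)$ belongs to the $j$-th term of the intersection in \eqref{eq:overapprox-f-G}. Since $j$ was arbitrary, $f_k(x)$ lies in \emph{every} term and hence in their intersection $\boldsymbol{F}_k(x)$; as a byproduct this certifies that the intersection is nonempty, so $\boldsymbol{F}_k(x)$ is a well-defined interval. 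Letting $k$ range over $\{1,\dots,n\}$ gives $f(x) \in \boldsymbol{F}(x)$ componentwise, and letting $x$ range over $\mathcal{A}$ completes the argument for $f$; the identical three steps deliver $g(x) \in \boldsymbol{G}(x)$.

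I do not expect a genuine obstacle: the statement is essentially a transcription of Lipschitz continuity into interval notation. The only points meriting mild care are the interval-arithmetic inclusion $a \in A \Rightarrow a + B \subseteq A + B$, which must be invoked under the convention (fixed in the notation subsection, following \cite{moore1966interval}) that operations on $\mathbb{IR}$ are componentwise and inclusion-preserving, and the consistency of the norm appearing in the Lipschitz hypothesis with the $\|x - x^j\|$ used inside \eqref{eq:overapprox-f-G}. Confirming that all intersected sets share the common point $f_k(x)$ is what guarantees $\boldsymbol{F}_k(x)$ is nonempty rather than degenerate.
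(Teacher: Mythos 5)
Your proof is correct. The paper itself states this lemma without proof (it is imported verbatim from \cite{franckACC}), and your argument—Lipschitz continuity read as the interval membership $f_k(x) \in f_k(x^j) + \bar{f}_k\|x-x^j\|[-1,1]$, followed by inclusion-monotonicity of the Minkowski sum and intersection over all data points—is exactly the standard reasoning that underlies the cited result. Your observation that the common point $f_k(x)$ certifies nonemptiness of the intersection, and your implicit reading that the data points $x^j$ themselves lie in $\mathcal{A}$ (which is how the lemma is meant to be interpreted, since the dataset is drawn from a trajectory confined to the safe set), are both consistent with the intended statement.
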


Loosely speaking, Lemma~\ref{lem:overapprox-f-G} states that if a set $\mathscr{E}_i = \{ (x^j,C^j_{\mathcal{F}}, C^j_{\mathcal{G}}) \}_{j=0}^{i-1}$ and Lipschitz bounds are given, it is possible to obtain an analytic formula over the interval domain to over-approximate the unknown $f$ and $g$.  Lemma~\ref{lem:contraction} enables to compute the set $\mathscr{E}_i$ based on the dataset $\mathscr{T}_i$.
\begin{lemma}[{\cite{franckACC}, Lemma $1$}] \label{lem:contraction}
	Let a data point $(x^j,\dot{x}^j, u^j)$, a vector interval $\mathcal{F}^j \coloneqq (\mathcal{F}^j_1,\dots,\mathcal{F}^j_n) \in \mathbb{IR}^n$ such that $f(x^j) \in \mathcal{F}^j$, and a matrix interval $\mathcal{G}^j \coloneqq (\mathcal{G}^j_{k\ell}) \in \mathbb{IR}^{n \times m}$ such that $g(x^j) \in \mathcal{G}^j$. Consider the intervals $C^j_{\mathcal{F}} \coloneqq (C^j_{\mathcal{F}_1},\dots,C^j_{\mathcal{F}_n})\in\mathbb{IR}^n$ and $C^j_{\mathcal{G}} \coloneqq (C^j_{\mathcal{G}_{k\ell}})\in \mathbb{IR}^{n\times m}$, defined sequentially for $k\in \{1,\dots,n\}$, $\ell\in \{1,\dots,m\}$ by
	\small
    \begin{subequations} \label{eq:contraction-fG}
        \begin{align}
            C^j_{\mathcal{F}_k} & \coloneqq \left\{\mathcal{F}^j_k \right\}\: \cap \: \left\{\dot{x}^j_k - \mathcal{Y}^j_k\right\}, \\
            s_{0,k} & \coloneqq  \left\{ \dot{x}^j_k - C^j_{\mathcal{F}_k} \right\}\bigcap \{\mathcal{Y}^j_k\}, \\
            C^j_{\mathcal{G}_{k\ell}} & \coloneqq \begin{cases} \left( \left\{ s_{\ell-1,k} - \sum_{p>\ell} \mathcal{G}^j_{kp} u_p^j \right\} \bigcap  \left\{ \mathcal{G}^j_{k\ell} u_\ell^j \right\} \right) \frac{1}{u_\ell^j},
                                                 u_\ell^j \neq 0 \\
                                                \mathcal{G}^j_{k\ell}, \quad \hspace{35mm}\textup{otherwise},
                                    \end{cases}\\
            s_{\ell,k} &\coloneqq \left\{ s_{\ell-1,k} - C^j_{\mathcal{G}_{k\ell}} u_\ell^j \right\} \bigcap \left\{ \sum_{p > l} \mathcal{G}^j_{kp} u_p^j \right\}, 
        \end{align}
    \end{subequations}
    \normalsize
     where $\mathcal{Y}^j \coloneqq (\mathcal{Y}^j_1,\dots,\mathcal{Y}^j_n) \coloneqq \mathcal{G}^j u^j \in \mathbb{IR}^n$. Then, $C^i_\mathcal{F}$ and $C^i_\mathcal{G}$ 
    are the smallest intervals enclosing $f(x^j)$ and $g(x^j)$, respectively, given only the $(x^j,\dot{x}^j,u^j)$, $\mathcal{F}^j$, and $\mathcal{G}^j$.
\end{lemma}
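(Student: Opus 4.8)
The plan is to treat \eqref{eq:contraction-fG} as a one-pass interval constraint-propagation scheme acting on the single scalar relation obtained by reading off the $k$-th component of the dynamics \eqref{eq:system} at the data point, namely
\[
\dot{x}^j_k = f_k(x^j) + \sum_{\ell=1}^m g_{k\ell}(x^j)\,u^j_\ell, \qquad k\in\{1,\dots,n\}.
\]
The decisive structural fact is that in this relation every unknown quantity---$f_k(x^j)$ and each $g_{k\ell}(x^j)$---occurs exactly once. I would prove the statement in two parts: (I) \emph{soundness}, that the true values never leave the computed intervals, and (II) \emph{minimality}, that the computed intervals are the exact coordinate projections of the feasible set, so that no smaller interval retains all admissible values.

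For soundness I would induct in the order of the computation, using only inclusion isotonicity of interval arithmetic. First, $g(x^j)\in\mathcal{G}^j$ gives $\sum_\ell g_{k\ell}(x^j)u^j_\ell\in\mathcal{Y}^j_k$; rearranging the scalar relation yields $f_k(x^j)\in\{\dot{x}^j_k-\mathcal{Y}^j_k\}$, and intersecting with the prior $\mathcal{F}^j_k$ gives $f_k(x^j)\in C^j_{\mathcal{F}_k}$. I then carry the loop invariant that $s_{\ell,k}$ encloses the tail sum $\sum_{p>\ell} g_{kp}(x^j)u^j_p$, with $s_{0,k}$ enclosing the full sum, established by isolating the appropriate term of the relation and intersecting with the still-uncontracted priors. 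The invariant for $s_{\ell-1,k}$, together with $\sum_{p>\ell} g_{kp}(x^j)u^j_p\in\sum_{p>\ell}\mathcal{G}^j_{kp}u^j_p$, isolates $g_{k\ell}(x^j)u^j_\ell$ inside the numerator of $C^j_{\mathcal{G}_{k\ell}}$; dividing by $u^j_\ell$ when $u^j_\ell\neq0$ yields $g_{k\ell}(x^j)\in C^j_{\mathcal{G}_{k\ell}}$, while if $u^j_\ell=0$ the term drops out of the relation and retaining the prior $\mathcal{G}^j_{k\ell}$ is the only sound choice. Updating $s_{\ell,k}$ closes the induction.

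Minimality is the step I expect to be the main obstacle, since interval subtraction and intersection do not commute in general, so a naive sweep need not be hull-consistent. Here I would exploit the single-occurrence structure by proving the telescoping identity
\[
\bigl[(D\cap(H+S))-S\bigr]\cap H=(D-S)\cap H,
\]
with $D$ the running contracted interval for the current tail, $H\coloneqq\mathcal{G}^j_{k\ell}u^j_\ell$, and $S\coloneqq\sum_{p>\ell}\mathcal{G}^j_{kp}u^j_p$; this shows that intersecting against the \emph{full} tail sum $H+S$ in forming $s_{\ell-1,k}$ and then subtracting the untreated priors $S$ recovers exactly the projection of the feasible box onto the $g_{k\ell}$-axis. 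The inclusion ``$\subseteq$'' is immediate from monotonicity; for ``$\supseteq$'' I would take any $z\in(D-S)\cap H$, write $z=d-s$ with $d\in D$, $s\in S$, and observe $d=z+s\in H+S$, so $d\in D\cap(H+S)$ and hence $z\in\bigl[(D\cap(H+S))-S\bigr]\cap H$.

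Propagating this identity through the induction shows that each $C^j_{\mathcal{G}_{k\ell}}$ and $C^j_{\mathcal{F}_k}$ equals the projection of the feasible set $R\coloneqq\{(\phi,\gamma_1,\dots,\gamma_m):\phi\in\mathcal{F}^j_k,\ \gamma_\ell\in\mathcal{G}^j_{k\ell},\ \phi+\sum_\ell\gamma_\ell u^j_\ell=\dot{x}^j_k\}$ onto the corresponding axis. Since $R$ is a polytope, each such projection is itself an interval, and it is by definition the smallest interval containing every admissible value, which is precisely the asserted minimality. As a cleaner alternative to pushing the algebraic identity through the full induction, I would instead establish minimality directly by witness construction: for each endpoint of a computed interval I would exhibit admissible values of the remaining unknowns, lying in their priors and satisfying the scalar relation, that attain that endpoint, so the interval cannot be shrunk without discarding a feasible point.
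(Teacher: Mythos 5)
Your proposal is sound, but note that the paper does not actually prove this lemma: it is imported verbatim from \cite{franckACC} (Lemma 1 there), so there is no in-paper proof to compare against. Judged on its own merits, your argument is correct and complete in its essentials. The two pillars---(I) soundness by inclusion isotonicity with the loop invariant that $s_{\ell,k}$ encloses the true tail sum $\sum_{p>\ell}g_{kp}(x^j)u^j_p$, and (II) minimality via the single-occurrence structure of the constraint $\dot{x}^j_k = f_k(x^j)+\sum_\ell g_{k\ell}(x^j)u^j_\ell$, which makes each Minkowski sum/difference of the priors an \emph{exact} range rather than an over-approximation---are exactly what makes a one-pass sweep hull-consistent here, and your telescoping identity $\bigl[(D\cap(H+S))-S\bigr]\cap H=(D-S)\cap H$ with the swap-of-witness proof ($z=d-s$, $d=z+s\in H+S$) is the right mechanism. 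One small presentational gap: the induction needs not only this identity (which cleans up the numerator of $C^j_{\mathcal{G}_{k\ell}}$) but also its companion for the tail update, namely that $\bigl(s_{\ell-1,k}-C^j_{\mathcal{G}_{k\ell}}u^j_\ell\bigr)\cap S_\ell$ equals $(D_{\ell-1}-H_\ell)\cap S_\ell$ with $D_{\ell-1}$ expressed in priors only; this follows by the identical swap argument (given $z=d-h\in S_\ell$ one shows $d\in s_{\ell-1,k}$ \emph{and} $h\in C^j_{\mathcal{G}_{k\ell}}u^j_\ell$ simultaneously), and your fallback witness construction also covers it, but it should be stated so the invariant ``$s_{\ell,k}$ is the exact feasible set of the tail'' is actually closed under the recursion. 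You should also record the trivial but necessary remarks that the feasible polytope $R$ is nonempty (the true values $f_k(x^j)$, $g_{k\ell}(x^j)$ are a witness, so projections are nonempty intervals) and that when $u^j_\ell=0$ the constraint carries no information on $g_{k\ell}(x^j)$, so retaining the prior $\mathcal{G}^j_{k\ell}$ is both sound and minimal, as you note.
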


\algdef{SE}[DOWHILE]{Do}{doWhile}{\algorithmicdo}[1]{\algorithmicwhile\ #1}%
\begin{algorithm}[!t]
	\caption{ $\mathsf{Approximate}(t_i,\mathscr{T}_i)$} 
	\label{algo:overapprox-datapoints}
	\begin{algorithmic}[1]    
		\Require{Single trajectory $\mathscr{T}_i$, sufficiently large $M >0$.
		}
		\Ensure{${\mathscr{E}_i = \{ (x^j,C^j_{\mathcal{F}}, C^j_{\mathcal{G}}) |  f(x^j) \in C^j_{\mathcal{F}}, g(x^j) \in C^j_{\mathcal{G}}\}_{j=0}^{i-1}}$}
		\State $\mathcal{A} \gets \bar{\mathcal{C}}$, $\mathcal{R}^{f_\mathcal{A}} \gets [-M,M]^n$, $\mathcal{R}^{G_\mathcal{A}} \gets [-M,M]^{n \times m}$ 
		\State Define $x^0 \in \mathcal{A}$, $C^0_{\mathcal{F}} \gets \mathcal{R}^{f_\mathcal{A}}$, and $C^0_{\mathcal{G}} \gets \mathcal{R}^{G_\mathcal{A}}$
		\For{$\iota \in \{1,\dots,i-1\} \wedge (x^\iota, \dot{x}^\iota, u^\iota) \in \mathscr{T}_{i}$} \label{alg:begin-init-e}
		\State Compute $\mathcal{F}^\iota \coloneqq \boldsymbol{F}(x^\iota), \mathcal{G}^\iota \coloneqq \boldsymbol{G}(x^\iota)$ via~\eqref{eq:overapprox-f-G} using $\mathscr{E}_{\iota-1}$		\label{alg:update-ei} 
		\State Compute $C^\iota_{\mathcal{F}}, C^\iota_{\mathcal{G}}$ via~\eqref{eq:contraction-fG}, $\mathcal{F}^\iota, \mathcal{G}^\iota$, and $(x^\iota,\dot{x}^\iota,u^\iota)$ \label{alg:contraction-f-G}
		\EndFor \label{alg:end-init-e}
		\Do \label{alg:while-begin}
		\State Execute lines~\ref{alg:begin-init-e}--\ref{alg:end-init-e} with $\mathscr{E}_i$ instead of $\mathscr{E}_{\iota-1}$ on line~\ref{alg:update-ei} 
		\doWhile{$\mathscr{E}_i$ is not invariant} \label{alg:while-end}
		\State \Return $\mathscr{E}_i$
	\end{algorithmic}
\end{algorithm}

Using Lemma~\ref{lem:contraction}, Alg.~\ref{algo:overapprox-datapoints} utilizes the dataset $\mathscr{T}_i$ at each time instant $t_i$, $i\in{\mathbb{N}}$, to compute the set $\mathscr{E}_i$ and subsequently, over-approximate $f$ and $g$ using~\eqref{eq:overapprox-f-G}. 
Note that the computational complexity of the algorithm  (in time and memory) is quadratic in the number of elements of $\mathscr{T}_i$ ( Lipschitz bounds estimations) and linear in the system dimension $n$. 
The subsequent theorem characterizes the correctness of the obtained differential inclusions. 

\begin{theorem}[{\cite{franckACC}, Theorem $1$}]\label{thm:diff incl}
Let $i \in \mathbb{N}$ and $\mathsf{F}^i \coloneqq (\mathsf{F}^i_1,\dots,\mathsf{F}^i_n):\mathbb{R}^{2n} \to \mathbb{IR}^n$, $\mathsf{G}^i \coloneqq (\mathsf{G}^i_{k\ell}):\mathbb{R}^{2n} \to \mathbb{IR}^{n\times m}$, with $\mathsf{F}^i(x) \coloneqq \boldsymbol{F}(x)$, $\mathsf{G}^i(x) \coloneqq \boldsymbol{G}(x)$ computed from \eqref{eq:overapprox-f-G} and the output $\mathscr{E}_i$ of Alg. \ref{algo:overapprox-datapoints}, which is executed at $t_i$ using the dataset $\mathscr{T}_i$.  Then it holds that $\dot{x}(t) \in \mathsf{F}^i(x(t)) + \mathsf{G}^i(x(t))u$, for all $t \geq t_i$. 
\end{theorem}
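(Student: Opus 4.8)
The plan is to prove the inclusion by a soundness argument: I would show that the interval iterates produced by Algorithm~\ref{algo:overapprox-datapoints} never drop the true values $f(x^j),g(x^j)$, so that the returned over-approximation encloses the true vector fields everywhere on $\mathcal{A}$, and then push this enclosure through the dynamics by interval arithmetic. The backbone is an induction establishing the containment invariant $f(x^j) \in C^j_{\mathcal{F}}$ and $g(x^j) \in C^j_{\mathcal{G}}$ for every processed data point $j$.

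For the base case, I would use that $f$ and $g$ are continuous on the compact set $\mathcal{A}\subseteq\bar{\mathcal{A}}$ (compactness of $\bar{\mathcal{A}} = \mathcal{C}\times\mathbb{V}$ follows from Assumption~\ref{ass:C in ball} together with compactness of $\mathbb{V}$), hence bounded there; thus the initialization $C^0_{\mathcal{F}} \gets [-M,M]^n$, $C^0_{\mathcal{G}} \gets [-M,M]^{n\times m}$ encloses $f(x^0),g(x^0)$ once $M$ is large enough — this is precisely the role of the ``sufficiently large $M$'' in the \textbf{Input}. For the inductive step at index $\iota$, assume $\mathscr{E}_{\iota-1}$ satisfies the invariant. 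Line~\ref{alg:update-ei} computes $\mathcal{F}^\iota = \boldsymbol{F}(x^\iota)$, $\mathcal{G}^\iota = \boldsymbol{G}(x^\iota)$ via~\eqref{eq:overapprox-f-G}; since $x^\iota\in\mathcal{A}$ and the hypothesis supplies valid enclosures at all earlier points, Lemma~\ref{lem:overapprox-f-G} yields $f(x^\iota)\in\mathcal{F}^\iota$ and $g(x^\iota)\in\mathcal{G}^\iota$. Line~\ref{alg:contraction-f-G} then contracts these pre-estimates using $(x^\iota,\dot{x}^\iota,u^\iota)$; because $\mathcal{F}^\iota,\mathcal{G}^\iota$ already enclose the true values, Lemma~\ref{lem:contraction} certifies that $C^\iota_{\mathcal{F}},C^\iota_{\mathcal{G}}$ are the \emph{smallest} intervals still enclosing $f(x^\iota),g(x^\iota)$, so they continue to contain them. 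This closes the induction, giving the invariant for $\mathscr{E}_\iota$.

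Next I would handle the \texttt{do--while} refinement loop (lines~\ref{alg:while-begin}--\ref{alg:while-end}): each re-execution reruns the same two steps with the full set $\mathscr{E}_i$ in place of $\mathscr{E}_{\iota-1}$, so the identical applications of Lemmas~\ref{lem:overapprox-f-G} and~\ref{lem:contraction} preserve the containment invariant while only shrinking the intervals. Hence the invariant holds for the terminal (fixed-point) set $\mathscr{E}_i$ returned by the algorithm. Applying Lemma~\ref{lem:overapprox-f-G} one final time with this $\mathscr{E}_i$ gives $f(x)\in\mathsf{F}^i(x)$ and $g(x)\in\mathsf{G}^i(x)$ for all $x\in\mathcal{A}$.

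To conclude, for any $t\geq t_i$ with $x(t)\in\mathcal{A}$ I would write $\dot{x}(t) = f(x(t)) + g(x(t))u$ and invoke inclusion isotonicity of interval addition and multiplication: from $f(x(t))\in\mathsf{F}^i(x(t))$ and $g(x(t))\in\mathsf{G}^i(x(t))$ it follows that $f(x(t)) + g(x(t))u \in \mathsf{F}^i(x(t)) + \mathsf{G}^i(x(t))u$, which is the asserted inclusion. I expect the main obstacle to be the bookkeeping of the induction through the two nested loops — in particular, verifying that the pre-estimate step (Lemma~\ref{lem:overapprox-f-G}) and the contraction step (Lemma~\ref{lem:contraction}) compose correctly, so that neither the intersection in~\eqref{eq:overapprox-f-G} becomes empty nor the contraction in~\eqref{eq:contraction-fG} ever excludes the true value; by comparison, the terminal interval arithmetic and the boundedness argument fixing $M$ are routine.
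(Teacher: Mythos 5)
The paper never proves this theorem: it is imported verbatim from \cite{franckACC} (Theorem~1 there), so there is no in-document proof to compare against. That said, your proposal is correct and is precisely the soundness argument the algorithmic structure is built to support: the induction maintaining the invariant $f(x^\iota)\in C^\iota_{\mathcal{F}}$, $g(x^\iota)\in C^\iota_{\mathcal{G}}$ by alternating Lemma~\ref{lem:overapprox-f-G} (line~\ref{alg:update-ei}) and Lemma~\ref{lem:contraction} (line~\ref{alg:contraction-f-G}), its preservation through the do--while refinement, and a final inclusion-isotonicity step. Note that the obstacle you flag at the end is already disposed of by your own invariant: since the true values $f(x^\iota)$, $g(x^\iota)$ lie in every set being intersected in \eqref{eq:overapprox-f-G} and \eqref{eq:contraction-fG}, none of those intersections can be empty or exclude them, so no extra bookkeeping is needed there.

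One caveat worth making explicit: you establish the inclusion only for those $t\geq t_i$ with $x(t)\in\mathcal{A}$, whereas the theorem claims it for all $t\geq t_i$. Your restriction is the honest one, since the Lipschitz bounds of Assumption~\ref{ass:lipshitz-bounds} --- and hence the enclosures of Lemma~\ref{lem:overapprox-f-G} --- are only valid on subsets of $\bar{\mathcal{A}}$; nothing can be asserted once the state leaves that region. The discrepancy sits in the theorem statement rather than in your argument, and it is immaterial for this paper: the control design of Section~\ref{sec:control design} keeps the trajectory in $\textup{Int}(\bar{\mathcal{C}})\cap\textup{Int}(\mathcal{C}_v)$, which by construction ($\mathcal{C}_v\subset\bar{\mathcal{A}}$) is contained in the region where your inclusion holds, and that is the only place the theorem is ever invoked.
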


\begin{remark}
As pointed out in \cite{franckACC}, Alg. \ref{algo:overapprox-datapoints} can be adjusted to employ extra  information on $f$ and $g$, if available, yielding more accurate approximations. In particular, if we are given sets $\mathcal{A} \subseteq \bar{\mathcal{A}}$, $\mathcal{R}^{f_\mathcal{A}}$, $\mathcal{R}^{G_\mathcal{A}}$ such that $\{f(x) \vert x \in \mathcal{A}\} \subseteq \mathcal{R}^{f_\mathcal{A}}$ and $\{g(x) \vert x \in \mathcal{A}\} \subseteq \mathcal{R}^{G_\mathcal{A}}$, these can be used in Alg. \ref{algo:overapprox-datapoints}, replacing the respective ones defined in line 1. We stress, nevertheless, that such sets are not required to be available. 
\end{remark}


Based on Lemmas~\ref{lem:overapprox-f-G} and ~\ref{lem:contraction}, we propose now  a locally Lipschitz function 
$\hat{g}^i : \bar{\mathcal{A}} \to \mathbb{R}^{n \times m}$ that estimates the unknown function 
$g$ at each measurement instant $t_i$.

\begin{lemma}\label{lem:estimate}
    Let $i\in\mathbb{N}$. Given a weight $\theta \in [0,1]$ and a set $\mathcal{A} \subseteq \bar{\mathcal{A}}$, each component of the function 
    $\hat{g}^i \coloneqq [\hat{g}^i_{k\ell}]: \mathcal{A} \to \mathbb{R}^{n \times m}$, for all $x \in \mathcal{A}$, given by 
    \begin{align} \label{eq:estimate}
        \hat{g}^i_{k\ell}(x) &= \theta \underline{\mathsf{G}}^i_{k\ell}(x) + (1-\theta) \bar{\mathsf{G}}^i_{k\ell}(x),
    \end{align} 
    where $\underline{\mathsf{G}}^i_{k\ell}$ and $\bar{\mathsf{G}}^i_{k\ell}$ are the left and right endpoints, respectively, of the interval $\mathsf{G}^i_{k\ell}$, 
    is locally Lipschitz in $\mathcal{A}$, for all $k \in \{1,\hdots,n\}$ and $\ell \in \{1,\hdots,m\}$.
\end{lemma}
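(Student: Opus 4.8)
The plan is to reduce the claim to a standard fact about lattice operations on Lipschitz functions. First I would make the endpoint functions explicit. By Theorem~\ref{thm:diff incl} we have $\mathsf{G}^i_{k\ell}(x) = \boldsymbol{G}_{k\ell}(x)$, and writing $C^j_{\mathcal{G}_{k\ell}} = [\underline{C}^j_{\mathcal{G}_{k\ell}}, \bar{C}^j_{\mathcal{G}_{k\ell}}]$, the $j$-th interval appearing in the intersection in~\eqref{eq:overapprox-f-G} is $[\underline{C}^j_{\mathcal{G}_{k\ell}} - \bar{g}_{k\ell}\|x-x^j\|, \bar{C}^j_{\mathcal{G}_{k\ell}} + \bar{g}_{k\ell}\|x-x^j\|]$. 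Since the intersection of finitely many intervals has left endpoint equal to the maximum of the left endpoints and right endpoint equal to the minimum of the right endpoints, this yields the closed forms
\begin{align*}
\underline{\mathsf{G}}^i_{k\ell}(x) &= \max_{0 \leq j \leq i-1}\big(\underline{C}^j_{\mathcal{G}_{k\ell}} - \bar{g}_{k\ell}\|x-x^j\|\big), \\
\bar{\mathsf{G}}^i_{k\ell}(x) &= \min_{0 \leq j \leq i-1}\big(\bar{C}^j_{\mathcal{G}_{k\ell}} + \bar{g}_{k\ell}\|x-x^j\|\big),
\end{align*}
over the finite index set induced by $\mathscr{E}_i$.

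Next I would establish Lipschitzness term by term. The map $x \mapsto \|x - x^j\|$ is globally $1$-Lipschitz by the reverse triangle inequality; multiplying by the nonnegative constant $\bar{g}_{k\ell}$ and adding the constant $\underline{C}^j_{\mathcal{G}_{k\ell}}$ (resp.\ $\bar{C}^j_{\mathcal{G}_{k\ell}}$) preserves this, so each of the $i$ functions inside the $\max$ (resp.\ $\min$) above is $\bar{g}_{k\ell}$-Lipschitz on $\mathcal{A}$. I would then invoke the elementary fact that the pointwise maximum or minimum of finitely many $L$-Lipschitz functions is again $L$-Lipschitz; since $\mathscr{E}_i$ is finite, both $\underline{\mathsf{G}}^i_{k\ell}$ and $\bar{\mathsf{G}}^i_{k\ell}$ are $\bar{g}_{k\ell}$-Lipschitz on $\mathcal{A}$.

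Finally, $\hat{g}^i_{k\ell} = \theta \underline{\mathsf{G}}^i_{k\ell} + (1-\theta)\bar{\mathsf{G}}^i_{k\ell}$ with $\theta \in [0,1]$ is a convex combination of two $\bar{g}_{k\ell}$-Lipschitz functions, hence itself $\bar{g}_{k\ell}$-Lipschitz by the triangle inequality applied to the Lipschitz bounds. As this holds for every $k$ and $\ell$, the matrix-valued $\hat{g}^i$ is (globally, hence in particular locally) Lipschitz on $\mathcal{A}$.

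I do not expect a serious obstacle here: the argument is a composition-and-lattice computation once the endpoint formulas are in hand. The only points requiring care are (i) correctly identifying the endpoints of a finite intersection of intervals as the $\max$/$\min$ of the component endpoints, and (ii) recalling the min/max-of-Lipschitz-functions lemma, whose applicability hinges on the finiteness of $\mathscr{E}_i$ (guaranteed by the finite-horizon dataset). One might also remark that nonemptiness of the intersection, i.e.\ $\underline{\mathsf{G}}^i_{k\ell} \leq \bar{\mathsf{G}}^i_{k\ell}$, is ensured by the validity of the over-approximation $g(x) \in \boldsymbol{G}(x)$ for $x \in \mathcal{A}$, although the Lipschitz conclusion for the endpoint functions holds regardless.
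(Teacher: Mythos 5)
Your proof is correct and takes essentially the same approach as the paper: express each endpoint of $\mathsf{G}^i_{k\ell}(x)$ as a finite max/min of the $\bar{g}_{k\ell}$-Lipschitz functions $\mathsf{x} \mapsto c_j \pm \bar{g}_{k\ell}\|\mathsf{x} - x^j\|$, use that finite maxima/minima of $L$-Lipschitz functions are $L$-Lipschitz (the paper proves this via $\max\{l,h\}=\tfrac{1}{2}(l+h+|l-h|)$ and induction, which you cite as a standard fact), and conclude for the convex combination. Incidentally, your endpoint identifications are the right ones — the paper's displayed formulas for $\underline{\mathsf{G}}^i_{k\ell}$ and $\bar{\mathsf{G}}^i_{k\ell}$ inadvertently swap the $\inf$/$\sup$ roles (the left endpoint of an intersection is the \emph{maximum} of the left endpoints), a typo that does not affect the Lipschitz argument.
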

\begin{proof}
    By the definition of $\mathsf{G}^i_{kl}$, given by the output $\mathscr{E}_i$ of Alg.~\ref{algo:overapprox-datapoints} and Lemma~\ref{lem:overapprox-f-G}, we have for any $x \in \mathcal{A}$ that 
    \begin{align*}
        &\begin{aligned}
        \underline{\mathsf{G}}^i_{k\ell}(x) =&\inf \bigcap_{(x^j, \cdot, C^j_{\mathcal{G}_{k\ell}}) \in \mathscr{E}_i} \left\{ C^j_{\mathcal{G}_{k\ell}} + \bar{g}_{k\ell} ||x - x^j|| [-1,1] \right\},
        \end{aligned} \\
        &\begin{aligned}
            \bar{\mathsf{G}}^i_{k\ell}(x) = &\sup  \bigcap_{(x^j, \cdot, C^j_{\mathcal{G}_{k\ell}}) \in \mathscr{E}_i} \left\{ C^j_{\mathcal{G}_{k\ell}} + \bar{g}_{k\ell} ||x - x^j|| [-1,1], \right\}
        \end{aligned}   
    \end{align*}
    for all $k \in \{1,\hdots,n\}$ and $\ell \in \{1,\hdots,m\}$. Thus, one can  observe through interval arithmetic that $\underline{\mathcal{G}}^i_{k\ell}(x)$ and $\bar{\mathcal{G}}^i_{k\ell}(x)$ can be also written as
    \begin{align*}
         \underline{\mathsf{G}}^i_{k\ell}(x) &= \inf_{(x^j, \cdot, C^j_{\mathcal{G}_{k\ell}}) \in \mathscr{E}_i} \left\{m_\mathrm{u}(x^j) + \overline{g}_{k\ell} ||x - x^j||\right\}, \\
        \bar{\mathsf{G}}^i_{k\ell}(x) &= \sup_{(x^j, \cdot, C^j_{\mathcal{G}_{k\ell}}) \in \mathscr{E}_i} \left\{m_\mathrm{l}(x^j) - \bar{g}_{k\ell} ||x - x^j|| \right\},
    \end{align*}
    where $m_\mathrm{l}(x^j) \coloneqq \inf C^j_{\mathcal{G}_{k\ell}}$ and $m_\mathrm{u}(x^j) \coloneqq \sup C^j_{\mathcal{G}_{k\ell}}$. Using the inequality $|||\mathsf{x}|| - ||\mathsf{y}||| \leq ||\mathsf{x} - \mathsf{y}||$ for any two vectors $\mathsf{x},\mathsf{y}\in\mathbb{R}^n$,  we conclude that the functions $h^j : \mathsf{x} \mapsto m_\mathrm{u}(x^j) + \overline{g}_{k\ell} ||\mathsf{x} - x^j|| $ and $l^j : \mathsf{x} \mapsto m_\mathrm{l}(x^j) - \bar{g}_{k\ell} || \mathsf{x} - x^j||$ are Lipschitz continuous in $\mathcal{A}$ with $\bar{g}_{k\ell}$ as the Lipschitz constant in $\mathcal{A}$. Similarly, and by using $\max \{l,h\} = 0.5 (l+ h + |l-h|)$, and $\min \{l,h\} = 0.5 (l+h -|l-h|)$, we can deduce by induction that $ \mathsf{x} \mapsto \sup_{j \in \{1,\hdots,i\}} l_j(\mathsf{x})$ and $ \mathsf{x} \to \inf_{j \in \{1,\hdots,i\}} h_j(\mathsf{x})$ are also Lipschitz continuous owing to the Lipschitz continuity of $l^j$ and $h^j$ for all $j\in \{1,\hdots,n\}$. Thus, one obtains that $ x \to \bar{\mathsf{G}}^i_{k\ell}(x)$ and $x \mapsto \underline{\mathsf{G}}^i_{k\ell}(x)$ are also Lipschitz continuous, from which we conclude the Lipschitz continuity of $\hat{g}_{k\ell}$, being a linear combination of the latter. 
\end{proof}

\section{ Control design and Safety Guarantees} \label{sec:control design}

This section presents the main results of the paper. We first propose a learning-based control algorithm that relies on the approximation of the dynamics of Section \ref{sec:dynamics approx} and the concept of reciprocal barriers (section \ref{subsec:learning control}). Next, we provide in Section \ref{subsec: bounds} bounds on the approximation errors $\hat{g}_{kl}(x) - g_{kl}(x)$, $k\in\{1,\dots,n\}$, $\ell \in \{1,\dots,m\}$ based on the frequency of the update time instants $t_i$, $i\in \mathbb{N}$. Finally, Section \ref{subsec:square g} presents a simplified version of the algorithm for the special case where $g(x)$ is square and $g(x) + g(x)^\top$ is positive definite. 

\subsection{Learning-based Control Design} \label{subsec:learning control}

Given the set $\mathcal{C} = \{ x\in\mathbb{R}^n : h(x) \geq 0 \}$ and 
following \cite{ames2016control}, we define a continuously differentiable \textit{reciprocal} barrier function $\beta:(0,\infty) \to \mathbb{R}$ that satisfies \begin{align} \label{eq:beta cond 1}
	\frac{1}{\alpha_1(h)} \leq \beta(h) \leq \frac{1}{\alpha_2(h)} 
\end{align}
for class $\mathcal{K}$ functions $\alpha_1$, $\alpha_2$. Note that \eqref{eq:beta cond 1} implies $\inf_{x\in\textup{Int}(\mathcal{C})}\beta(h(x)) > 0$ and $\lim_{x \to \partial \mathcal{C}} \beta(h(x)) = \infty$. In order to render $\mathcal{C}$ forward-invariant, we aim to design a control algorithm that guarantees the boundedness of $\beta$ in a compact set. 
To this end, we enforce the extra condition on $\beta$: 
\begin{align} \label{eq:beta cond 3}
\max\left\{ \left\| \frac{\textup{d} \beta(h)}{\textup{d} h}\right\|,  \left\| \frac{\textup{d}^2 \beta(h)}{\textup{d} h^2}\right\| \right\} \leq \frac{1}{\alpha_4(h)},
\end{align} 
for a class $\mathcal{K}$ function $\alpha_4$, which essentially implies that the derivatives of  $\beta$ are bounded when $x_1$ lies in compact subsets of $\textup{Int}(\mathcal{C})$. 
Examples of $\beta$ include $\beta({h}) = \frac{1}{{h}}$, $\beta({h}) = - \ln\left( \frac{{h}}{1+{h}} \right)$. 

Ideally,  we would like the system to deviate minimally from a potential nominal task assigned to it, dictated by a nominal continuous control law $u_{\textup{n}}(x)$. Therefore, as in \cite{ames2016control}, we would like to allow $\beta$ to grow when $x_1$ is not close to the boundary of $\mathcal{C}$.  To this end, we use a switching signal $\sigma_\mu:\mathbb{R}\to\mathbb{R}_{\geq 0}$, defined, for a positive $\mu>0$,  as
\begin{align} \label{eq:sigma mu}
	\sigma_\mu(\mathsf{x}) \coloneqq \begin{cases}
	0, & \textup{if } \mathsf{x} \geq \mu \\
	\phi_\mu(\mathsf{x}), & \textup{if } \mathsf{x} \in [0,\mu) \\
	1, & \textup{if } \mathsf{x} \leq 0
	\end{cases}
\end{align}
where $\phi_\mu:[0,\infty) \to [0,1]$ is any \textit{decreasing} continuous function satisfying $\phi_\mu(0) = 1$ and $\phi_\mu(\mu) = 0$. For a given constant $\mu_x>0$, we want the control law to act only when $0 < h(x_1) \leq \mu_x$, which defines the set $\mathcal{C}_{\mu_x} \coloneqq \{ x_1 \in \mathbb{R}^{n}: h(x_1) \in (0,\mu_x]\}$. 
Therefore, following a backstepping-like scheme, we design first a reference signal for $x_2$ as 
\begin{equation} \label{eq:x_2des}
    x_{2,\textup{r}} \coloneqq x_{2,\textup{r}}(x_1) \coloneqq -  \kappa_x \sigma_{\mu_x}(h(x_1)) \beta_\textup{d}(x_1) \nabla h(x_1),
\end{equation}
where $\beta_\textup{d} \coloneqq \beta_\textup{d}(x_1)\coloneqq \frac{\textup{d} \beta(h(x_1))}{\textup{d} h(x_1)}$ and $\kappa_x$ is a constant control gain, and define the respective error 
\begin{equation} \label{eq:e_2}
    e_2 \coloneqq e_2(x) \coloneqq x_2 - x_{2,\textup{r}}
\end{equation}
We choose the constant $\mu_x$ chosen small enough such that $\mu_x < \nu_h$ implying $\|\nabla h(x_1)\| \geq \varepsilon_h$ for all $x_1 \in\mathcal{C}_{\mu_x}$ according to Assumption \ref{ass:grad h}.

We next design the control law such that $\beta$ and $e_2$ remain uniformly bounded.
To that end, we define a  
continuously differentiable function $h_v:\mathbb{R}^{2n} \to \mathbb{R}$ satisfying $h_v(x) > 0$ if and only if $\|e_2\| < \bar{B}_2$, for a positive $\bar{B}_2$, and define $\mathcal{C}_v \coloneqq \{ x \in \mathbb{R}^{2n}: h_v(x) \geq 0  \} \subset \mathbb{R}^{2n}$. An example is ellipsoidal functions of the form $h_v(x) \coloneqq 1 - e_2(x)^\top A_v e_2(x)$ for an appropriate  $A_v\in\mathbb{R}^{n\times n}$. The function $h_v$ depends explicitly on $e_2$, i.e., the difference $x_2 - x_{2,\textup{r}}(x_1)$, but we define it as a function of $x$ to ease the subsequent analysis.
We choose the constant $\kappa_x$ from \eqref{eq:x_2des} and $h_v$ such that $\mathcal{C}_v \subset \bar{\mathcal{A}}$ (see Assumption \ref{ass:lipshitz-bounds}), so that Algorithm \ref{algo:overapprox-datapoints} produces valid dynamic approximations in $\mathcal{A} \subset \bar{\mathcal{A}}$.
We also choose the constant $\bar{B}_2$ such that $h_v(x(t_0)) > 0$.
We next design the control law to guarantee  $h_v(x(t)) > 0$, for all $t \geq t_0$. Following similar steps as with $h(x_1)$, we define a continuously differentiable reciprocal function $\beta_v:(0,\infty)\to \mathbb{R}$ that satisfies 
\begin{align} \label{eq:beta cond 2}
	\frac{1}{\alpha_{v_1}(h_v)} \leq \beta_v(h_v) \leq \frac{1}{\alpha_{v_2}(h_v)} 
\end{align}
for class $\mathcal{K}$ functions $\alpha_{v_1}$, $\alpha_{v_2}$, and which we aim to maintain bounded. 
Similar to the definition of $\mathcal{C}_{\mu_x}$, we choose a constant $\mu_v > 0$, defining the set $\mathcal{C}_{v,\mu_v} \coloneqq \{ x \in \mathbb{R}^{2n}: h_v(x) \in (0,\mu_v] \}$  and aiming to enable the proposed control law only when $0 < h_v(x) \leq \mu_v$. 

Let now the estimate $\hat{g}^i(x)$ of $g(x)$, as computed by Lemma \ref{lem:estimate}, for $t\in [t_i,t_{i+1})$, $i\in\bar{\mathbb{N}}$\footnote{Since $\mathscr{T}_0$ is empty, $\hat{g}^0(x)$ is set randomly.}. 
Let also the respective error $\widetilde{g}^i(x) \coloneqq \hat{g}^i(x) - g(x)$ for $i\in \bar{\mathbb{N}}$. 
The term $\hat{g}^i(x)$ will be used in the control design to cancel the effect of $g(x)$, inducing thus a time-dependent switching. More specifically, 
the control law is designed as 
\begin{align} \label{eq:control law local}
u &\coloneqq u(x,t) \notag \\ 
 &\coloneqq u_\textup{n}(x) - \kappa_v \sigma_{\mu_v}(h_v(x)) \beta_{v,\textup{d}}(x)   \frac{\hat{g}^i(x)^\top \nabla h_v(x)}{ \|\hat{g}^i(x)^\top \nabla h_v(x)\|^2}, 
\end{align}
for $t\in[t_i,t_{i+1})$, $i\in\bar{\mathbb{N}}$, where $u_\textup{n}(x)$ is a nominal continuous controller, $\kappa_v$ is a positive constant control gain, and $\beta_{v,\textup{d}} \coloneqq \beta_{v,\textup{d}}(x)\coloneqq \frac{\textup{d} \beta_v(h_v(x))}{\textup{d} h_v(x)}$. 

The proof of correctness of \eqref{eq:control law local} is provided by Theorem \ref{th:local}, for which we need the next lemma, where we use $\bar{\mathcal{C}} \coloneqq \{ x\in\mathbb{R}^{2n}: x_1 \in \mathcal{C} \}$:
\begin{lemma} \label{lem:local solution}
Let a system evolve according to \eqref{eq:system}, \eqref{eq:x_2des}-\eqref{eq:control law local}, and a set $\mathcal{C}$ satisfying $x_1(t_0)\in \textup{Int}(\mathcal{C})$ for some $t_0 \geq 0$. 
Then there exists a unique, maximal solution $x:[t_0,t_{\max})\to \textup{Int}(\bar{\mathcal{C}})\cap \textup{Int}(\mathcal{C}_v)$ 
for some $t_{\max} > t_0$.
\end{lemma}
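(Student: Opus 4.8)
The plan is to read the closed loop \eqref{eq:system}, \eqref{eq:x_2des}--\eqref{eq:control law local} as a time-switched ordinary differential equation $\dot{x} = F_i(x)$ that is autonomous on each interval $[t_i,t_{i+1})$, since on such an interval the estimate $\hat{g}^i$ is held fixed. I would then reduce the lemma to a single claim: on the open set $\mathcal{D} \coloneqq \textup{Int}(\bar{\mathcal{C}}) \cap \textup{Int}(\mathcal{C}_v)$, restricted to where the feedback is well defined, each field $F_i$ is locally Lipschitz in $x$. Granting this, the standard existence-and-uniqueness theory for ODEs with a locally Lipschitz right-hand side on an open set yields, on each interval, a unique maximal solution that remains in $\mathcal{D}$ up to the time it either reaches $t_{i+1}$ or leaves every compact subset of $\mathcal{D}$; concatenating these pieces at the switching instants $t_i$ using continuity of $x$ produces a unique maximal solution $x:[t_0,t_{\max})\to\mathcal{D}$.

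The core of the argument is the local Lipschitz verification, carried out term by term. The drift and input maps $f$ and $g$ are continuously differentiable, hence locally Lipschitz (Assumption \ref{ass:lipshitz-bounds}), and $\hat{g}^i$ is locally Lipschitz by Lemma \ref{lem:estimate}. On $\mathcal{D}$ one has $h(x_1) > 0$ and $h_v(x) > 0$, so the reciprocal barriers $\beta$, $\beta_v$, together with the derivatives $\beta_\textup{d}$, $\beta_{v,\textup{d}}$ appearing in \eqref{eq:x_2des} and \eqref{eq:control law local}, are compositions of functions that are $C^1$ on $(0,\infty)$ with the $C^1$ maps $h$, $h_v$, and are therefore locally Lipschitz on $\mathcal{D}$; the same holds for $\nabla h$ and $\nabla h_v$. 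Provided $\phi_\mu$ is chosen locally Lipschitz, the switches $\sigma_{\mu_x}$, $\sigma_{\mu_v}$ are locally Lipschitz as well. Since finite sums, products, and compositions of locally Lipschitz functions are locally Lipschitz, it remains only to treat the normalized feedback term $\hat{g}^i(x)^\top \nabla h_v(x) / \|\hat{g}^i(x)^\top \nabla h_v(x)\|^2$, which is a quotient of locally Lipschitz maps and is itself locally Lipschitz on the open subset of $\mathcal{D}$ where its denominator does not vanish. Assembling these facts shows that $F_i$ is locally Lipschitz there.

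To close the argument I would check that the initial condition lies in the admissible open set: $x_1(t_0) \in \textup{Int}(\mathcal{C})$ is given, and $\bar{B}_2$ was chosen so that $h_v(x(t_0)) > 0$, hence $x(t_0) \in \mathcal{D}$. Continuity of the flow then gives a nonempty interval of existence, so $t_{\max} > t_0$, and the maximal-solution property guarantees that $x(t)$ stays in $\mathcal{D}$ for all $t \in [t_0,t_{\max})$, which is precisely the claimed codomain.

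I expect the only genuine obstacle to be the normalized feedback term: its denominator $\|\hat{g}^i(x)^\top \nabla h_v(x)\|$ must be bounded away from zero precisely on the active region $\{ \sigma_{\mu_v}(h_v(x)) > 0 \}$, for otherwise the field is neither defined nor Lipschitz there. This is exactly the controllability condition on the estimated control direction, and it is where the nullspace issue analyzed in Section \ref{sec:contr loss} enters; on the open set where it holds the quotient is well behaved and the reduction above goes through. A secondary, more routine point is that mere continuity of $\phi_\mu$ would only deliver Peano existence, so local Lipschitzness of $\phi_\mu$ (and thus of $\sigma_\mu$) should be assumed to obtain the claimed uniqueness.
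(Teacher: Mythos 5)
Your proposal takes essentially the same route as the paper: the paper's proof likewise reduces the claim to the closed loop being piecewise continuous in $t$ and, via Lemma \ref{lem:estimate}, locally Lipschitz in $x$ on $\textup{Int}(\bar{\mathcal{C}})\cap\textup{Int}(\mathcal{C}_v)$, and then invokes a Carath\'eodory-type existence--uniqueness theorem \cite[Theorem 2.1.3]{bressan2007introduction}, which handles the time dependence in one shot rather than by your interval-by-interval concatenation. The two caveats you flag (nonvanishing of $\|\hat{g}^i(x)^\top\nabla h_v(x)\|$ where $\sigma_{\mu_v}>0$, and local Lipschitzness of $\phi_\mu$ rather than mere continuity) are genuine regularity requirements that the paper's one-line proof leaves implicit, so your treatment is, if anything, more careful.
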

\begin{proof}
The closed-loop system $\dot{x} = f(x) + g(x)u(x,t)$ is piecewise continuous in $t \geq t_0$, for each fixed $x \in \textup{Int}(\bar{\mathcal{C}})\cap \textup{Int}(\mathcal{C}_v)$, and, in view of Lemma \ref{lem:estimate}, locally Lipschitz in $x \in \textup{Int}(\bar{\mathcal{C}})\cap \textup{Int}(\mathcal{C}_v)$ for each fixed $t \geq t_0$. Hence, since $\mathcal{C}_v$ is designed such that $x(t_0) \in \textup{Int}(\mathcal{C}_v)$, we conclude from \cite[Theorem 2.1.3]{bressan2007introduction} 
the existence of a unique, maximal, and absolutely continuous solution $x(t)$, satisfying $x(t) \in \textup{Int}(\bar{\mathcal{C}})\cap \textup{Int}(\mathcal{C}_v)$, for all $t\in[t_0,t_{\max})$, for a positive constant $t_{\max} > t_0$.
\end{proof}


Given the constant $\mu_v > 0$, we define now the set 
\begin{align*}
K_{\mu_v} \coloneqq \{& i \in \bar{\mathbb{N}} : \exists [\tau_1,\tau_2) \subseteq [t_0,t_{\max}), \textup{ with } \tau_1 \in [t_i, t_{i+1}) \\ 
&\textup{ s.t. } x(t) \in \textup{Int}(\bar{\mathcal{C}}) \cap \mathcal{C}_{v,\mu_v},\textup{ for all  } t\in[\tau_1,\tau_2) \}
\end{align*} 
where $t_i$ are the update instants defined in Section \ref{sec:dynamics approx}. The set $K_{\mu_v}$ contains the time index of the last update before entering the set $\mathcal{C}_{v,\mu_v}$ as well as the time indices of the updates while in the set $\mathcal{C}_{v,\mu_v}$ (like the purple points of the $x_1$-trajectory in Fig.~\ref{fig:sets_1}). Note that $K_{\mu_v}$ is not empty, {unless $x(t) \in \textup{Int}(\bar{\mathcal{C}})\cap \textup{Int}(\mathcal{C}_v) \backslash \mathcal{C}_{v,\mu_v}$ (i.e., $h(x_1(t)) > 0, h_v(x(t)) \geq \mu_v$ for all $t \geq t_0$).  }
The next theorem is the main result of this paper, stating that, if the estimated system defined by $\hat{g}^i(x)$ is controllable with respect to $h_v(x)$, and if $\hat{g}^i(x)$ is sufficiently close to $g(x)$, we achieve boundedness of $x_1(t)$ in $\textup{Int}(\mathcal{C})$ and, consequently, provide a solution to Problem \ref{problem:1}.

\begin{theorem} \label{th:local}
	Let a system evolve according to \eqref{eq:system}, \eqref{eq:x_2des}-\eqref{eq:control law local}, and a set $\mathcal{C}$ satisfying $x_1(t_0)\in \textup{Int}(\mathcal{C})$ for some $t_0 \geq 0$. 
	Let a constant $\mu_v' \in (0,\mu_v)$ and assume that there exists a positive constant $\varepsilon$ such
	that
	\begin{subequations}
	\begin{align}
	    & \|\hat{g}^i(x)^\top \nabla_{x_2} h_v(x)\| \geq \varepsilon, \label{eq:local cond1} \\    
	    & {\bar{g}_{\mu'_v}\bar{h}_{v,\mu_v'}}{ } <  \varepsilon \sigma_{\mu_v}(\mu_v'),
	    \label{eq:local cond2}
	\end{align}
	\end{subequations}
	for all $i \in K_{\mu_v'}$ and $x\in \mathcal{C}_{v,\mu_v'}$, where $\bar{h}_{v,\mu_v'} \coloneqq \sup_{x\in \mathcal{C}_{v,\mu_v'}}\|\nabla_{x_2} h_v(x)\|$ and $\bar{g}_{\mu_v'} \coloneqq \sup_{\substack{x\in \mathcal{C}_{v,\mu_v'}\\i\in K_{\mu_v'} }} \|\widetilde{g}^i(x)\|$.
	Then, under Assumptions \ref{ass:C in ball}, \ref{ass:grad h},   
	it holds that $x_1(t) \in \textup{Int}(\mathcal{C})$, and all closed loop signals are bounded, for all $ t \geq t_0$.
\end{theorem}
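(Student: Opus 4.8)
The plan is to follow the backstepping structure suggested by \eqref{eq:x_2des}--\eqref{eq:control law local}: first establish forward invariance of $\mathcal{C}_v$ (equivalently, boundedness of $e_2$) using the reciprocal barrier $\beta_v$, and then, with $e_2$ bounded, establish forward invariance of $\mathcal{C}$ using the reciprocal barrier $\beta$ with $x_{2,\textup{r}}$ acting as a virtual control. Lemma \ref{lem:local solution} already furnishes a unique maximal solution $x:[t_0,t_{\max})\to\textup{Int}(\bar{\mathcal{C}})\cap\textup{Int}(\mathcal{C}_v)$, so the argument reduces to showing that $h(x_1(t))$ and $h_v(x(t))$ stay bounded away from $0$ on $[t_0,t_{\max})$, after which a standard no-escape argument yields $t_{\max}=\infty$.

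\textbf{Inner loop ($h_v$).} I would differentiate $\beta_v(h_v(x))$ along the closed loop, writing $\dot\beta_v=\beta_{v,\textup{d}}\big(\nabla_{x_1}h_v^\top x_2+\nabla_{x_2}h_v^\top(f+gu)\big)$, and substitute the control \eqref{eq:control law local} using the decomposition $g=\hat g^i-\widetilde g^i$. The cancellation term then produces
\[
-\kappa_v\sigma_{\mu_v}(h_v)\beta_{v,\textup{d}}^2\Big(1-\frac{\nabla_{x_2}h_v^\top\widetilde g^i\hat g^{i\top}\nabla_{x_2}h_v}{\|\hat g^{i\top}\nabla_{x_2}h_v\|^2}\Big),
\]
which is strictly dissipative and quadratic in $\beta_{v,\textup{d}}$ provided the parenthesized factor is positive. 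Using Cauchy--Schwarz together with \eqref{eq:local cond1} I would bound the error ratio by $\bar g_{\mu_v'}\bar h_{v,\mu_v'}/\varepsilon$, which by \eqref{eq:local cond2} is strictly below $\sigma_{\mu_v}(\mu_v')\le 1$ whenever $x\in\mathcal{C}_{v,\mu_v'}$ and $i\in K_{\mu_v'}$; hence the factor lies in $(0,1]$. All remaining terms in $\dot\beta_v$ are linear in $\beta_{v,\textup{d}}$ and bounded on the compact set $\bar{\mathcal{A}}$ (continuity of $f,g,u_{\textup{n}}$ and Assumption \ref{ass:C in ball}). Since near $\partial\mathcal{C}_v$ the reciprocal barrier forces $|\beta_{v,\textup{d}}|\to\infty$ while $\sigma_{\mu_v}(h_v)\to1$, the quadratic dissipative term dominates the linear disturbance, so $\dot\beta_v<0$ once $\beta_v$ is large enough. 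This bounds $\beta_v$ on $[t_0,t_{\max})$ and yields $h_v(x(t))>0$, i.e. $\|e_2(t)\|<\bar B_2$.

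\textbf{Outer loop and completion.} With $e_2$ bounded, I would write $x_2=e_2+x_{2,\textup{r}}$ and differentiate $\beta(h(x_1))$, obtaining $\dot\beta=\beta_\textup{d}\nabla h^\top x_2=\beta_\textup{d}\nabla h^\top e_2-\kappa_x\sigma_{\mu_x}(h)\beta_\textup{d}^2\|\nabla h\|^2$. The design \eqref{eq:x_2des} makes the second term strictly dissipative and quadratic in $\beta_\textup{d}$, while the first is linear in $\beta_\textup{d}$ with $\|e_2\|<\bar B_2$ bounded. Because $\mu_x<\nu_h$, Assumption \ref{ass:grad h} gives $\|\nabla h\|\ge\varepsilon_h$ on $\mathcal{C}_{\mu_x}$, so the dissipative term does not degenerate near $\partial\mathcal{C}$; the same domination argument as in the inner loop yields $\dot\beta<0$ for $\beta$ large, hence $\beta$ is bounded and $h(x_1(t))>0$, i.e. $x_1(t)\in\textup{Int}(\mathcal{C})$. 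Boundedness of $\beta$ and $\beta_v$ confines $x_1$ to a compact subset of $\textup{Int}(\mathcal{C})$ (Assumption \ref{ass:C in ball}) and keeps $\beta_\textup{d}$, $x_{2,\textup{r}}$, $e_2$, and hence $x_2$ and $u$ bounded, so the trajectory remains in a fixed compact set; the standard maximal-solution argument then excludes finite escape, giving $t_{\max}=\infty$ and the claim for all $t\ge t_0$.

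The main obstacle I anticipate is the inner loop: controlling the unknown $g$ through the data-driven estimate $\hat g^i$ so that the sign of the barrier's dissipative term is preserved. This is exactly where \eqref{eq:local cond1}--\eqref{eq:local cond2} enter, and it requires care because the index $i$ is piecewise constant in time, so $\hat g^i$, and hence $\widetilde g^i$, switches at the update instants $t_i$; the conditions must therefore be enforced \emph{uniformly} over $i\in K_{\mu_v'}$, which is precisely why $K_{\mu_v'}$ was defined to collect the update indices active while the trajectory lies within $\mathcal{C}_{v,\mu_v'}$.
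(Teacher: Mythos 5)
Your overall strategy (contradiction/domination with the reciprocal barriers, the decomposition $g = \hat g^i - \widetilde g^i$, the Cauchy--Schwarz bound of the error ratio by $\bar g_{\mu_v'}\bar h_{v,\mu_v'}/\varepsilon$, and the use of \eqref{eq:local cond1}--\eqref{eq:local cond2}) matches the paper's, but you run the two loops in the \emph{opposite} order, and this reversal creates a genuine gap in your inner-loop step. You claim that all remaining terms in $\dot\beta_v$ are ``linear in $\beta_{v,\textup{d}}$ and bounded on the compact set $\bar{\mathcal{A}}$'' by continuity. This is not justified: the drift is $f_\textup{n}(x) = \nabla_{x_1}h_v(x)^\top x_2 + \nabla_{x_2}h_v(x)^\top\bigl(f(x)+g(x)u_\textup{n}(x)\bigr)$, and since $h_v$ depends on $x_1$ through $e_2 = x_2 - x_{2,\textup{r}}(x_1)$, the term $\nabla_{x_1}h_v$ contains $\partial x_{2,\textup{r}}/\partial x_1$, which by \eqref{eq:x_2des} involves $\beta_\textup{d}(h(x_1))$ and $\tfrac{\textup{d}^2\beta}{\textup{d}h^2}(h(x_1))$. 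For a reciprocal barrier satisfying \eqref{eq:beta cond 1}, these quantities are unbounded as $h(x_1)\to 0^+$; condition \eqref{eq:beta cond 3} only bounds them on compact subsets of $\textup{Int}(\mathcal{C})$. Hence $f_\textup{n}$ is \emph{not} uniformly bounded on the solution's domain $\textup{Int}(\bar{\mathcal{C}})\cap\textup{Int}(\mathcal{C}_v)$, and at this stage of your argument nothing prevents $x_1(t)$ from approaching $\partial\mathcal{C}$ as $t\to t_{\max}$. Consequently the ``linear'' term in your domination argument has no uniform bound, and $\dot\beta_v<0$ for large $|\beta_{v,\textup{d}}|$ does not follow.

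The fix is exactly the paper's ordering, and it costs nothing: the input your outer loop needs, $\|e_2(x(t))\| < \bar B_2$ on $[t_0,t_{\max})$, does not come from the $\beta_v$ analysis at all --- it is free from Lemma \ref{lem:local solution}, because the maximal solution lives in $\textup{Int}(\mathcal{C}_v)$ and $h_v>0$ is equivalent to $\|e_2\|<\bar B_2$ by construction of $h_v$. So run the outer ($\beta$, $h$) argument first; it yields $h(x_1(t)) \geq \underline h > 0$, hence $x_1(t)$ confined to a compact subset of $\textup{Int}(\mathcal{C})$, and with \eqref{eq:beta cond 3} the boundedness of $x_{2,\textup{r}}$, $x_2$, and $\dot x_{2,\textup{r}}$ along the trajectory, with bounds independent of $t_{\max}$. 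Only then is $|f_\textup{n}(x(t))|\leq \bar f_\textup{n}$ available, after which your inner-loop computation (which is otherwise correct, including the uniform treatment of the piecewise-constant index $i\in K_{\mu_v'}$) goes through, followed by the no-escape argument giving $t_{\max}=\infty$.
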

\begin{proof}	
    According to Lemma \ref{lem:local solution}, it holds that $x(t) \in \textup{Int}(\bar{\mathcal{C}}) \cap \textup{Int}(\mathcal{C}_v)$, for all $t\in[t_0,t_{\max})$ for a $t_{\max} > t_0$. 
    Assume now that $\lim_{t\to t_{\max}} h(x_1(t)) = 0$, i.e., the system converges to the boundary of $\mathcal{C}$ as $t\to t_{\max}$, implying $\lim_{t\to t_{\max}} \beta(h(x_1(t))) = \infty$. Let  $t_x' \in [t_0,t_{\max})$ and $\mu'_x\in(0,\mu_x)$ such that $x_1(t) \in \mathcal{C}_{\mu'_x} \subset \mathcal{C}_{\mu_x}$ for all $t\in[t_x',t_{\max})$, and 
	$x_1(t) \in \widetilde{\mathcal{C}}_x \coloneqq \{ x_1\in\mathbb{R}^n: h(x_1) \geq \min_{t\in[t_0,t']}\{h(x_1(t))\} > 0 \}$, for all $t\in[t_0,t_x']$\footnote{Note that such $t_x'$, $\mu_x'$ exist since $\lim_{t\to t_{\max}}h(x_1(t)) = 0$.}.
	Hence, it holds $0 < h(x_1(t)) \leq \mu_x' <
	\mu_x$ and $\sigma_{\mu_x}(h(x_1(t))) >  \sigma_{\mu_x}(\mu'_x) > 0$, for all $t\in[t_x',t_{\max})$. 
     In view of \eqref{eq:x_2des} and \eqref{eq:e_2}, $\dot{\beta}$ becomes
     \begin{align*}
        \dot{\beta} =&  \beta_{\textup{d}} \nabla h(x_1)^\top e_2 - \kappa_x \sigma_{\mu_x}(h(x_1)) \beta_{\textup{d}}^2 \|\nabla h(x_1) \|^2.
    \end{align*}
    In view of the definition of $h_v$ and since $x(t) \in \textup{Int}(\mathcal{C}_v)$ for all $t\in[t_0,t_{\max})$,  we conclude that $\|e_2(x(t)))\| < \bar{B}_2$, for all $t\in[t_0,t_{\max})$. Note that $\bar{B}_2$ is independent of $t_{\max}$. Moreover, since $\mu_x' < \mu_x < \nu_h$, Assumption \ref{ass:grad h} suggests that $\|\nabla h(x_1(t))\| \geq \varepsilon_h$ for all $t\in[t'_x,t_{\max})$. 
    Therefore,  $\dot{\beta}$ becomes
    \begin{align} \label{eq:beta_dot final}
        \dot{\beta} \leq & \bar{B}_2 | \beta_{\textup{d}}\| |\nabla h(x_1)\| -\kappa_x \sigma_{\mu_x}(h(x_1)) \beta_{\textup{d}}^2 \|\nabla h(x_1) \|^2 \notag \\
        \leq& \bar{B}_2 \bar{h}_x |\beta_\textup{d}| - \kappa_x \sigma_{\mu_x}(\mu'_x) \varepsilon_h^2\beta_\textup{d}^2  \notag \\
        =& -\kappa_x \sigma_{\mu_x}(\mu'_x) \varepsilon_h^2 |\beta_\textup{d}|\left( |\beta_\textup{d}| - \frac{\bar{B}_2 \bar{h}_x}{\kappa_v \sigma_{\mu_x}(\mu'_x) \varepsilon_h^2} \right)
    \end{align}
    for all $t\in[t_x',t_{\max})$, where $\bar{h}_x \coloneqq \sup_{x_1\in\mathcal{C}_{\mu'_x}}\|\nabla h(x_1)\|$ is a finite constant, since $h(x_1)$ is continuously differentiable. Therefore, we conclude that $\dot{\beta} < 0$ when $|\beta_\textup{d}| > \frac{\bar{B}_2\bar{h}_x}{\kappa_x \sigma_{\mu_x}(\mu'_x) \varepsilon_h^2}$. 
    
    We claim now that \eqref{eq:beta_dot final} implies the boundedness of $\beta$. Since we have assumed that $\lim_{t\to t_{\max}} \beta (h(x_1(t))) = \infty$, \eqref{eq:beta cond 1} and \eqref{eq:beta cond 3} imply that $\lim_{t\to t_{\max}}|\beta_\textup{d}(t)| = \infty$. Hence, for every positive constant $\gamma > 0$, there exists a time instant $t_\gamma \in [t_x',t_{\max} )$ such that $|\beta_\textup{d}(t)| > \gamma$ for all $t > t_\gamma$. Consequently, we conclude from \eqref{eq:beta_dot final} that there exists a time instant $t' \in [t_x',t_{\max})$ such that ${\beta}(h(x_1(t))) < 0$ for all $t > t'$, which leads to a contradiction.  
	We conclude, therefore, that there exists a constant $\bar{\beta}$ such that $\beta(h(x_1(t)) \leq \bar{\beta}$, for all $t\in[t_0,t_{\max})$, implying $h(x_1(t)) \geq \underline{h} \coloneqq \alpha_1^{-1}\left(\frac{1}{\bar{\beta}}\right)$, for all $t\in[t_0,t_{\max})$, which dictates  the boundedness of $x_1$ in a compact set $x_1(t) \in \widetilde{\mathcal{C}} \subset \textup{Int}(\mathcal{C})$, for all $t\in[t_0,t_{\max})$. 
	Moreover, \eqref{eq:x_2des} also suggests the boundedness of $x_{2,\textup{r}}(x_1(t))$, which, via the boundedness of $e_2$ by $\bar{B}_2$, implies the boundedness of $x_2(t)$, for all $t\in[t_0,t_{\max})$. By differentiating \eqref{eq:x_2des} and using the boundedness of $x_1$, $x_{2,\textup{r}}$, and \eqref{eq:beta cond 3}, \eqref{eq:e_2}, we also conclude the boundedness of $\dot{x}_{2,\textup{r}}(x_1(t)))$, for all $t\in[t_0,t_{\max})$.
	
	We proceed next to prove the boundedness of $\beta_v$. Following the same line of proof, assume that $\lim_{t\to t_{\max}} h_v(x(t)) = 0$, i.e., the system converges to the boundary of $\mathcal{C}_v$ as $t\to t_{\max}$, implying $\lim_{t\to t_{\max}} \beta_v(h_v(x(t))) = \infty$. Given the constant $\mu_v'\in(0,\mu_v)$, let any $t_v' \in [t_0,t_{\max})$  such that $x(t) \in \mathcal{C}_{v,\mu'_v} \subset \mathcal{C}_{v,\mu_v}$ for all $t\in[t_v',t_{\max})$, and 
	$x(t) \in \widetilde{\mathcal{C}}_v \coloneqq \{ x\in\mathbb{R}^{2n}: h_v(x) \geq \min_{t\in[t_0,t']}\{h_v(x(t))\} > 0 \}$, for all $t\in[t_0,t_v']$.
	Hence, it holds $0 < h_v(x(t)) \leq \mu_v' <
	\mu_v$ and $\sigma_{\mu_v}(h_v(x(t))) >  \sigma_{\mu_v}(\mu'_v) > 0$, for all $t\in[t_v',t_{\max})$.  
	By recalling that $\sigma_{\mu_v}(h_v) \leq 1$ and that $h_v(\cdot)$ is a function of $e_2$, the derivative of $\dot{\beta}_v$ becomes then
	\begin{align*}
	    \dot{\beta}_v =& \beta_{v,\textup{d}} f_\textup{n}(x) \\ 
	    &\hspace{-3mm} - \kappa_v \sigma_{\mu_v}(h_v) \beta_{v,\textup{d}} \nabla_{x_2} h_v(x)^\top g(x) \beta_{v,\textup{d}}\frac{\hat{g}^i(x)^\top \nabla_{x_2} h_v(x)}{ \|\hat{g}^i(x)^\top \nabla_{x_2} h_v(x)\|^2} \\
	    \leq &\beta_{v,\textup{d}} f_{\textup{n}}(x)  \\
	    & -\kappa_v \sigma_{\mu_v}(h_v) \beta_{v,\textup{d}}^2 + \kappa_v \beta_{v,\textup{d}}^2 \frac{\widetilde{g}^i(x)^\top \nabla_{x_2} h_v(x)}{ \|\hat{g}^i(x)^\top \nabla_{x_2} h_v(x)\|}
	\end{align*}
	for all $t\in[t_v',t_{\max})$, where $f_\textup{n}(x) \coloneqq \nabla_{x_1} h_v(x)^\top x_2 + \nabla_{x_2} h_v(x)( f(x) + g(x)u_\textup{n}(x))$ and we used $g(x) = \hat{g}^i(x) - \widetilde{g}^i(x)$.  
	Since $f,g,u_\textup{n}$ are continuous functions, $\dot{x}_{2,\textup{r}}$ has been proven bounded, and $x(t)\in\textup{Int}(\bar{\mathcal{C}})\cap\textup{Int}(\mathcal{C}_v)$ for all $t\in[t_0,t_{\max})$, there exists a constant $\bar{f}_\textup{n}$, independent of $t_{\max}$, satisfying $|f_\textup{n}(x(t)) | \leq \bar{f}_\textup{n}$, for all $t\in[t_0,t_{\max})$.  By also using $\|\hat{g}^i(x(t))^\top \nabla_{x_2} h_v(x(t))\| \geq \varepsilon$, for all $t\in[t',t_{\max})$, we obtain
	\begin{align*}
	\dot{\beta} \leq |\beta_{v,\textup{d}}| \bar{f}_\textup{n} - \kappa_v \sigma_{\mu_v}(\mu'_v)\beta_{v,\textup{d}}^2 + \kappa_v \beta_{v,\textup{d}}^2 \frac{\bar{h}_{v,\mu_v'} \bar{g}_{\mu'_v}}{\varepsilon},
	\end{align*} 
	for all $t\in[t',t_{\max})$.
	By setting $\epsilon_{v} \coloneqq \sigma_{\mu_v}(\mu'_v) - \frac{\bar{h}_{v,\mu'_v}\bar{g}_{\mu'_v}}{\varepsilon}>0$, we obtain
	\begin{align*}
	\dot{\beta}_v &\leq  - \kappa_v \epsilon_{v} |\beta_{v,\textup{d}}| \left( |\beta_{v,\textup{d}}|  - \frac{\bar{f}_\textup{n}}{\kappa_v \epsilon_v} \right) 
	\end{align*}
	for all $t\in[t',t_{\max})$. Therefore, $\dot{\beta}_v < 0$ when $|\beta_{v,\textup{d}}| > \frac{ \bar{f}_\textup{n}}{\kappa_v \epsilon_v}$. 
	By invoking similar arguments as in the case of  \eqref{eq:beta_dot final}  and $\beta$, we conclude that $x(t) \in \widetilde{\mathcal{C}} \subset \textup{Int}(\bar{\mathcal{C}})\cap \textup{Int}(\mathcal{C}_v)$.
	
	By also using $x(t) \in \widetilde{\mathcal{C}}_v$, for all $t\in[t_0,t']$ and the compactness of $\widetilde{\mathcal{C}}_v$, we conclude the boundedness of $x(t)$ and $\beta_v(h_v(x(t)))$, for all $t\in[t_0,t_{\max})$.  From  \cite[Th. 2.1.4]{bressan2007introduction}, we conclude that $t_{\max} = \infty$, and the boundedness of $x(t)$, $\beta(h(x_1(t)))$, $u(x(t),t)$ for all $t\in[t_0,\infty)$.
\end{proof}

Intuitively, the condition ${\bar{g}_{\mu'_v}\bar{h}_{v,\mu'_v}} < {\varepsilon}  \sigma_{\mu}(\mu_v')$ of Th. \ref{th:local} is implicitly connected to the frequency of measurements $\{x^i,\dot{x}^i,u^i\}$ the system obtains. More specifically, note first that $\|\widetilde{g}^i(x(t_i)) \| \leq \lim_{t \to t_i^{-}} \|\widetilde{g}^i(x(t_i)) \|$, for all $i\in \bar{\mathbb{N}}$, i.e., the estimation $g^i(x)$ of $g(x)$ improves with every update. Hence, the condition ${\bar{g}_{\mu'_v}\bar{h}_{v,\mu'_v}} < {\varepsilon}  \sigma_{\mu}(\mu_v')$ simply implies that the system will have obtained sufficiently enough measurements such that it obtains an accurate enough estimate $\hat{g}^i(x)$ of $g(x)$ from Alg. \ref{algo:overapprox-datapoints} before it reaches $\partial \mathcal{C}$ or $\partial\mathcal{C}_v$. 

\subsection{Bounds on $\widetilde{g}^i(x)$} \label{subsec: bounds}

Theorem \ref{th:local} is based on a small enough error of the approximation error $\widetilde{g}^i(x)$. Intuitively, this is achieved through a sufficiently high frequency of the measurement updates $t_i$, $i\in\mathbb{N}$. In this section we provide a closed-form relation between the approximation error $\widetilde{g}^i(x^{i+1})$ and the difference update $\Delta t_i \coloneqq t_{i+1} - t_i$. 

We start by stating a result from~\cite{djeumou2020onthefly}, which provides a closed-form expression to over-approximate future state values of the system given the current state and the control signal applied. Since Theorem \ref{th:local} proves the boundedness of the control signal $u(x,t)$, we use $\mathcal{U}^i \in \mathbb{IR}^m$ to denote the bounded set satisfying $u(x(t),t) \in \mathcal{U}^i \in \mathbb{IR}^m$, for all $t\in[t_i,t_{i+1}]$ for $i\in\bar{\mathbb{N}}$.
\begin{theorem}[\cite{djeumou2020onthefly}, Theorem $2$]\label{thm:next-state}
    Let the current state be $x^i$, the bounded admissible set of control values between time $t_i$ and $t_{i+1} = t_{i}+\Delta t_i$ be $\mathcal{U}^i \in \mathbb{IR}^m$, with a time step size $\Delta t_i > 0$. Assume that $ (\sqrt{n} \beta^{i})\Delta t < 1 $, where $\beta^{i} \coloneqq \sqrt{ \sum_{k=1}^n (\overline{f}_k + \sum_{l=1}^m \overline{g}_{k\ell} |\mathcal{U}^i_l|})^2$, and 
    $\bar{f}_k$, $\bar{g}_{k\ell}$ are the known locally Lipschitz constants (see Assumption \ref{ass:lipshitz-bounds}). Then, the future state value $x^{i+1} = x(t_{i+1})$ satisfies
    \begin{align}\label{eq:next-state-pred}
        x^{i+1} \in x^i + \boldsymbol{h}(x^{i}, \mathcal{U}^i) \Delta t + ( \mathcal{J}^f + \mathcal{J}^g \mathcal{U}^i) \boldsymbol{h}(\mathcal{S}^{i}, \mathcal{U}^i)\frac{\Delta t_i^2}{2},
    \end{align}
    where $\boldsymbol{h}(x,u) \coloneqq \boldsymbol{F}(x) + \boldsymbol{G}(x) u$, $\mathcal{J}^f\coloneqq (\mathcal{J}^{f}_{kp}) \in \mathbb{IR}^{n \times n}$ and $\mathcal{J}^g \coloneqq (\mathcal{J}^g_{k\ell p}) \in \mathbb{IR}^{n \times m \times n}$ are over-approximations of the Jacobian of $f$ and $g$ given by $\mathcal{J}^{f}_{kp} = \bar{f}_k [-1,1]$ and $\mathcal{J}^g_{k\ell p} = \bar{g}_{k\ell} [-1,1]$ for all $k,p \in \{1,\dots,n\}$ and $\ell \in \{1,\dots,m\}$. Moreover, $\mathcal{S}^i$ is an a priori rough enclosure of the future state value given by
    \begin{align}
        \mathcal{S}^i \coloneqq x^i +  \frac{\Delta t \|\boldsymbol{h}(x^i,\mathcal{U}^i)\|_{\infty}}{1- \sqrt{n} \Delta t_i \beta^i} [-1,1]^n. \label{eq:explict-fixpoint}
    \end{align}
\end{theorem}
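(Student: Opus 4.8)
The plan is to treat this as a validated (rigorous) one-step integration result and to split the argument into two parts: first the construction of the a priori enclosure $\mathcal{S}^i$, then a Taylor-with-remainder expansion of the flow whose remainder is bounded using that enclosure together with the Lipschitz bounds of Assumption \ref{ass:lipshitz-bounds}. Throughout I write $\phi$ for the trajectory of \eqref{eq:system} on $[t_i,t_{i+1}]$ with $\phi(t_i)=x^i$ and the applied control held in $\mathcal{U}^i$, and set $\psi(s)\coloneqq f(\phi(s))+g(\phi(s))u(s)=\dot\phi(s)$.

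First I would establish that $\phi(t)\in\mathcal{S}^i$ for all $t\in[t_i,t_{i+1}]$. The natural tool is the Picard operator $(P\phi)(t)=x^i+\int_{t_i}^t\psi(s)\,ds$ together with a self-mapping argument: writing $\mathcal{S}=x^i+r[-1,1]^n$, it suffices to show $\Delta t_i\|\boldsymbol{h}(\mathcal{S},\mathcal{U}^i)\|_\infty\leq r$. Using the Lipschitz bounds, for $x\in\mathcal{S}$ and $u\in\mathcal{U}^i$ one gets $|h_k(x,u)-h_k(x^i,u)|\leq \|x-x^i\|\bigl(\bar f_k+\sum_\ell\bar g_{k\ell}|\mathcal{U}^i_\ell|\bigr)\leq \sqrt n\,r\,\beta^i$, since each $\bar f_k+\sum_\ell\bar g_{k\ell}|\mathcal{U}^i_\ell|$ is bounded by $\beta^i$ and $\|x-x^i\|\leq\sqrt n\,r$. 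Hence $\|\boldsymbol{h}(\mathcal{S},\mathcal{U}^i)\|_\infty\leq\|\boldsymbol{h}(x^i,\mathcal{U}^i)\|_\infty+\sqrt n\,r\,\beta^i$, and the self-mapping inequality reduces to $r\geq \Delta t_i\|\boldsymbol{h}(x^i,\mathcal{U}^i)\|_\infty/(1-\sqrt n\,\Delta t_i\beta^i)$, which is exactly the radius in \eqref{eq:explict-fixpoint} and is well-defined precisely because $\sqrt n\,\beta^i\Delta t_i<1$. A Banach fixed-point / compactness argument then confirms existence of the solution on the full step and that it stays inside $\mathcal{S}^i$.

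Second I would Taylor-expand each component of $\phi$ about $t_i$ to first order with Lagrange remainder: $\phi_k(t_{i+1})=x^i_k+\dot\phi_k(t_i)\Delta t_i+\tfrac12\ddot\phi_k(\xi_k)\Delta t_i^2$ for some $\xi_k\in(t_i,t_{i+1})$. The zeroth-order derivative is handled by Lemma \ref{lem:overapprox-f-G}, which gives $\dot\phi(t_i)=f(x^i)+g(x^i)u(t_i)\in\boldsymbol{F}(x^i)+\boldsymbol{G}(x^i)\mathcal{U}^i=\boldsymbol{h}(x^i,\mathcal{U}^i)$, producing the linear-in-time term. For the remainder I would differentiate $\psi=\dot\phi$ by the chain rule: with the control held constant over the step, $\ddot\phi_k=\sum_p\bigl(\tfrac{\partial f_k}{\partial x_p}+\sum_\ell\tfrac{\partial g_{k\ell}}{\partial x_p}u_\ell\bigr)\dot\phi_p$. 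The Lipschitz constants bound the Jacobian entries, $|\partial f_k/\partial x_p|\leq\bar f_k$ and $|\partial g_{k\ell}/\partial x_p|\leq\bar g_{k\ell}$, so $\partial f_k/\partial x_p\in\mathcal{J}^f_{kp}$ and $\partial g_{k\ell}/\partial x_p\in\mathcal{J}^g_{k\ell p}$; meanwhile $\phi(\xi_k)\in\mathcal{S}^i$ from the first part yields $\dot\phi(\xi_k)\in\boldsymbol{h}(\mathcal{S}^i,\mathcal{U}^i)$. Combining these interval inclusions gives $\tfrac12\ddot\phi(\xi)\Delta t_i^2\in\tfrac12(\mathcal{J}^f+\mathcal{J}^g\mathcal{U}^i)\boldsymbol{h}(\mathcal{S}^i,\mathcal{U}^i)\Delta t_i^2$, which is the remainder term in \eqref{eq:next-state-pred}.

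The main obstacle I anticipate is the a priori enclosure step: one must argue cleanly that the self-mapping property of the Picard operator actually forces the whole trajectory into $\mathcal{S}^i$ and guarantees existence of the solution on all of $[t_i,t_{i+1}]$ — this is exactly where $\sqrt n\,\beta^i\Delta t_i<1$ is indispensable, since it both makes the radius positive and supplies the contraction the fixed-point argument needs. A secondary subtlety is the treatment of the control: the remainder formula contains no $g(\phi)\dot u$ contribution, so the derivation implicitly assumes a zero-order hold with value in $\mathcal{U}^i$ on $[t_i,t_{i+1}]$, which I would state explicitly, as otherwise an extra term would appear in $\ddot\phi$. The remaining interval-arithmetic manipulations are routine inclusion and monotonicity checks.
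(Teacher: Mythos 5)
Your proposal is correct, but note that the paper itself gives no proof of this statement: Theorem~\ref{thm:next-state} is imported verbatim from the cited reference (\cite{djeumou2020onthefly}, Theorem~2), so there is no in-paper argument to compare against. Your reconstruction is the standard validated-integration proof that such results rest on: (i) an a priori enclosure obtained by showing the Picard operator self-maps into $\mathcal{S}^i = x^i + r[-1,1]^n$, where the inequality $\Delta t_i\bigl(\|\boldsymbol{h}(x^i,\mathcal{U}^i)\|_\infty + \sqrt{n}\,r\,\beta^i\bigr) \leq r$ yields exactly the radius in \eqref{eq:explict-fixpoint} and is solvable precisely because $\sqrt{n}\,\beta^i \Delta t_i < 1$; and (ii) a componentwise first-order Taylor expansion with Lagrange remainder, where $\dot\phi(t_i)\in\boldsymbol{h}(x^i,\mathcal{U}^i)$, the Jacobian entries are enclosed by $\mathcal{J}^f,\mathcal{J}^g$ via the Lipschitz bounds, and $\phi(\xi_k)\in\mathcal{S}^i$ closes the remainder. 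Both halves of your computation check out (in particular the $\sqrt{n}$ factors from converting between the $\infty$-norm box and the $2$-norm Lipschitz bounds, and the bound $\bar f_k + \sum_\ell \bar g_{k\ell}|\mathcal{U}^i_\ell| \leq \beta^i$ componentwise). Your observation that the remainder formula implicitly assumes a zero-order hold on $[t_i,t_{i+1}]$ --- otherwise a $g(\phi)\dot u$ term appears in $\ddot\phi$ --- is a genuine and correctly identified hypothesis of the cited result, and is worth stating explicitly given that this paper applies the theorem to a continuously varying feedback law $u(x,t)$ constrained only to take values in $\mathcal{U}^i$.
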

Note that, with a slight abuse of notation, 
$\boldsymbol{F}$ and $\boldsymbol{G}$ is re-defined in Theorem~\ref{thm:next-state} to take both real vectors and interval quantities as arguments,  
as expressed by $\boldsymbol{h}(x^{i}, \mathcal{U}^i)$ and 
$\boldsymbol{h}(\mathcal{S}^{i}, \mathcal{U}^i)$. To achieve this, one can straightforwardly extend the $\|\cdot\|$ operator to the domain of intervals~\cite{franckACC}. In the remainder of this section, $\|\cdot\|$ takes both real and interval vectors as arguments.

Finally, based on the estimate~\eqref{eq:estimate} in Lemma~\ref{lem:estimate} and Theorem~\ref{thm:next-state}, we provide in Lemma~\ref{lem:error-estimate} an upper bound on the error between the unknown function $g$ and its estimate $\hat{g}^i$.

\begin{lemma}[\textsc{Point-based estimation error}]\label{lem:error-estimate}
    Under the notation and assumption of Theorem~\ref{thm:next-state}, it holds that
    \begin{align}\label{eq:estimate-error}
        \| \hat{g}^i_{k\ell`}(x^{i+1}) - g_{k\ell}(x^{i+1}) \| &\leq \mathrm{wd}(C^i_{\mathcal{G}_{k\ell}}) \nonumber + 2 \bar{g}_{k\ell}  \overline{\|h(x^i,\mathcal{U}^i)\|} \Delta t_i \nonumber \\
        &\quad + 2 \bar{g}_{k\ell} \overline{\|\mathcal{K}^i\|}\frac{\Delta t_i^2}{2}
    \end{align}     
    for $k \in \{1,\hdots,n\}$, and $l \in \{1,\hdots,m\}$, where 
    \begin{align*}
        \mathcal{K}^i \coloneqq (\mathcal{J}^f + \mathcal{J}^g\mathcal{U}^i) \Big(h(x^i,\mathcal{U}^i) + \frac{\Delta t_i \|\boldsymbol{h}(x^i,\mathcal{U}^i)\|_{\infty}}{1- \sqrt{n} \Delta t_i \beta^i} \mathcal{H}^i \Big)  \in \mathbb{IR}^n,
    \end{align*} 
    and $\mathcal{H}^i \coloneqq (\bar{f} + \bar{g} \mathcal{U})[-\sqrt{n},\sqrt{n}]^n \in \mathbb{IR}^n$.
\end{lemma}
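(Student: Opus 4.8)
The plan is to bound the estimation error $\|\hat{g}^i_{k\ell}(x^{i+1}) - g_{k\ell}(x^{i+1})\|$ by combining two observations: first, the estimate $\hat{g}^i_{k\ell}$ from Lemma~\ref{lem:estimate} is a convex combination of the interval endpoints $\underline{\mathsf{G}}^i_{k\ell}$ and $\bar{\mathsf{G}}^i_{k\ell}$, and since the true value $g_{k\ell}(x^{i+1})$ lies inside the interval $\mathsf{G}^i_{k\ell}(x^{i+1})$ (by Lemma~\ref{lem:overapprox-f-G} / Theorem~\ref{thm:diff incl}), the pointwise error is bounded by the width of that interval; second, this width can be controlled by the width of the data-based interval $C^i_{\mathcal{G}_{k\ell}}$ plus a Lipschitz-propagation term that depends on $\|x^{i+1} - x^i\|$.

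**First I would** establish the elementary fact that if $g_{k\ell}(x^{i+1}) \in \mathsf{G}^i_{k\ell}(x^{i+1}) = [\underline{\mathsf{G}}^i_{k\ell}(x^{i+1}),\bar{\mathsf{G}}^i_{k\ell}(x^{i+1})]$ and $\hat{g}^i_{k\ell}(x^{i+1})$ is a convex combination of the two endpoints, then both quantities lie in the same interval, so
\begin{equation*}
\|\hat{g}^i_{k\ell}(x^{i+1}) - g_{k\ell}(x^{i+1})\| \leq \mathrm{wd}\big(\mathsf{G}^i_{k\ell}(x^{i+1})\big).
\end{equation*}
Next I would use the closed-form expression for $\mathsf{G}^i_{k\ell}(x) = \boldsymbol{G}_{k\ell}(x)$ from \eqref{eq:overapprox-f-G}. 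Since $\boldsymbol{G}_{k\ell}$ is an intersection of inflated intervals, its width at $x^{i+1}$ is no larger than the width of the single term built from the data point $x^i$, namely $C^i_{\mathcal{G}_{k\ell}} + \bar{g}_{k\ell}\|x^{i+1} - x^i\|[-1,1]$. The width of that term is exactly $\mathrm{wd}(C^i_{\mathcal{G}_{k\ell}}) + 2\bar{g}_{k\ell}\|x^{i+1} - x^i\|$, which already accounts for the first term of \eqref{eq:estimate-error} and reduces everything to bounding $\|x^{i+1} - x^i\|$.

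**The remaining step** is to propagate the one-step state-prediction bound of Theorem~\ref{thm:next-state} into a bound on $\|x^{i+1} - x^i\|$. From \eqref{eq:next-state-pred}, the displacement $x^{i+1} - x^i$ is enclosed by $\boldsymbol{h}(x^i,\mathcal{U}^i)\Delta t_i + (\mathcal{J}^f + \mathcal{J}^g\mathcal{U}^i)\boldsymbol{h}(\mathcal{S}^i,\mathcal{U}^i)\frac{\Delta t_i^2}{2}$, so by the triangle inequality (extended to intervals via the $\|\cdot\|$ operator noted after Theorem~\ref{thm:next-state}) I get
\begin{equation*}
\|x^{i+1} - x^i\| \leq \overline{\|\boldsymbol{h}(x^i,\mathcal{U}^i)\|}\,\Delta t_i + \overline{\|\mathcal{K}^i\|}\,\tfrac{\Delta t_i^2}{2},
\end{equation*}
where I recognize the second-order coefficient as $\mathcal{K}^i$ once the rough enclosure $\mathcal{S}^i$ from \eqref{eq:explict-fixpoint} is substituted for the argument of $\boldsymbol{h}$ and the enclosure of $\boldsymbol{h}(\mathcal{S}^i,\mathcal{U}^i)$ is bounded using $\mathcal{H}^i \coloneqq (\bar{f} + \bar{g}\mathcal{U})[-\sqrt{n},\sqrt{n}]^n$. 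Multiplying this displacement bound by $2\bar{g}_{k\ell}$ and adding $\mathrm{wd}(C^i_{\mathcal{G}_{k\ell}})$ yields precisely \eqref{eq:estimate-error}.

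**The main obstacle** I anticipate is the bookkeeping in the last step: correctly identifying that the interval enclosure of $\boldsymbol{h}(\mathcal{S}^i,\mathcal{U}^i)$ expands into the $\mathcal{H}^i$ term, and verifying that substituting $\mathcal{S}^i$ from \eqref{eq:explict-fixpoint} into $\boldsymbol{h}$ and then into $(\mathcal{J}^f + \mathcal{J}^g\mathcal{U}^i)(\cdot)$ reproduces exactly the definition of $\mathcal{K}^i$ given in the statement. This is where interval arithmetic must be handled carefully — in particular, ensuring the $\frac{\Delta t_i\|\boldsymbol{h}(x^i,\mathcal{U}^i)\|_\infty}{1-\sqrt{n}\Delta t_i\beta^i}$ factor (the radius of $\mathcal{S}^i$) is propagated through the Lipschitz-based Jacobian enclosure correctly, rather than the individual inequalities, which are all routine triangle-inequality applications.
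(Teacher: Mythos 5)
Your proposal is correct and follows essentially the same route as the paper's proof: bounding the error by the width of $\boldsymbol{G}_{k\ell}(x^{i+1})$ (since both the estimate and the true value lie in it), reducing that width to $\mathrm{wd}(C^i_{\mathcal{G}_{k\ell}}) + 2\bar{g}_{k\ell}\|x^{i+1}-x^i\|$ via the single intersection term at $x^i$, and then bounding $\|x^{i+1}-x^i\|$ through Theorem~\ref{thm:next-state} with the $\mathcal{S}^i$-enclosure expanded into the $\mathcal{H}^i$ and $\mathcal{K}^i$ terms. The interval-arithmetic bookkeeping you flag as the main obstacle is exactly the content of the paper's intermediate inclusions \eqref{eq:h-aprior-val}, and your handling of it matches.
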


\begin{proof}
    By definition of $\hat{g}_{k\ell}(x) \in \boldsymbol{G}_{k\ell}(x)$, we have
    \begin{align}\label{eq:error-fv}
        \| \hat{g}_{k\ell}(x^{i+1}) - g_{k\ell}(x^{i+1})\| \leq \mathrm{wd}(\boldsymbol{G}_{k\ell}(x^{i+1})),
    \end{align}
    since we know by construction that $g_{k\ell}(x^{i+1}) \in  \boldsymbol{G}_{k\ell}(x^{i+1})$ for all $k\in\{1,\dots,n\}$ and $\ell\in\{1,\dots,m\}$. As a consequence, by construction of $\boldsymbol{G}_{k\ell}$ in Theorem~\ref{thm:diff incl} and Lemma~\ref{lem:overapprox-f-G}, we can deduce that
    \begin{align}\label{eq:width-next}
        \mathrm{wd}(\boldsymbol{G}_{k\ell}(x^{i+1})) \leq \mathrm{wd}(C^i_{\mathcal{G}_{k\ell}} + \bar{g}_{k\ell} \|x^{i+1}-x^i\|)] [-1,1]).
    \end{align}
    We use the relation~\eqref{eq:next-state-pred} of Theorem~\ref{thm:next-state} to bound the quantity $\|x^{i+1}-x^i\|$ as
    \begin{align}\label{eq:diff-overapprox}
        \|x^{i+1} - x^{i}\| &\in \left\|\boldsymbol{h}(x^i, \mathcal{U}^i) \Delta t_i + (\mathcal{J}^f + \mathcal{J}^g \mathcal{U}^i) \boldsymbol{h}(\mathcal{S}^i, \mathcal{U}^i)\frac{\Delta t_i^2}{2}\right\|.
    \end{align}
    Then, using the definition of $\mathcal{S}^i$ from~\eqref{eq:explict-fixpoint}, we have that 
    \begin{align}
        \boldsymbol{F}_k (\mathcal{S}^i) &\subseteq C^i_{\mathcal{F}_k} + \bar{f}_k \frac{\Delta t_i \|\boldsymbol{h}(x^i,\mathcal{U}^i)\|_{\infty}}{1- \sqrt{n} \Delta t_i \beta^i} [-\sqrt{n},\sqrt{n}],\\
        \boldsymbol{G}_{k,l} (\mathcal{S}^i) &\subseteq C^i_{\mathcal{G}_{k\ell}} +  \bar{g}_{k\ell} \frac{\Delta t_i \|\boldsymbol{h}(x^i,\mathcal{U}^i)\|_{\infty}}{1- \sqrt{n} \Delta t_i \beta^i} [-\sqrt{n},\sqrt{n}].
    \end{align}
    Hence, we can write
    \begin{align} \label{eq:h-aprior-val}
        \boldsymbol{h}(\mathcal{S}^i, \mathcal{U}^i) \subseteq \boldsymbol{h}(x^i,\mathcal{U}^i) + \frac{\Delta t_i \|\boldsymbol{h}(x^i,\mathcal{U}^i)\|_{\infty}}{1- \sqrt{n} \Delta t_i \beta^i} \mathcal{H}^i,
    \end{align} 
    Finally, merging~\eqref{eq:h-aprior-val} into~\eqref{eq:diff-overapprox}, and plugging the result into~\eqref{eq:width-next} and then~\eqref{eq:error-fv} enables to obtain the error~\eqref{eq:estimate-error}.
\end{proof} 
Relation \eqref{eq:estimate-error} provides a way to bound the error $\widetilde{g}^i(x)$ by using a small enough time difference $\Delta t_i$.
Note that, apart from the Lipschitz constant estimates of Assumption \ref{ass:lipshitz-bounds}, \eqref{eq:estimate-error} does not require any additional information on the dynamic terms $f(\cdot)$ and $g(\cdot)$.
Therefore, the necessary condition \eqref{eq:local cond2} of Theorem \ref{th:local} could be potentially replaced by a specification for $\Delta t_i$ though  \eqref{eq:estimate-error}.

\subsection{ Square Control Matrix $g(x)$} \label{subsec:square g}

The fact that $g$ is non-square and completely unknown and hence cannot be used in the control algorithm makes the considered problem significantly more challenging compared to other works in the related literature that assume positive definiteness of $g$ or of ${g + g^\top}$ \cite{bechlioulis2008robust,liu2019barrier,verginis2020asymptotic,verginis2019closed}. In fact, we show now that a simple feedback control law can solve Problem \ref{problem:1} in the case of square and positive definite matrix $g$, without using any data. 
We first need Assumption \ref{ass:grad h} to hold for $\nabla_{x_2} h_2$:
\begin{assumption} \label{ass:grad h2}
    There exist positive constants $\nu_v$, $\varepsilon_v$ such that $\|\nabla_{x_2} h_v(x)\| \geq \varepsilon_v$ for all $x \in \{ x \in \mathbb{R}^{2n} : h_v(x) \in (0,\nu_v] \} \subset \mathcal{C}_v$.
\end{assumption}
As with \eqref{eq:control law local}, we select a positive constant $\mu_v$ to enable the control law in the set $\mathcal{C}_{v,\mu_v} =\{x\in\mathbb{R}^{2n} : h_v(x) \in (0,\mu_v]\}$. Similarly to $\mu_x$, we choose the constant $\mu_v$ sufficiently small so that it satisfies $\mu_v < \nu_v$, implying $\|\nabla_{x_2} h_v(x)\| \geq \varepsilon_v$ for all $x\in \mathcal{C}_{\mu_v}$. 

Given the reference signal $x_{2,\textup{r}}$ in \eqref{eq:x_2des}, the function $h_v(\cdot)$ and $\beta_{v}(\cdot)$ in \eqref{eq:beta cond 2},  the switching function \eqref{eq:sigma mu}, and the constant $\mu_v$,  we design now the control law as 
\begin{align} \label{eq:control law square}
u \coloneqq u(x,t) &\coloneqq u_\textup{n}(x) - \kappa_v \sigma_{\mu_v}(h_v(x)) \beta_{v,\textup{d}}(x) \nabla_{x_2} h_v(x),
\end{align}
whose correctness is proven in the following theorem.

\begin{theorem} \label{th:square}
Let a system evolve according to \eqref{eq:system 1}, \eqref{eq:control law square}, with $m=n$ and a set $\mathcal{C}$ satisfying $x_1(t_0) \in \textup{Int}(\mathcal{C})$ for some $t_0 \geq 0$. Let a constant $\mu'_v \in (0,\mu)$ and assume that 
$\lambda_{\min}\left({g(x) + g(x)^\top}\right) > 0$, for all $x\in  \mathcal{C}_{v,\mu'_v}$. Under Assumptions \ref{ass:C in ball}, \ref{ass:grad h}, it holds that $x_1(t) \in \textup{Int}(\mathcal{C})$, and all closed loop signals are bounded, for all $ t \geq t_0$.
\end{theorem}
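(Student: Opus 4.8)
The plan is to mirror the proof of Theorem~\ref{th:local}, exploiting the positive definiteness of $g(x)+g(x)^\top$ to dispense entirely with the data-driven estimate $\hat{g}^i$ and with the conditions \eqref{eq:local cond1}--\eqref{eq:local cond2}. First I would invoke an analogue of Lemma~\ref{lem:local solution}: since the control law \eqref{eq:control law square} is now a plain feedback $u(x)$, continuous and locally Lipschitz in $x$ through $u_\textup{n}$, $\sigma_{\mu_v}$, $\beta_{v,\textup{d}}$ and $\nabla_{x_2}h_v$, standard existence theory yields a unique maximal solution $x:[t_0,t_{\max})\to\textup{Int}(\bar{\mathcal{C}})\cap\textup{Int}(\mathcal{C}_v)$. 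The boundedness of $\beta(h(x_1))$ is then established verbatim as in Theorem~\ref{th:local}: the $x_1$-subsystem, the reference $x_{2,\textup{r}}$ in \eqref{eq:x_2des} and the error $e_2$ are unchanged, so the estimate \eqref{eq:beta_dot final} still holds — using $\|e_2\|<\bar{B}_2$ on $\textup{Int}(\mathcal{C}_v)$ and $\|\nabla h(x_1)\|\geq\varepsilon_h$ from Assumption~\ref{ass:grad h} — and forces $\dot\beta<0$ for large $|\beta_\textup{d}|$, whence $x_1$ remains in a compact subset of $\textup{Int}(\mathcal{C})$ and $x_{2,\textup{r}},\dot{x}_{2,\textup{r}}$ are bounded.

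The only genuinely new computation is the bound on $\dot\beta_v$. Differentiating along \eqref{eq:system} and substituting \eqref{eq:control law square} gives
\begin{align*}
\dot\beta_v = \beta_{v,\textup{d}} f_\textup{n}(x) - \kappa_v\sigma_{\mu_v}(h_v)\beta_{v,\textup{d}}^2\,\nabla_{x_2}h_v(x)^\top g(x)\nabla_{x_2}h_v(x),
\end{align*}
with $f_\textup{n}(x):=\nabla_{x_1}h_v(x)^\top x_2+\nabla_{x_2}h_v(x)^\top(f(x)+g(x)u_\textup{n}(x))$, bounded by a constant $\bar{f}_\textup{n}$ independent of $t_{\max}$ on $\textup{Int}(\bar{\mathcal{C}})\cap\textup{Int}(\mathcal{C}_v)$. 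The key step is to symmetrize the quadratic form, $\nabla_{x_2}h_v^\top g\,\nabla_{x_2}h_v=\tfrac12\nabla_{x_2}h_v^\top(g+g^\top)\nabla_{x_2}h_v\geq\tfrac12\lambda_{\min}(g+g^\top)\|\nabla_{x_2}h_v\|^2$, which replaces the estimation-error argument of Theorem~\ref{th:local}. Restricting to $t$ with $x(t)\in\mathcal{C}_{v,\mu_v'}$, where $\sigma_{\mu_v}(h_v)\geq\sigma_{\mu_v}(\mu_v')>0$ and, by Assumption~\ref{ass:grad h2}, $\|\nabla_{x_2}h_v\|\geq\varepsilon_v$, I would set $\underline{\lambda}:=\inf_{x\in\mathcal{C}_{v,\mu_v'}}\lambda_{\min}(g(x)+g(x)^\top)>0$ and obtain
\begin{align*}
\dot\beta_v \leq -\kappa_v\,\tfrac{\underline{\lambda}\varepsilon_v^2}{2}\,\sigma_{\mu_v}(\mu_v')\,|\beta_{v,\textup{d}}|\Big(|\beta_{v,\textup{d}}|-\tfrac{2\bar{f}_\textup{n}}{\kappa_v\underline{\lambda}\varepsilon_v^2\sigma_{\mu_v}(\mu_v')}\Big).
\end{align*}
Thus $\dot\beta_v<0$ whenever $|\beta_{v,\textup{d}}|$ exceeds the indicated threshold, and the same contradiction argument as for \eqref{eq:beta_dot final} rules out $\lim_{t\to t_{\max}}h_v(x(t))=0$, yielding boundedness of $\beta_v(h_v(x))$ and of $x$ on $[t_0,t_{\max})$.

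Finally I would invoke \cite[Th.~2.1.4]{bressan2007introduction} to conclude $t_{\max}=\infty$ and the boundedness of all closed-loop signals, so that $x_1(t)\in\textup{Int}(\mathcal{C})$ for all $t\geq t_0$. The main obstacle — essentially resolved here by hypothesis — is guaranteeing a strictly positive, uniform lower bound on $\nabla_{x_2}h_v^\top g\,\nabla_{x_2}h_v$ \emph{without} any knowledge of $g$; positive definiteness of $g+g^\top$ together with $\|\nabla_{x_2}h_v\|\geq\varepsilon_v$ supplies exactly this, and compactness of the relevant portion of $\bar{\mathcal{C}}\cap\mathcal{C}_{v,\mu_v'}$ ensures $\underline{\lambda}>0$. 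This is precisely what permits the estimate $\hat{g}^i$, and hence the data, to be removed from the control law.
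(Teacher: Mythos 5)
Your proposal is correct and follows essentially the same route as the paper's proof: existence of a maximal solution, reuse of \eqref{eq:beta_dot final} for the $x_1$-subsystem, and then bounding $\dot{\beta}_v$ by exploiting $\nabla_{x_2}h_v^\top g\,\nabla_{x_2}h_v=\tfrac{1}{2}\nabla_{x_2}h_v^\top(g+g^\top)\nabla_{x_2}h_v$ (the paper phrases this via the decomposition $g=\tfrac{1}{2}(g+g^\top)+\tfrac{1}{2}(g-g^\top)$ and skew-symmetry, which is the identical computation), followed by the same contradiction and continuation argument via \cite[Th.~2.1.4]{bressan2007introduction}. If anything, you are slightly more careful than the paper in defining the uniform lower bound $\underline{\lambda}$ as an infimum over $\mathcal{C}_{v,\mu_v'}$ and in retaining the factor $\tfrac{1}{2}$ and $\varepsilon_v^2$, which the paper's displayed bound glosses over.
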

\begin{proof}
The proof follows similar steps as in the proof of Theorem \ref{th:local} and only a sketch is given. Firstly, we establish a unique, continuously differentiable, and maximal solution $x:[t_0,t_{\max}) \to \textup{Int}(\bar{\mathcal{C}}) \cap \textup{Int}(\mathcal{C}_v)$, for some $t_{\max} > t_0$. By differentiating $\beta$, we obtain \eqref{eq:beta_dot final}, which guarantees  the boundedness of $\beta$ as $\beta(h(x_1(t)) \leq \bar{\beta}$ and the boundedness of $x_1$, $x_2$, $x_{2,\textup{r}}$, and $\dot{x}_{2,\textup{r}}$ for all $t\in[t_0,t_{\max})$. 

Proceeding similarly as in the proof of Theorem \ref{th:local}, we assume that $\lim_{t\to t_{\max}} h_v(x(t)) = 0$ and consider a constant $t'_v \in [t_0,t_{\max})$ such that $x(t) \in \mathcal{C}_{v,\mu'_v}$ for all $t\in [t'_v,t_{\max})$, implying $\sigma_{\mu_v}(h_v(x(t))) >  \sigma_{\mu_v}(\mu'_v) > 0$, for all $t\in [t'_v,t_{\max})$. By using \eqref{eq:control law square}, $\dot{\beta}_v$ becomes 
\begin{align*}
    \dot{\beta}_v = \beta_{v,\textup{d}} f_\textup{n}(x) - \kappa_v \sigma_{\mu_v}(h_v) \beta_{v,\textup{d}}^2 \nabla_{x_2} h_v(x)^\top g(x) \nabla_{x_2} h_v(x), 
\end{align*}
for all $t\in[t_v',t_{\max})$, where $f_\textup{n}(x) \coloneqq \nabla_{x_1} h_v(x)^\top x_2 + \nabla_{x_2} h_v(x)( f(x) + g(x)u_\textup{n}(x))$ is a continuous function bounded by a constant $\bar{f}_\textup{n}$, for all $t\in[t_v',t_{\max})$. 
By using the identity $g(x) = \frac{1}{2}(g(x) + g(x)^\top) + \frac{1}{2}(g(x) - g(x)^\top)$, and employing the skew symmetry of $g(x) - g(x)^\top$ and the positive definiteness of $g(x) + g(x)^\top$, we obtain 
\begin{align*}
    \dot{\beta}_v \leq |\beta_{v,\textup{d}}| \bar{f}_\textup{n} - \kappa_v \underline{g} \sigma_{\mu_v}(\mu'_v)\beta^2_{v,\textup{d}} \| \nabla_{x_2} h_v(x) \|^2,
\end{align*}
where $\underline{g}$ is the minimum eigenvalue of $g(x) + g(x)^\top$, which is positive for all $t\in[t_v',t_{\max})$. In addition, since $x(t) \in  \mathcal{C}_{\mu'_v} \subset \mathcal{C}_{\mu_v}$ for $t\in[t_v',t_{\max})$, it holds that $\|\nabla_{x_2} h_v(x)\| \geq \varepsilon_v >0$, which leads to 
\begin{align*}
    \dot{\beta}_v \leq -\epsilon_v \kappa_v \underline{g} \sigma_{\mu_v}(\mu'_v)|\beta_{v,\textup{d}}|\left(|\beta_{v,\textup{d}}| - \frac{\bar{f}_\textup{n} }{\epsilon_v\kappa_v \underline{g} \sigma_{\mu_v}(\mu'_v)} \right),
\end{align*}
for all $t\in[t_v',t_{\max})$. Therefore, we conclude that $\dot{\beta}_v > 0$ when $|\beta_{v,\textup{d}}| > \frac{\bar{f}_\textup{n} }{\varepsilon_v\kappa_v \underline{g} \sigma_{\mu_v}(\mu'_v)}$.  By following similar arguments as in the proof of Theorem \ref{th:local}, the proof follows.
\end{proof}

\begin{remark}[High relative degree] \label{rem:high-order dyn}
    Theorems \ref{th:local} and \ref{th:square} suggest a way to tackle higher $k$-order dynamics of the form 
    \begin{align*} 
	\dot{x}_i &= f_i(\bar{x}_i) + g_i(\bar{x}_i)x_{i+1}, \ i\in\{1,\dots,k-1\}, \\
	\dot{x}_k &= f(x_k) + g(x)u
    \end{align*}
    for a positive integer $k$, where $\bar{x}_i \coloneqq [x^\top_1,\dots,x_i^\top]^\top \in \mathbb{R}^{n \cdot i}$, for all $i\in\{1,\dots,k\}$, and $x\coloneqq \bar{x}_k$. By assuming that $g_i + g_i^\top$ are positive definite, for all $i\in\{1,\dots,k-1\}$, we can design continuous reference signals $x_{i+1,\textup{r}}$ for the states $x_{i+1}$, as in \eqref{eq:x_2des} and based on consecutive error signals as in \eqref{eq:e_2}. The control signal $u$ can then be designed based on the over-approximation of $g(x)$ and a reciprocal barrier on the difference $x_{k} - x_{k,\textup{r}}$, as in \eqref{eq:control law local}. 
\end{remark}

\section{Controllability loss} \label{sec:contr loss}

In this section we provide an algorithm that considers cases where $\|\hat{g}^i(x)^\top \nabla_{x_2} h_v(x)\|$ can become arbitrarily small, relaxing thus the assumption \eqref{eq:local cond1} in Theorem  \ref{th:local}. For technical requirements, we assume in the following that $\mathcal{C}_v$ is a compact $2n$-dimensional manifold. 

The framework we follow in order to tackle such cases is an online switching mechanism that computes locally alternative barrier functions $h_j$, defining new safe sets $\mathcal{C}_j \coloneqq \{ x\in\mathbb{R}^n: h_j(x) \geq 0\} \subset \mathcal{C}_{j-1}$, for $j \geq 2$, and $h_1 \coloneqq h_v$, $\mathcal{C}_1 \coloneqq \mathcal{C}_v$. 
More specifically, at a point $x_c$ where $\|\hat{g}^i(x_c)^\top \nabla_{x_2} h_v(x_c)\|$ becomes too small, the algorithm looks for an alternative function $h_2$ satisfying $\mathcal{C}_2 \subset \mathcal{C}_v$ and for which $\|\hat{g}^i(x_c)^\top \nabla_{x_2} h_2(x_c)\|$ is sufficiently large. 
For illustration purposes, consider the simple example of a system with $x_1 = [x_{11},x_{12}]^\top  \in \mathbb{R}^2$, $x_2 = [x_{21},x_{22}^\top]^\top \in \mathbb{R}^2$ with $\hat{g}^i(x) = [2x_{21}, -2]^\top$ for some $i$, and $h_v(x) = 4.5 - \|x_2\|^2$ representing a sphere with radius of $\sqrt{4.5}$ (depicted with red in Fig. \ref{fig:sets}). The term $\|\hat{g}^i(x)^\top \nabla_{x_2} h_v(x)\|$ vanishes on the parabola $x_{22} = x_{21}^2$ (depicted with black in Fig. \ref{fig:sets}). When $x$ is close enough to that line, the proposed algorithm computes a new $h_2$; Fig. \ref{fig:sets} depicts the case when $x_c = [1,0.95]$ (green asterisk). A potential choice is then the ellipsoidal set $h_2 = -0.175x_{21}^2 + 0.2x_{21}x_{22} - 0.2x_{22}^2 + 0.15x_{21} - 0.0889x_{22} + 0.2$ (depicted with blue in Fig. \ref{fig:sets}). The new $h_2$ satisfies $C_2 \subset \mathcal{C}_v$, $h_2(x_c) > 0$, while $\|\hat{g}^i(x)^\top \nabla_{x_2} h_2(x)\|$ vanishes on the parabola $0.1x_{21} - 0.8x_{22} - 0.4x_{21}x_{22} + 0.7x_{21}^2 - 0.1779$ (depicted with purple in Fig. \ref{fig:sets}), with $\|\hat{g}^i(x_c)^\top \nabla_{x_2} h_2(x_c)\| = 1.6379$. 
The controller switches locally to $h_2$ until $\|\hat{g}^i(x)^\top \nabla_{x_2} h_v(x)\|$ becomes sufficiently large again. In case the system navigates along the line $\|\hat{g}^i(x)^\top \nabla_{x_2} h_v(x)\|=0$ to a region where $\|\hat{g}^i(x)^\top \nabla_{x_2} h_2(x)\|$ is also small, the procedure is repeated and a new $h_3$ is computed.
In the example given, the region around the point where both $\|\hat{g}^i(x)^\top \nabla_{x_2} h_v(x)\|=0$ and $\|\hat{g}^i(x)^\top \nabla_{x_2} h_2(x)\| =0$ is around the intersection of the black and purple lines in Fig. \ref{fig:sets}. Note, however, that in the specific example the system cannot navigate close to that point, since employment of $h_2$ will keep it bounded in $\mathcal{C}_2$ (the set defined by the blue line in the figure). Similarly to $h_v(x)$, the functions $h_j$, $j\geq 2$, depend explicitly on $e_2$, i.e., the difference $x_2 - x_{2,\textup{r}}(x_1)$. 
Note that, since we desire $\hat{g}^i(x)$ to be close to $g(x)$, and hence the interval $\mathsf{G}^i(x)$ to be small (see Lemma \ref{lem:estimate}), choosing a different $\hat{g}^i(x)$ from $\mathsf{G}^i(x)$ is not expected to significantly modify the term $\|\hat{g}^i(x)^\top \nabla_{x_2} h_v (x)\|$.

The aforementioned procedure is described more formally in the algorithm $\mathsf{SafetyAdaptation}$ (Alg. \ref{alg:main alg}). More specifically, for a given $j$, each $\rho_j$ indicates whether the system is close to the set where $\|\hat{g}^i(x)^\top \nabla_{x_2} h_j(x) \| = 0$. If that's the case ($\rho_j = 0$), then a new function $h_{j+1}$ is computed such that $C_{j+1} \subset C_j$, and in the switching point it holds that $h_{j+1}(x) > 0$ and $\|\hat{g}^i(x)^\top \nabla_{x_2} h_{j+1}(x) \|$ is sufficiently large; $h_{j+1}(x)$ is used then in the control law.  If $\|\hat{g}^i(x)^\top \nabla_{x_2} h_\iota(x) \|$ becomes sufficiently large, for some $\iota < j$, then $j$ is set back to $\iota$, and $h_\iota(x)$ can be safely used in the control law again. We also impose a hysteresis mechanism for the switching of the constants $\rho_\iota$ (lines 4, 10) through the parameters $\bar{\varepsilon}$, $\underline{\varepsilon}$ (see Fig. \ref{fig:rho_hyst}). 

The reasoning behind Alg. \ref{alg:main alg} is the following. By appropriately choosing the functions $h_\iota(x)$, the solutions of $\|\hat{g}^i(x)^\top \nabla_{x_2} h_\iota(x)\| = 0$ form curves of measure zero. Hence, the intersection of $2n$ such lines will be a single point in $\mathbb{R}^{2n}$ and hence the employment of a newly computed $h_{2n+1}$ will drive the system away from that point, resetting the algorithm. More formally, there is no $t \geq t_0$ and $j \geq 2n + 1$ such that $\|g^i(x(t))^\top \nabla_{x_2} h_j(x(t))\| \leq \underline{\varepsilon}$, implying that the 
iterator variable $j$ of Alg. \ref{alg:main alg} will never exceed $2n+1$. 

Following similar steps as with $h$ and $h_v$, we define continuously differentiable reciprocal functions $\beta_j:(0,\infty)\to \mathbb{R}$ that satisfy
\begin{align} \label{eq:beta cond general}
	\frac{1}{\alpha_{j_1}(h_j)} \leq \beta_j(h_j) \leq \frac{1}{\alpha_{j_2}(h_j)} 
\end{align}
for class $\mathcal{K}$ functions $\alpha_{j_1}$, $\alpha_{j_2}$, and $j \in \{1,\dots,2n+1\}$. 
The formal definition of the control law is then
\begin{subequations}\label{eq:control law general}
\begin{align} 
	u &= u_\textup{n}(x) -  \kappa_v \sigma_{\mu_v}(h_v(x))  u_\textup{b}(x,t) \\ 
	u_\textup{b} &\coloneqq \sum_{\iota=1}^{2n+1}\rho_\iota \prod_{j=1}^{\iota-1} (1-\rho_j) \beta_{j,\textup{d}} \frac{\hat{g}^i(x)^\top \nabla_{x_2} h_\iota(x) }{\|\hat{g}^i(x)^\top \nabla_{x_2} h_\iota(x)\|^2},
\end{align}
\end{subequations}
for all $t\in[t_i,t_{i+1})$, $i\in\bar{\mathbb{N}}$,
with $h_1 = h_v$, $\beta_1 = \beta_v$, $\beta_{j,\textup{d}} \coloneqq \frac{\textup{d} \beta_j(h_j(x))}{\textup{d} h_j(x)}$, and $\sigma_{\mu_v}$ as in \eqref{eq:sigma mu}. 

\begin{figure}
	\centering
	\includegraphics[width=0.45\textwidth]{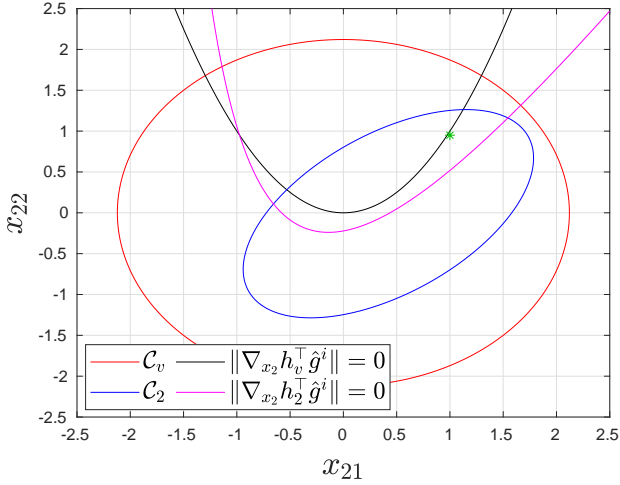}	
	\caption{Illustration of the adaptation algorithm (Algorithm \ref{alg:main alg}). At a point $x_c$ on the curve $\|g^i(x)^\top\nabla_{x_2}h_v(x)\| = 0$ (green point on black curve), the algorithm computes a new $h_2(x)$ such that $\mathcal{C}_2 \subset \mathcal{C}_v$ (blue curve) and $x_c$ is sufficiently far from the curve $\|g^i(x)^\top\nabla_{x_2}h_2(x)\| = 0$ (purple curve).   }
	\label{fig:sets}
\end{figure}

The $\mathsf{SafetyAdaptation}$ algorithm is run separately for each time interval $[t_i,t_{i+1})$, $i\in\bar{\mathbb{N}}$. That is, when a new measurement $(x^{i+1},\dot{x}^{i+1},u^{u+1})$ is received, the estimation of $g(x)$ is updated, a new $\hat{g}^{i+1}$ is computed by Alg. \ref{algo:overapprox-datapoints}, and $\mathsf{SafetyAdaptation}$ is reset ($j$ and $\rho_\iota$ are reset as in lines $1-3$). 
This is illustrated in  the algorithm $\mathsf{SafetyControl}$ (Alg. \ref{alg:run  alg}).


\begin{algorithm}
	\caption{$\mathsf{SafetyAdaptation}(g^i,h,\bar{\varepsilon},\underline{\varepsilon})$}\label{alg:main alg}
	\begin{algorithmic}[1]
		\State $\rho_\iota \leftarrow 1$, $\forall \iota \in \{1,\dots,n+1\}$;			
		\State $j \leftarrow 1$;  $h_1 \leftarrow h_v$;
		\While {$\mathsf{True}$}
				\If {$\|\hat{g}^i(x)^\top \nabla_{x_2} h_j(x)\| \leq \underline{\varepsilon} \land \rho_j = 1$ }
				\State $x_c \leftarrow x$;     $\rho_j \leftarrow 0$;
				\State Find $h_{j+1}$ such that 
				\begin{enumerate}
					\item $\mathcal{C}(h_{j+1}) \subset \mathcal{C}(h_j)$ 
					\item $h_{j+1}(x_c) > 0$
					\item $\|\hat{g}^i(x_c)^\top \nabla_{x_2} h_{j+1}(x_c)\| \geq \gamma_{j+1}$;
				\end{enumerate}
				\State $j \leftarrow j+1$;
				\EndIf
			\For {$\iota\in\{1,\dots,j-1\}$}
				\If {$\|\hat{g}^i(x)^\top \nabla_{x_2} h_\iota(x)\| > \bar{\varepsilon} \land \rho_\iota = 0$}
						\State $\rho_\iota \leftarrow 1$; $j \leftarrow i$;				
						\State Break;
				\EndIf
			\EndFor			
			\State Apply \eqref{eq:control law general}
		\EndWhile
		
	\end{algorithmic}
\end{algorithm}


Algorithm \ref{alg:main alg} imposes an extra, state-dependent switching to the closed-loop system, which can be written as
\begin{align} \label{eq:closedloop system}
\dot{x} = f(x) + g(x)u_\textup{n} - \kappa_v\sigma_{\mu_v}(h_v(x))   g(x)   u_j(x,t),
\end{align}
where $u_j \coloneqq \beta_{j,\textup{d}}(x)\frac{\hat{g}^i(x)^\top \nabla_{x_2}
h_j(x)}{\|\hat{g}^i(x)^\top \nabla_{x_2} h_j(x)\|^2}$, for some $i \in \bar{\mathbb{N}}$, $j \in \mathbb{N}$. 
The switching regions are not pre-defined, but detected online based on the trajectory of the system (line 4 of Alg. \ref{alg:main alg}). Moreover, by choosing $\gamma_j > \underline{\varepsilon}$, for all $j\geq 2$, we guarantee that the switching does not happen continuously, and hence the solution of \eqref{eq:closedloop system} is well-defined in $[t_0,t_{\max})$ for some $t_{\max} > t_0$, satisfying $x(t) \in \textup{Int}(\bar{\mathcal{C}})\cap  \textup{Int}(\mathcal{C}_v)$, for all $t\in[t_0,t_{\max})$ (similar to the proof of Lemma \ref{lem:local solution}).

\begin{figure}
	\centering
	\includegraphics[width=0.4\textwidth]{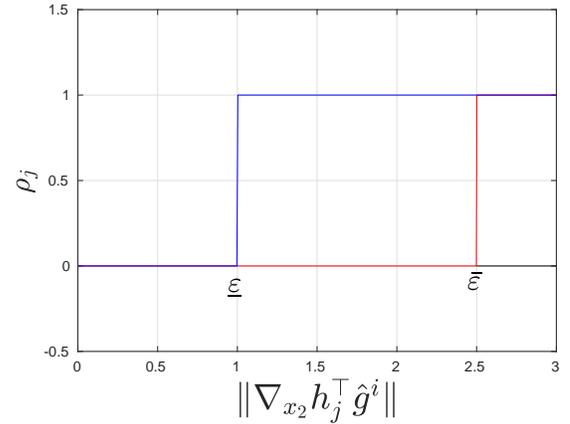}	
	\caption{Depiction of the hysteresis mechanism for avoidance of continuous switching; $\rho_j(t)$ is changed to $0$ if $\rho_j(t^-) = 1$ and $\|\hat{g}^i(x(t)) \nabla_{x_2}h_j(x(t)) \| \leq \underline{\varepsilon}$ and to $1$ if 
	$\rho_j(t^-) = 0$ and $\|\hat{g}^j(x(t)) \nabla_{x_2}h_j(x(t)) \| \geq \bar{\varepsilon}$.}
	\label{fig:rho_hyst}
\end{figure}

\begin{algorithm}
	\caption{$\mathsf{SafetyControl}(\bar{\varepsilon},\underline{\varepsilon})$}
	\label{alg:run  alg}
	\begin{algorithmic}[1]
		\State $i \leftarrow 1$;
		\State $\hat{g}^i \leftarrow \mathsf{rand}(n,m)$;
		\While{$\mathsf{True}$}
			\While{$\mathsf{NoNewMeasurement}$}
				\State $\mathsf{SafetyAdaptation(}\hat{g}^i,h,\bar{\varepsilon},\underline{\varepsilon})$			
			\EndWhile
			\State $i \leftarrow i+ 1$;
			\State $\hat{g}^i \leftarrow \mathsf{Approximate}(\dot{x}(t_i),x(t_i),u(t_i))$;
		\EndWhile
	\end{algorithmic}
\end{algorithm}

In the following, we prove that, for each $i\in\bar{\mathbb{N}}$, the iterator $j$ in Algorithm \ref{alg:main alg}, indicating the number of $h_j$ computed, reaches at most $2n+1$. 

For $i\in\bar{\mathbb{N}}$, let the sets $\mathcal{S}^i_j \coloneqq \{x \in \bar{\mathcal{C}} \cap \textup{Cl}(\mathcal{C}_{v,\mu_v}): \hat{g}^i(x)^\top \nabla_{x_2} h_j(x) = 0\}$, as well as the sets ${\mathfrak{S}}^i_j  \coloneqq\bigcap^j_{q=1} \mathcal{S}^i_q$, for all $j\in\{1,\dots,2n\}$. 
The following assumption is needed for the subsequent analysis:
\begin{assumption} \label{ass:dim}
	Let $i\in\bar{\mathbb{N}}$. The sets $\mathcal{S}_j^i$, ${\mathfrak{S}}^i_j$ are manifolds satisfying $\textup{dim}(\mathcal{S}_j^i) \leq 2n - 1$, $\forall j\in\{1,\dots,2n\}$, and $\textup{codim}( {\mathfrak{S}}^i_j \cap \mathcal{S}^i_{j+1}) \geq \textup{codim}({\mathfrak{S}}^i_j) + \textup{codim}(\mathcal{S}^i_{j+1})$, $\forall j\in\{1,\dots,2n-1\}$.
\end{assumption}

Assumption \ref{ass:dim} first  states that $\mathcal{S}_j^i$ are lower-dimension manifolds (e.g., lines on the plane or planes in $3$D space), which is common for curves like $\hat{g}^i(x)^\top \nabla_{x_2} h_j(x) = 0$. Additionally, the dimension condition 
is essentially a mild transversality condition on the manifolds $\mathcal{S}^i_j$ \cite{guillemin2010differential}. Note that Assumption \ref{ass:dim} implies that $\textup{dim}( {\mathfrak{S}}^i_j \cap \mathcal{S}^i_{j+1}) \leq \textup{dim}(\mathfrak{S}^i_j) + \textup{dim}(\mathcal{S}^i_{j+1}) - 2n$.

We also define the inflated sets $\widetilde{S}^i_j(\bar{\varepsilon}) \coloneqq \{ x \in \bar{\mathcal{C}} \cap \textup{Cl}(\mathcal{C}_{v,\mu_v}) :\| \hat{g}^i(x)^\top \nabla_{x_2} h_j(x) \| \leq \bar{\varepsilon}\}$ and the respective intersections $\widetilde{\mathfrak{S}}^i_j(\bar{\varepsilon}) \coloneqq \bigcap_{q=1}^j \widetilde{\mathcal{S}}^i_q(\bar{\varepsilon})$,
 $j\in\{1,\dots,2n\}$.
 
 
 We note that, based on the defined maximal solution, $x(t)$ evolves in 
 $\textup{Int}(\bar{\mathcal{C}})\cap \mathcal{C}_{v,\mu_v} \subset \bar{\mathcal{C}}\cap  \textup{Cl}(\mathcal{C}_{v,\mu_v})$ for $t\in[t_0,t_{\max})$. Nevertheless, we employ the closed set $\bar{\mathcal{C}}\cap \textup{Cl}(\mathcal{C}_{v,\mu_v})$ in the aforementioned definitions to ease the following technical presentation and avoid unnecessary notational jargon.

The solution $x(t)$, defined for $t\in [t_0,t_{\max})$, can be then decomposed based on the update times $t_i$, $i\in\bar{\mathbb{N}}$ as 
\begin{align*}
	x(t) = 
		\begin{cases}
		x^0(t), \ \ t\in [t_0,t_1)\\
		\vdots \\
		x^p(t), \ \ t\in [t_p, t_{\max}), \\
		\end{cases}
\end{align*}
for some $p\in\bar{\mathbb{N}}$. 
Note that, similar to \eqref{eq:control law local}, \eqref{eq:control law general} implies that the safety controller is activated close to the boundary of $\mathcal{C}_v$, i.e., in $\mathcal{C}_{v,\mu_v}$. Similarly to Section \ref{sec:control design}, we use the set 
$K_{\mu_v} = \{ i \in \bar{\mathbb{N}} : \exists [\tau_1,\tau_2) \subseteq [t_0,t_{\max}), \text{ with } \tau_1 \in [t_i, t_{i+1})\text{ s.t. } x(t) \in \textup{Int}(\bar{\mathcal{C}}) \cap \mathcal{C}_{v,\mu_v}, \forall t\in[\tau_1,\tau_2) \}$. 
In what follows, we focus on the solution parts $x^i(t)$, for $i\in K_{\mu_v}$. In particular, let a fixed $i \in K_{\mu_v}$, corresponding to the time interval $[t_{i},t_{i+1})$. Let also $\tau_{i} \coloneqq \inf\{ t \in [t_{i},t_{i+1}) : x(t) \in \textup{Int}(\bar{\mathcal{C}})\cap \mathcal{C}_{v,\mu_v}\}$, which is well-defined in view of the definition of $K_{\mu_v}$. 


Let the solution restriction $\widetilde{x}_i(t) \coloneqq x(t)$ for $t\in [\tau_i, t_{i+1})$.  Then it holds that $\widetilde{x}_i(t) \in \textup{Int}(\bar{\mathcal{C}})\cap \mathcal{C}_{v,\mu_v}$, and $\widetilde{x}_i(t)$ can be further decomposed 
based on the iterator $j$ of Algorithm \ref{alg:main alg} as 
\begin{align} \label{eq:sol decomp}
\widetilde{x}_i(t) =
	\begin{cases}		
	& y_{1,I_1}(t), \ \ t \in T_{1,1} = [\mathfrak{l}_{1,I_1},\mathfrak{r}_{1,I_1}),  \\
		  &	y_{1,I_1+1}(t), \ \ t \in T_{1,I_1+1} = [\mathfrak{l}_{1,I_1+1},\mathfrak{r}_{1,I_1+1}), \\
		  & \dots \\
		  & y_{1,{F_1}}(t), \ \ t \in T_{1,F_1} = [\mathfrak{l}_{1,F_1},\mathfrak{r}_{1,F_1}), \\		  
		  & y_{2,{I_2}}(t), \ \ t \in T_{2,I_2} = [\mathfrak{l}_{2,I_2},\mathfrak{r}_{2,I_2}), \\		  
		  & y_{2,{I_2+1}}(t), \ \ t \in T_{2,I_2+1} = [\mathfrak{l}_{2,I_2+1},\mathfrak{r}_{2,I_2+1}), \\ 
		  &\dots \\		  
		  & y_{2,{F_2}}(t), \ \ t \in T_{2,F_2} = [\mathfrak{l}_{2,F_2},\mathfrak{r}_{2,F_2}), \\		  
		  & y_{3,{I_3}}(t), \ \ t \in T_{3,I_3} = [\mathfrak{l}_{3,I_3},\mathfrak{r}_{3,I_3}), \\		  
  		  & y_{3,{I_3+1}}(t), \ \ t \in T_{3,I_3+1} = [\mathfrak{l}_{3,I_3+1},\mathfrak{r}_{3,I_3+1}), \\		  
  		  &  \dots 
	  \end{cases}
\end{align}
with $\mathfrak{l}_{1,I_1} = \tau_i$, $I_1=1$. 
The signal $y_{\ell,j}(t)$ stands for the solution when $h_j(x)$ is active for the $\ell$th time (note the reset of the $\mathsf{SafetyAdaptation}$ algorithm in line 15). 
Note also that $\ell$ is finite due to the hysteresis mechanism and the fact that $\gamma_j > \underline{\varepsilon}$; We denote by $\bar{\ell}$ its maximum value. 
The indices $F_\ell$, $I_\ell$ are the last and first values of $j$ (defining $h_j(x)$) at the $\ell$th time (with $I_1=1$).  
The respective time intervals are defined as
\begin{align*}
	 \mathfrak{r}_{\ell,j} &= \mathfrak{l}_{\ell,j+1} \\
	&\coloneqq  \inf\{ t \in T_{\ell,j}  : \| \hat{g}^i(x(t))^\top \nabla_{x_2} h_{j}(x(t)) \| \leq \underline{\varepsilon} \}, \\ 
	&\hspace{20mm} \forall j \in\{I_\ell,\dots,F_\ell-1\} \\
	\mathfrak{r}_{\ell,F_\ell} &= \mathfrak{l}_{\ell+1,I_{\ell+1}} \coloneqq \inf\{ t \in T_{\ell,F_{\ell}} : \\ 
	& \exists \iota \in \{1,\dots,F_{\ell}\} \text{ s.t. } \| \hat{g}^i(x(t))^\top \nabla_{x_2} h_\iota(x(t)) \| > \bar{\varepsilon}  \}, 
\end{align*}
with $\mathfrak{l}_{1,I_1} = \tau_i$, $I_1 = 1$. Note that $\iota$ from the definition of $\mathfrak{r}_{\ell,F_\ell}$ is equal to $I_{\ell + 1}$ (see line 11 of Algorithm \ref{alg:main alg}).

It holds that $T_{\ell,j} \subset [\tau_i,t_{i+1})$ (or $T_{\ell,j} \subset [\tau_i,t_{\max})$ in case $i=p$), for all $\ell \in\{1,\dots,\bar{\ell}\}$, $j\in[I_\ell,\dots,F_\ell]$.
Moreover, it holds  
that $y_{\ell,j}(t) \in \widetilde{\mathfrak{S}}^i_{j-1}(\bar{\varepsilon}) $, $\forall t\in T_{\ell,j}$, for all $\ell \in\{1,\dots,\bar{\ell}\}$, $j \in\{I_\ell, \dots,F_\ell\}$, where
$\widetilde{\mathfrak{S}}^i_{0}(\bar{\varepsilon}) \coloneqq   \bar{\mathcal{C}}\cap \textup{Cl}(\mathcal{C}_{v,\mu_v}) \backslash \bigcup_{j\in\{1,\dots,2n\}}\widetilde{\mathcal{S}}^i_{j}(\bar{\varepsilon})$.

In view of Assumption \ref{ass:dim} and since $\textup{Int}(\bar{\mathcal{C}})\cap \mathcal{C}_{v,\mu_v}$ is bounded, the sets $\widetilde{\mathfrak{S}}^i_j(\bar{\varepsilon})$ 
are constituted by the union of a finite number (at least $1$) of connected components, where $\|\hat{g}^i(x)^\top \nabla_{x_2} h_\iota(x)\| \leq \bar{\varepsilon}$ holds for all $\iota \in\{1,\dots,j\}$ i.e., 
$\widetilde{\mathfrak{S}}^i_j(\bar{\varepsilon}) \coloneqq \bigcup_{q \in \mathcal{L}^i_j} \widetilde{\mathcal{K}}^{i,j}_q(\bar{\varepsilon})$ for some finite index set $\mathcal{L}^i_j \subset \mathbb{N}$, for $j\in\{1,\dots,2n\}$. Since $\widetilde{\mathcal{K}}^{i,j}_{q}$ are closed, there exist a $\lambda^\ast$ such that $\widetilde{\mathcal{K}}^{i,j}_{q_1}(\lambda^\ast)\cap \widetilde{\mathcal{K}}^{i,j}_{q_2}(\lambda^\ast) = \emptyset$, for all $q_1,q_2 \in \mathcal{L}^i_j$ with $q_1\neq q_2$, $j\in\{1,\dots,2n\}$.


We show now that, by choosing a small enough $\bar{\varepsilon}$, each trajectory part $y_{\ell,j}(t)$ lies only in one of the $\widetilde{\mathcal{K}}^{\ell,j}_q$, 
for $j\in\{1,\dots,2n\}$.


\begin{proposition} \label{prop:epsilon-lambda}
	Let $\ell \in\{1,\dots,\bar{\ell}\}$, $j \in \{I_\ell,\dots,F_\ell\}$, $I_\ell \geq 1$ and assume that $F_\ell \leq 2n+1$. 
	Then the choice $\bar{\varepsilon} < \frac{\lambda^\ast}{\sqrt{2n}}$ guarantees that there exists a $q^\ast \in \mathcal{L}^i_{j-1}$ such that $y_{\ell,j}(t) \in \widetilde{\mathcal{K}}^{i,j-1}_{q^\ast}(\lambda^\ast)$, implying $y_{\ell,j}(t) \notin \widetilde{\mathcal{K}}^{i,j-1}_{q}(\lambda^\ast)$, $\forall q \in \mathcal{L}^i_{j-1}\backslash \{q^\ast\}$.
\end{proposition}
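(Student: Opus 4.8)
The plan is to exploit the continuity of the trajectory restriction $y_{\ell,j}$ together with the observation that, while $h_j$ is the active barrier, the $j-1$ preceding projected gradients are \emph{simultaneously} small, and then to aggregate these $j-1 \leq 2n$ componentwise bounds into a single Euclidean bound that falls below the separation threshold $\lambda^\ast$. The factor $\sqrt{2n}$ in the hypothesis is exactly the Cauchy--Schwarz constant produced by this aggregation, which is why the standing assumption $F_\ell \leq 2n+1$ (hence $j-1 \leq 2n$) is essential.

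First I would invoke the already established inclusion $y_{\ell,j}(t) \in \widetilde{\mathfrak{S}}^i_{j-1}(\bar\varepsilon)$ for all $t\in T_{\ell,j}$, which by definition means $\|\hat{g}^i(y_{\ell,j}(t))^\top \nabla_{x_2} h_q(y_{\ell,j}(t))\| \leq \bar\varepsilon$ holds for every $q\in\{1,\dots,j-1\}$ at once. Stacking these $j-1$ vectors and using $j-1 \leq F_\ell - 1 \leq 2n$, the squared aggregate satisfies $\sum_{q=1}^{j-1}\|\hat{g}^i(x)^\top \nabla_{x_2} h_q(x)\|^2 \leq 2n\,\bar\varepsilon^2$, so its Euclidean norm is at most $\sqrt{2n}\,\bar\varepsilon$. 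The choice $\bar\varepsilon < \lambda^\ast/\sqrt{2n}$ then forces $\sqrt{2n}\,\bar\varepsilon < \lambda^\ast$, which places $y_{\ell,j}(t)$ inside the $\lambda^\ast$-inflated intersection $\widetilde{\mathfrak{S}}^i_{j-1}(\lambda^\ast) = \bigcup_{q\in\mathcal{L}^i_{j-1}} \widetilde{\mathcal{K}}^{i,j-1}_q(\lambda^\ast)$ for every $t\in T_{\ell,j}$.

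Next I would close the argument topologically. By the choice of $\lambda^\ast$ the components $\widetilde{\mathcal{K}}^{i,j-1}_q(\lambda^\ast)$ are pairwise disjoint, and since $\mathcal{L}^i_{j-1}$ is finite and each component is closed, they have positive pairwise distance, so each is relatively clopen in their union. The restriction $y_{\ell,j}$ is continuous on the interval $T_{\ell,j}$, whence its image is a connected subset of $\widetilde{\mathfrak{S}}^i_{j-1}(\lambda^\ast)$; a connected set contained in a finite disjoint union of separated closed sets must lie entirely within one of them. This yields the unique $q^\ast\in\mathcal{L}^i_{j-1}$ with $y_{\ell,j}(t)\in \widetilde{\mathcal{K}}^{i,j-1}_{q^\ast}(\lambda^\ast)$ for all $t\in T_{\ell,j}$, and disjointness then gives $y_{\ell,j}(t)\notin \widetilde{\mathcal{K}}^{i,j-1}_{q}(\lambda^\ast)$ for every $q\neq q^\ast$.

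The step I expect to be the main obstacle is making the aggregation rigorous, namely being precise about which inflation level the separation constant $\lambda^\ast$ refers to so that the per-component $\bar\varepsilon$-bounds convert, via the $\sqrt{2n}$ factor, into a genuine containment in a single $\lambda^\ast$-component rather than merely in the componentwise intersection. The accompanying connectedness claim is standard once positive separation of finitely many disjoint closed components is recorded.
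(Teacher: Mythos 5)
Your proposal is correct and follows essentially the same route as the paper's proof: the inclusion $y_{\ell,j}(t)\in\widetilde{\mathfrak{S}}^i_{j-1}(\bar{\varepsilon})$, the aggregation of the $j-1\leq 2n$ componentwise bounds into $\sum_{\iota=1}^{j-1}\|\hat{g}^i(\cdot)^\top\nabla_{x_2}h_\iota(\cdot)\|^2\leq 2n\bar{\varepsilon}^2<(\lambda^\ast)^2$ (the paper's circumscribed/inscribed ball argument), and then disjointness of the components $\widetilde{\mathcal{K}}^{i,j-1}_q(\lambda^\ast)$ to isolate a single $q^\ast$. The only difference is that you spell out the connectedness-of-the-continuous-image step that the paper leaves implicit when it jumps from disjointness to ``belongs to only one component,'' which is a minor gain in rigor rather than a different approach.
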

\begin{proof}	
	Note first that it holds that $\sum_{\iota=1}^{j-1} \| \hat{g}^i(y_{\ell,j}(t))^\top \nabla_{x_2} h_\iota(y_{\ell,j}(t))  \|^2 \leq (j-1)\bar{\varepsilon}^2$, 
	since the latter forms the circumscribed circle of the cube 
	$\widetilde{\mathfrak{S}}^i_{j-1}(\bar{\varepsilon}) = \bigcap_{q=1}^{j-1} \widetilde{\mathcal{S}}^i_q(\bar{\varepsilon})$.
	 Since $j \leq 2n+1$, by choosing $\bar{\varepsilon} \leq \frac{\lambda^\ast}{\sqrt{2n}} \leq \frac{\lambda}{\sqrt{j-1}}$, we guarantee that $\sum_{\iota=1}^{j-1} \| \hat{g}(y_{\ell,j}(t))^\top \nabla_{x_2} h_\iota(y_{\ell,j}(t)) \|^2 \leq (\lambda^\ast)^2$, which is the inscribed circle of the cube $\widetilde{\mathfrak{S}}^i_{j-1}(\lambda^\ast)$, 
	 which implies that $\| \hat{g}^i(y_{\ell,j}(t))^\top \nabla_{x_2} h_\iota(y_{\ell,j}(t)) \| \leq \lambda^\ast$, for all $ \iota\in\{1,\dots,j-1\}$. Since the sets $\widetilde{\mathcal{K}}^{i,j-1}_q(\lambda^\ast)$ are disjoint, we conclude that $y_{\ell,j}(t)$ belongs to only one $\widetilde{\mathcal{K}}^{i,j-1}_{q^\ast}(\lambda^\ast)$, for some $q^\ast \in \mathcal{L}^i_{j-1}$, and $y_{\ell,j}(t) \notin \widetilde{\mathcal{K}}^{i,j-1}_{q}(\lambda^\ast)$, $\forall q \in \mathcal{L}^i_{j-1}\backslash \{q^\ast\}$.
	
	
	 
	
\end{proof}

By using Proposition \ref{prop:epsilon-lambda} and Assumption \ref{ass:dim},
we prove next that by choosing a sufficiently large $\gamma_{n+1}$, we guarantee that the iterator variable $j$ of Algorithm \ref{alg:main alg} is bounded by $2n+1$. 

\begin{proposition}
	There exist positive constants $\gamma$, $\omega$ such that if $\bar{\varepsilon} < \omega$ and  $\gamma_{n+1} \geq \gamma$, 
	there are no $t \geq t_0$ and $j \geq 2n + 1$ such that $\|g^i(x(t))^\top \nabla_{x_2} h_j(x(t))\| \leq \underline{\varepsilon}$.
\end{proposition}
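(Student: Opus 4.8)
The plan is to prove that the iterator $j$ in Alg.~\ref{alg:main alg} can never be incremented beyond $2n+1$; equivalently, once $h_{2n+1}$ becomes active, the controllability quantity $\|\hat{g}^i(x)^\top \nabla_{x_2} h_{2n+1}(x)\|$ stays strictly above $\underline{\varepsilon}$, so that the test on line~4 never fires for $h_{2n+1}$ and no $h_{2n+2}$ is ever constructed. The argument combines a dimension count forcing $\mathfrak{S}^i_{2n}$ to be zero-dimensional, the trajectory localization of Proposition~\ref{prop:epsilon-lambda}, and a Lipschitz estimate for the controllability term.

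First I would establish the dimension bound by induction on $j$. Since $\mathfrak{S}^i_{j+1} = \mathfrak{S}^i_j \cap \mathcal{S}^i_{j+1}$, the transversality part of Assumption~\ref{ass:dim} gives $\textup{codim}(\mathfrak{S}^i_{j+1}) \geq \textup{codim}(\mathfrak{S}^i_j) + \textup{codim}(\mathcal{S}^i_{j+1}) \geq \textup{codim}(\mathfrak{S}^i_j) + 1$, and together with $\textup{codim}(\mathfrak{S}^i_1) = \textup{codim}(\mathcal{S}^i_1) \geq 1$ this yields $\textup{codim}(\mathfrak{S}^i_j) \geq j$, i.e. $\textup{dim}(\mathfrak{S}^i_j) \leq 2n - j$, for all $j \in \{1,\dots,2n\}$. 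In particular $\textup{dim}(\mathfrak{S}^i_{2n}) \leq 0$, so $\mathfrak{S}^i_{2n}$ is a finite set of isolated points inside the compact set $\bar{\mathcal{C}} \cap \textup{Cl}(\mathcal{C}_{v,\mu_v})$ (the degenerate case $\mathfrak{S}^i_{2n} = \emptyset$ is even easier, since then no point can make all $2n$ terms vanish and $h_{2n+1}$ is never created in the first place).

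Next I would quantify the collapse of the inflated intersection onto these points. The set $\widetilde{\mathfrak{S}}^i_{2n}(\bar{\varepsilon})$ decreases monotonically as $\bar{\varepsilon} \downarrow 0$ and, by continuity of each map $x \mapsto \|\hat{g}^i(x)^\top \nabla_{x_2} h_\iota(x)\|$, satisfies $\bigcap_{\bar{\varepsilon}>0} \widetilde{\mathfrak{S}}^i_{2n}(\bar{\varepsilon}) = \mathfrak{S}^i_{2n}$; hence there is an $\omega>0$ so that for every $\bar{\varepsilon} < \omega$ each connected component of $\widetilde{\mathfrak{S}}^i_{2n}(\bar{\varepsilon})$ encloses exactly one isolated point and has diameter $\delta(\bar{\varepsilon})$ with $\delta(\bar{\varepsilon}) \to 0$ as $\bar{\varepsilon} \to 0$. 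Invoking Proposition~\ref{prop:epsilon-lambda} with $j = 2n+1$ (permissible since $F_\ell \leq 2n+1$) confines the restriction $y_{\ell,2n+1}(t)$ to a single such component, while the For loop of Alg.~\ref{alg:main alg} keeps this restriction inside $\widetilde{\mathfrak{S}}^i_{2n}(\bar{\varepsilon})$ for as long as $h_{2n+1}$ is active. Since the creation point $x_c$ satisfies the triggering condition $\|\hat{g}^i(x_c)^\top \nabla_{x_2} h_{2n}(x_c)\| \leq \underline{\varepsilon} \leq \bar{\varepsilon}$ and the earlier $2n-1$ terms were already below $\bar{\varepsilon}$ while $h_{2n}$ was active, we have $x_c \in \widetilde{\mathfrak{S}}^i_{2n}(\bar{\varepsilon})$, so both $x_c$ and the entire segment $y_{\ell,2n+1}(t)$ lie in one component of diameter at most $\delta(\bar{\varepsilon})$.

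Finally I would close with the Lipschitz estimate. Let $L$ be a Lipschitz constant of $x \mapsto \|\hat{g}^i(x)^\top \nabla_{x_2} h_{2n+1}(x)\|$ on the compact domain, finite because $\hat{g}^i$ is locally Lipschitz by Lemma~\ref{lem:estimate} and $h_{2n+1}$ is continuously differentiable. As $x_c$ and every $y_{\ell,2n+1}(t)$ lie within distance $\delta(\bar{\varepsilon})$ of one another, and as $h_{2n+1}$ is built so that $\|\hat{g}^i(x_c)^\top \nabla_{x_2} h_{2n+1}(x_c)\| \geq \gamma_{2n+1}$ (property~3 on line~6), we obtain $\|\hat{g}^i(y_{\ell,2n+1}(t))^\top \nabla_{x_2} h_{2n+1}(y_{\ell,2n+1}(t))\| \geq \gamma_{2n+1} - L\,\delta(\bar{\varepsilon})$. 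Setting $\gamma \coloneqq \underline{\varepsilon} + L\,\delta(\omega) + 1$ and using $\delta(\bar{\varepsilon}) \leq \delta(\omega)$ for $\bar{\varepsilon} < \omega$, the hypothesis $\gamma_{2n+1} \geq \gamma$ forces this quantity to exceed $\underline{\varepsilon}$, so line~4 never triggers for $h_{2n+1}$ and $j$ cannot reach $2n+2$. I expect the middle step to be the main obstacle: one must verify that the components of $\widetilde{\mathfrak{S}}^i_{2n}(\bar{\varepsilon})$ genuinely shrink to the isolated points at a controlled rate, and that $L$ can be taken uniformly over the admissible choices of $h_{2n+1}$, which in turn requires the barrier functions produced in line~6 to be drawn from a family with uniformly bounded first derivatives on the compact domain.
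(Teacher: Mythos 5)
Your proof is correct and follows essentially the same route as the paper's: the transversality condition of Assumption \ref{ass:dim} forces $\mathfrak{S}^i_{2n}$ to be a finite set of isolated points, Proposition \ref{prop:epsilon-lambda} confines the segment $y_{\ell,2n+1}$ together with its creation point $x_c$ to a single small component, and a Lipschitz estimate anchored at $x_c$ (where $\|\hat{g}^i(x_c)^\top \nabla_{x_2} h_{2n+1}(x_c)\| \geq \gamma_{2n+1}$) gives the desired lower bound once $\gamma_{2n+1}$ is large enough---exactly the paper's argument, with your qualitative component diameter $\delta(\bar{\varepsilon})$ playing the role of the paper's explicit radius $\bar{\varepsilon}\sqrt{2n}$. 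One correction: your dismissal of the case $\mathfrak{S}^i_{2n} = \emptyset$ (``no point can make all $2n$ terms vanish, so $h_{2n+1}$ is never created'') is not valid as stated, since creating $h_{2n+1}$ only requires the controllability terms to drop below $\underline{\varepsilon}$ and $\bar{\varepsilon}$, not to vanish exactly; the paper repairs this by compactness of the closed sets $\mathcal{S}^i_q$, choosing $\omega$ so that $\widetilde{\mathfrak{S}}^i_{2n}(\omega) = \emptyset$, whence $\bar{\varepsilon} < \omega$ makes reaching $j = 2n+1$ impossible in that case---an argument your own nested-intersection step already contains and should be invoked explicitly there.
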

\begin{proof}
Let $j=2n+1$ in Algorithm \ref{alg:main alg} for some $\ell \in\{1,\dots,\bar{\ell}\}$, i.e., 
\begin{equation} \label{eq:x(t) k=n+1}
	x(t) = y_{\ell,2n+1}(t) \in \widetilde{\mathfrak{S}}^i_{2n}(\bar{\varepsilon})  =  \bigcap_{q=1}^{2n} \widetilde{\mathcal{S}}^i_q(\bar{\varepsilon}), \ \ t \in T_{\ell,2n} 
\end{equation}
Assume that ${\mathfrak{S}}^i_{2n}  = \emptyset$. 
Since $\mathcal{S}^i_q$ are closed, it can be concluded that there exists a positive constant $\omega$ such that $\widetilde{\mathfrak{S}}^i_{2n}({\omega})  = \emptyset$. Hence, by choosing $\bar{\varepsilon} \leq \omega$, we guarantee that $\widetilde{\mathfrak{S}}^i_{2n}(\bar{\varepsilon})=\emptyset$, which contradicts \eqref{eq:x(t) k=n+1}. Hence, we conclude that $\mathfrak{S}^i_{2n}\neq \emptyset$, which, in view of Assumption \ref{ass:dim}, implies that $\text{dim}(\mathfrak{S}^i_{2n}) = 0$. Therefore, the set $\mathfrak{S}^i_{2n}$ is a zero-dimensional manifold consisting of a finite set of points $\{p_1,p_2,\dots,p_m\}$ of $\mathbb{R}^{2n}$ for some $m\in\mathbb{N}$. The set $\widetilde{\mathfrak{S}}^i_{2n}(\bar{\varepsilon})$ is the intersection of $\bar{\mathcal{C}}\cap\textup{Cl}(\mathcal{C}_{v,\mu_v})$
with a union of closed hypercubes around these points. In particular, based on the discussion prior  to Prop. \ref{prop:epsilon-lambda}, $\widetilde{\mathfrak{S}}^i_{2n}(\bar{\varepsilon}) =  \bigcup_{q\in \mathcal{L}^i_{2n}} \widetilde{K}^{i,{2n}}_q(\bar{\varepsilon})$, where $\widetilde{\mathcal{K}}^{i,{2n}}_q(\bar{\varepsilon})$ are the intersections of these closed hypercubes with $\bar{\mathcal{C}}\cap\textup{Cl}(\mathcal{C}_{v,\mu_v})$. According to Prop. \ref{prop:epsilon-lambda}, by choosing $\bar{\varepsilon}$ small enough, $\widetilde{K}^{i,{2n}}_q(\bar{\varepsilon})$ are disjoint, and hence $y_{\ell,2n+1}(t)$ evolves in the intersection of $\bar{\mathcal{C}}\cap\textup{Cl}(\mathcal{C}_{v,\mu_v})$ with the hypercube around one $p_\eta$, for some $\eta \in \{1,\dots,m\}$.
By considering the circumscribed circle of the hypercube, we conclude that $y_{\ell,2n+1}(t)$ evolves in the intersection of $\bar{\mathcal{C}}\cap\textup{Cl}(\mathcal{C}_{v,\mu_v})$ with the closed ball defined by
$\sum_{\iota=1}^{n} \| \hat{g}^i(y_{\ell,2n+1}(t))^\top \nabla h_\iota(y_{\ell,2n+1}(t))  \|^2 \leq {2n}\bar{\varepsilon}^2$, which is the closed ball $\bar{\mathcal{B}}(p_\eta,\bar{\varepsilon}\sqrt{2n})$.

Let now $x_c \coloneqq y_{\ell,2n+1}(\mathfrak{l}_{\ell,2n+1}) = y_{\ell,2n}(\mathfrak{r}_{\ell,2n})$, i.e., the first point when $\|\hat{g}^i(y_{\ell,2n}(t)) \nabla_{x_2} h_{2n}(y_{\ell,2n}(t)) \| \leq \underline{\varepsilon}$ occurs, where it holds that $\|\hat{g}^i(x_c) \nabla_{x_2} h_{2n+1}(x_c) \| \geq \gamma_{2n+1}$. 
Consider now $x \in \bar{\mathcal{B}}(p_\eta,\bar{\varepsilon}\sqrt{2n})$, representing the solution $y_{\ell,2n+1}$. 
By adding and subtracting $\|\hat{g}^i(x) \nabla_{x_2} h_{2n+1}(x) \|$, we obtain 
\small
\begin{multline*}
	\|\hat{g}^i(x_c) \nabla_{x_2} h_{2n+1}(x_c) \pm \hat{g}^i(x) \nabla_{x_2} h_{2n+1}(x) \|
	\leq \\ \|\hat{g}^i(x_c) \nabla_{x_2} h_{2n+1}(x_c) - \hat{g}^i(x) \nabla_{x_2} h_{2n+1}(x) \|  + 
	\|\hat{g}^i(x) \nabla_{x_2} h_{2n+1}(x) \|  \\ \leq  
	\widetilde{L}_{i,2n+1}\|x_c - x\| +  \|\hat{g}^i(x) \nabla_{x_2} h_{2n+1}(x) \|,
\end{multline*}
\normalsize
where $\widetilde{L}_{i,2n+1}$ is the Lipschitz constant of the function $\hat{g}^i(x) \nabla_{x_2} h_{2n+1}(x)$ in $\bar{\mathcal{B}}(p_\eta,\bar{\varepsilon}\sqrt{2n})$. Since $x, x_c\in \bar{\mathcal{B}}(p_\eta,\bar{\varepsilon}\sqrt{2n})$, it holds that $\|x_c - x\| \leq 2\bar{\varepsilon}\sqrt{2n}$.  By also using $\|\hat{g}^i(x_c) \nabla_{x_2} h_{2n+1}(x_c) \| \geq \gamma_{2n+1}$, we obtain 
\begin{align*}
	\gamma_{2n+1} \leq 2\widetilde{L}_{i,2n+1}\bar{\varepsilon}\sqrt{2n} + \|\hat{g}^i(x) \nabla_{x_2} h_{2n+1}(x) \|. 
\end{align*}
By choosing $\gamma_{2n+1} \geq \gamma \coloneqq 2\widetilde{L}_{k,n+1}\bar{\varepsilon}\sqrt{2n} + \underline{\varepsilon} + \chi$, where $\chi$ is an arbitrary positive constant, we guarantee that $\|\hat{g}^i(x) \nabla_{x_2} h_{2n+1}(x) \| \geq \underline{\varepsilon} + \chi$, for all $x \in \bar{\mathcal{B}}(p_\eta,\bar{\varepsilon}\sqrt{2n})$. Therefore, since $\widetilde{\mathfrak{S}}^i_{2n}(\bar{\varepsilon}) \subset \bar{\mathcal{B}}(p_\eta,\bar{\varepsilon}\sqrt{2n})$,  it holds that  $\widetilde{\mathfrak{S}}^i_{2n}(\bar{\varepsilon}) \bigcap \widetilde{\mathcal{S}}^i_{2n+1}(\bar{\varepsilon}) = \emptyset$, implying that the condition of line $4$ in Algorithm \ref{alg:main alg}, which would lead to $j = 2n+2$, cannot be satisfied when $j = 2n+1$, leading to the conclusion of the proof.
\end{proof}

We are now ready to state the main result of this section. 

\begin{theorem} \label{th:final}
	Let a system evolve according to \eqref{eq:system}, \eqref{eq:control law general} and a set $\mathcal{C}$, satisfying $x_1(t_0)\in \textup{Int}(\mathcal{C})$ for a positive $t_0 \geq 0$.  Let a constant $\mu_v' \in (0,\mu_v)$ and assume there exists a positive constant $\varepsilon$ such that 
	\begin{align*}
	    \bar{g}_{\mu'_v} \bar{H}_{v,\mu'_v} < \varepsilon \sigma_{\mu_v}(\mu'_v),
	\end{align*}
	where $\bar{g}_{\mu_v'} \coloneqq \sup_{\substack{x\in \mathcal{C}_{v,\mu_v'}\\i\in K_{\mu_v'} }} \|\widetilde{g}^i(x)\|$, and $\bar{H}_{v,\mu'_v} \coloneqq \max_{\iota\in\{1,\dots,2n+1\}}\{ \sup_{x\in\mathcal{C}_{v,\mu'_v}}\{ \|\nabla_{x_2} h_\iota(x)\| \}\}$. Under Assumptions \ref{ass:C in ball}, \ref{ass:grad h}, and \ref{ass:dim}, it holds that $x_1(t) \in \textup{Int}(\mathcal{C})$, and all closed-loop signals are bounded, for all $t\geq t_0$.
\end{theorem}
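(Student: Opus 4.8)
The plan is to follow closely the structure of the proof of Theorem~\ref{th:local}, using Proposition~\ref{prop:epsilon-lambda} and the subsequent proposition to reduce the switched dynamics \eqref{eq:closedloop system} to an analysis over \emph{finitely many} barrier functions. First I would invoke the well-posedness discussion preceding the theorem to fix a unique maximal solution $x:[t_0,t_{\max})\to\textup{Int}(\bar{\mathcal{C}})\cap\textup{Int}(\mathcal{C}_v)$. The boundedness of $\beta$, and hence the conclusion $x_1(t)\in\textup{Int}(\mathcal{C})$ together with the boundedness of $x_1,x_2,x_{2,\textup{r}},\dot{x}_{2,\textup{r}}$, is then obtained verbatim from the first half of the proof of Theorem~\ref{th:local}: the reference $x_{2,\textup{r}}$ and the $\beta$-dynamics \eqref{eq:beta_dot final} are unaffected by the switching mechanism on $h_v$, and that argument only requires $e_2$ to be bounded, which holds as long as $x(t)\in\textup{Int}(\mathcal{C}_v)$.

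The core of the proof is the boundedness of the \emph{active} reciprocal barrier. Following the contradiction scheme of Theorem~\ref{th:local}, I would assume $\lim_{t\to t_{\max}}h_v(x(t))=0$. By the two propositions preceding Theorem~\ref{th:final}, the iterator $j$ never exceeds $2n+1$ and each $h_\iota$ is re-activated only finitely often, so the decomposition \eqref{eq:sol decomp} has finitely many pieces and there is a last active index $j^\ast\in\{1,\dots,2n+1\}$ governing the control for $t$ near $t_{\max}$. On the corresponding interval I would differentiate $\beta_{j^\ast}$ along \eqref{eq:closedloop system}, substitute $g=\hat{g}^i-\widetilde{g}^i$, and use $\nabla_{x_2}h_{j^\ast}^\top\hat{g}^i\,\hat{g}^{i\top}\nabla_{x_2}h_{j^\ast}=\|\hat{g}^{i\top}\nabla_{x_2}h_{j^\ast}\|^2$ to obtain
\begin{align*}
\dot{\beta}_{j^\ast}=\beta_{j^\ast,\textup{d}}f_\textup{n}-\kappa_v\sigma_{\mu_v}(h_v)\beta_{j^\ast,\textup{d}}^2+\kappa_v\sigma_{\mu_v}(h_v)\beta_{j^\ast,\textup{d}}^2\frac{\nabla_{x_2}h_{j^\ast}^\top\widetilde{g}^i\,\hat{g}^{i\top}\nabla_{x_2}h_{j^\ast}}{\|\hat{g}^{i\top}\nabla_{x_2}h_{j^\ast}\|^2},
\end{align*}
exactly as in the single-barrier case but with $h_v$ replaced by $h_{j^\ast}$. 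The decisive point is that while $h_{j^\ast}$ is active, the switching rule (line~4 of Algorithm~\ref{alg:main alg}) keeps $\|\hat{g}^{i\top}\nabla_{x_2}h_{j^\ast}\|>\underline{\varepsilon}\geq\varepsilon$, so the error term is bounded by $\bar{g}_{\mu'_v}\bar{H}_{v,\mu'_v}/\varepsilon$. The hypothesis $\bar{g}_{\mu'_v}\bar{H}_{v,\mu'_v}<\varepsilon\sigma_{\mu_v}(\mu'_v)$ then renders $\epsilon_v\coloneqq\sigma_{\mu_v}(\mu'_v)-\bar{g}_{\mu'_v}\bar{H}_{v,\mu'_v}/\varepsilon>0$, yielding the same inequality $\dot{\beta}_{j^\ast}\leq-\kappa_v\epsilon_v|\beta_{j^\ast,\textup{d}}|\bigl(|\beta_{j^\ast,\textup{d}}|-\bar{f}_\textup{n}/(\kappa_v\epsilon_v)\bigr)$ as in Theorem~\ref{th:local}, where $\bar{f}_\textup{n}$ bounds $f_\textup{n}$ uniformly over the finitely many $h_\iota$ on the compact set $\mathcal{C}_{v,\mu'_v}$.

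To close the contradiction I would exploit the nesting $\mathcal{C}_{j^\ast}\subset\mathcal{C}_v$: since $\lim_{t\to t_{\max}}h_v(x(t))=0$ while $x(t)\in\textup{Int}(\mathcal{C}_{j^\ast})$, the trajectory approaches $\partial\mathcal{C}_v\cap\textup{Cl}(\mathcal{C}_{j^\ast})$, and because $\textup{Int}(\mathcal{C}_{j^\ast})\subset\textup{Int}(\mathcal{C}_v)$, every such limit point lies on $\partial\mathcal{C}_{j^\ast}$; hence $h_{j^\ast}(x(t))\to0$ and $\beta_{j^\ast}\to\infty$, so $|\beta_{j^\ast,\textup{d}}|\to\infty$. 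The differential inequality above then forces $\beta_{j^\ast}<0$ eventually, contradicting $\beta_{j^\ast}>0$. Therefore $\beta_{j^\ast}$, and by the nesting every outer barrier including $\beta_v$, stays bounded, so $h_v(x(t))\geq\underline{h}_v>0$ and $x(t)$ remains in a compact subset of $\textup{Int}(\bar{\mathcal{C}})\cap\textup{Int}(\mathcal{C}_v)$. Boundedness of all closed-loop signals and $t_{\max}=\infty$ then follow from \cite[Th.~2.1.4]{bressan2007introduction}, as in Theorem~\ref{th:local}.

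The main obstacle is the bookkeeping of the switching rather than the per-interval estimate. One must verify that no blow-up of any $\beta_j$ can be triggered \emph{at} a switch instant --- which holds because each newly selected $h_{j+1}$ satisfies $h_{j+1}(x_c)>0$, so $\beta_{j+1}(x_c)$ is finite --- and that the active index genuinely stabilizes before $t_{\max}$, so that the contradiction may be applied to a single fixed $j^\ast$. This is precisely where Proposition~\ref{prop:epsilon-lambda}, the subsequent proposition, and the hysteresis/$\gamma_j>\underline{\varepsilon}$ mechanism are essential: they guarantee $j\leq 2n+1$ together with finitely many reactivations, reducing the switched problem to finitely many invocations of the single-barrier estimate. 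The only genuinely new analytic ingredient beyond this bookkeeping is the transfer of boundedness through the nesting $\mathcal{C}_{j}\subset\mathcal{C}_v$, which converts control of the innermost active barrier into control of $h_v$.
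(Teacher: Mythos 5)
Your proposal is correct and follows the paper's proof very closely: the reuse of the first half of Theorem~\ref{th:local} for $\beta$ and $x_1$, the contradiction hypothesis $\lim_{t\to t_{\max}}h_v(x(t))=0$, the finite decomposition \eqref{eq:sol decomp} of the solution by active barrier index (justified by the hysteresis mechanism and the two propositions), and the per-interval differential inequality with $\epsilon_v=\sigma_{\mu_v}(\mu'_v)-\bar{g}_{\mu'_v}\bar{H}_{v,\mu'_v}/\varepsilon>0$ are all exactly the paper's steps; your reading $\underline{\varepsilon}\geq\varepsilon$ also sensibly reconciles the statement's $\varepsilon$ with the $\underline{\varepsilon}$ used in the paper's estimates. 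The one genuine divergence is how the contradiction is closed. The paper bounds $\beta_j$ on \emph{every} interval $T_{\ell,j}$ and transfers these bounds to $\beta_v$ through the pointwise ordering $h_j(\cdot)<h_v(\cdot)$, producing a finite $\bar{\beta}_v$ that contradicts $\beta_v\to\infty$. You instead localize to the \emph{last} active interval with index $j^\ast$ (which exists because the switching sequence is finite) and argue topologically: the inclusion $\mathcal{C}_{j^\ast}\subset\mathcal{C}_v$ forces every limit point of the trajectory to lie on $\partial\mathcal{C}_{j^\ast}$ when $h_v\to 0$, hence $h_{j^\ast}(x(t))\to 0$, $|\beta_{j^\ast,\textup{d}}|\to\infty$, and the differential inequality makes $\beta_{j^\ast}$ eventually decrease, a contradiction. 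Your variant buys a small but real advantage: it uses only the set inclusion $\mathcal{C}(h_{j+1})\subset\mathcal{C}(h_j)$ that line~6 of Algorithm~\ref{alg:main alg} actually enforces, whereas the paper's transfer step invokes the pointwise inequality $h_j(\cdot)<h_v(\cdot)$, which is strictly stronger than set inclusion and is not guaranteed by the algorithm's construction. Conversely, your argument leans harder on the existence of a last interval, i.e., on finiteness of the number of switches before $t_{\max}$, while the paper's all-intervals bound is indifferent to which piece is last; both ultimately rest on the same hysteresis/$\gamma_j>\underline{\varepsilon}$ mechanism and Proposition~\ref{prop:epsilon-lambda}, so the difference is one of bookkeeping rather than substance.
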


\begin{proof}
By following the arguments of the first part of the proof of Theorem \ref{th:local}, we can obtain the boundedness of $\beta(h(x_1(t))) \leq \bar{\beta}$ for some constant $\bar{\beta}$ and $t\in[t_0,t_{\max})$, implying the boundedness of $x_1(t)$ in a compact set $\widetilde{\mathcal{C}} \subset \textup{Int}(\mathcal{C})$, and the boundedness of $x_{2,\textup{r}}(x_1(t))$, $e_2(x(t))$, and $\dot{x}_{2,r}(x_1(t))$ for all $t\in[t_0,t_{\max})$. 
Next, assume that $t_{\max}$ is finite and $\lim_{t\to t_{\max}}h_v(x(t)) = 0$, implying $\lim_{t\to t_{\max}}\beta_v(h_v(x(t))) = \infty$, which we aim to contradict. 

Let $\bar{i} \coloneqq \max\{i \in \bar{\mathbb{N}} : t_i < t_{\max} \}$, and let $t' \coloneqq \inf\{ t'' \geq t_{\bar{i}} : x(t) \in \mathcal{C}_{v,\mu'_v}, \forall t\in[t'',t_{\max}) \}$. 
Then it holds that $x(t) \in \mathcal{C}_{v,\mu'_v}$ for all $t \in [t',t_{\max})$, and $x(t) \in \widetilde{\mathcal{C}}_v \coloneqq \{ x\in\mathbb{R}^{2n}: h_v(x) \geq \min_{t\in[t_0,t']}\{h(x(t))\} > 0 \}$, for all $t\in[t_0,t']$. Moreover, note that $\bar{i}\in K_{v,\mu'_v}$ and $\sigma_{\mu_v}(h_v(x(t))) >  \sigma_{\mu_v}(\mu'_v) > 0$, for all $t\in [t',t_{\max})$. 

Let the solution $x(t) = x^{\bar{i}}(t)$ for $t\in[t',t_{\max})$, be decomposed, similarly to \eqref{eq:sol decomp}, as
\begin{align*}
x^{\bar{i}}(t) =
\begin{cases}		
& y_{1,I_1}(t), \ \ t \in T_{1,1} = [t',\mathfrak{r}_{1,{I_1}}),  \\
& \dots \\
& y_{1,{F_1}}(t), \ \ t \in T_{1,F_1} = [\mathfrak{l}_{1,F_1},\mathfrak{r}_{1,F_1}), \\		  
& y_{2,{I_2}}(t), \ \ t \in T_{2,I_2} = [\mathfrak{l}_{2,I_2},\mathfrak{r}_{2,I_2}), \\		  
&\dots \\		  
& y_{2,{F_2}}(t), \ \ t \in T_{2,F_2} = [\mathfrak{l}_{2,F_2},\mathfrak{r}_{2,F_2}), \\		  
&  \dots 
\end{cases}
\end{align*}
with $I_\ell \leq F_\ell$ $\in \{1,\dots,2n+1\}$, for all $\ell \in\{1,\dots,\bar{\ell}\}$. Moreover, $T_{\ell,j} \subseteq [t',t_{\max})$, $j\in\{I_\ell,\dots,F_\ell\}$. 
The functions $\beta_j$, defined in \eqref{eq:beta cond general}, satisfy 
\begin{align*}
	    \dot{\beta}_j =& \beta_{j,\textup{d}} f_{j,\textup{n}}(y_{\ell,j}) -\\ 
	    &\hspace{-5mm} \kappa_v \sigma_{\mu_v}(h_v) \beta_{j,\textup{d}}^2 \nabla_{x_2} h_j(y_{\ell,j})^\top g(y_{\ell,j}) \frac{\hat{g}^i(y_{\ell,j})^\top \nabla_{x_2} h_v(y_{\ell,j}))}{ \|\hat{g}^i(y_{\ell,j})^\top \nabla_{x_2} h_v(y_{\ell,j}))\|^2} \\
	    \leq &|\beta_{j,\textup{d}}| \bar{f}_{j,\textup{n}} \\
	    &-\kappa_v \sigma_{\mu_v}(\mu'_v) \beta_{j,\textup{d}}^2   + \kappa_v \beta_{j,\textup{d}}^2 \frac{\widetilde{g}^i(y_{\ell,j})^\top \nabla_{x_2} h_j(y_{\ell,j})}{ \|\hat{g}^i(y_{\ell,j})^\top \nabla_{x_2} h_j(y_{\ell,j})\|} 
	\end{align*}
for all $t\in T_{\ell,j}$, where $f_{j,\textup{n}}(x) \coloneqq \nabla_{x_1} h_j(\mathsf{x})^\top x_2 + \nabla_{x_2} h_j(x)( f(x) + g(x)u_\textup{n}(x))$, for which it holds $\|f_{j,\textup{n}}(y_{\ell,j}(t))\| \leq \bar{f}_{j,\textup{n}}$ due to the continuity of $h_j(\cdot)$, $f(\cdot)$, $g(\cdot)$, $u_\textup{n}(\cdot)$, and the boundedness of $y(\ell,j)$ for all $t\in T_{\ell,j}$, similarly to the proof of Theorem \ref{th:local}.
In addition, since $\sup_{x\in \mathcal{C}_{v,\mu'_v}} \|\nabla_{x_2} h_j(x)\| \leq \bar{H}_{v,\mu'_v}$, we obtain 
\begin{align*}
    \dot{\beta}_j \leq& |\beta_{j,\textup{d}}| \bar{f}_{j,\textup{n}} 
     -\kappa_v \sigma_{\mu_v}(\mu'_v) \beta_{j,\textup{d}}^2 + \kappa_v \beta_{j,\textup{d}}^2 \frac{\bar{g}_{\mu'_v} \bar{H}_{v,\mu'_v}}{\underline{\varepsilon}},
\end{align*}
for all $t\in T_{\ell,j}$. By setting $\epsilon_{v} \coloneqq \sigma_{\mu_v}(\mu'_v) - \frac{\bar{g}_{\mu'_v} \bar{H}_{v,\mu'_v}}{\underline{\varepsilon}}$, which is positive for $t \in T_{\ell,j}$, we obtain 
\begin{align*}
	\dot{\beta}_j &\leq  - \kappa_v \epsilon_{v} |\beta_{j,\textup{d}}| \left( |\beta_{j,\textup{d}}|  - \frac{\bar{f}_{j,\textup{n}} }{\kappa_v \epsilon_v} \right), 
\end{align*}
for all $t\in T_{\ell,j}$. Therefore, it holds that $\dot{\beta}_j < 0$ when $|\beta_{j,\textup{d}}| > \sqrt{ \frac{\bar{f}_{j,\textup{n}} }{\kappa_v \epsilon_v} }$. By invoking similar argument as in the proof of Theorem \ref{th:local}, we conclude the boundedness of $\beta_j(h_j(y_{\ell,j}(t))) < \bar{\beta}_j$ for all $t\in T_{\ell,j}$.
Note that the aforementioned result holds for any $j\in\{I_\ell,\dots,F_\ell\}$. 
At the switching time instants $\mathfrak{l}_{\ell,j} = \mathfrak{r}_{\ell,j-1}$ it holds that $h_j(x(\mathfrak{l}_{\ell,j})) > 0$ and hence the functions $\beta_j$ are well-defined. 
Since $h_j(\cdot) < h_v(\cdot)$, we conclude that $\beta_v(y_{\ell,j}(t)) < \bar{\beta}_j(h_j(y_{\ell,j}(t)))$, for all $t \in T_{\ell,j}$, $j\in\{I_\ell,\dots,F_\ell\}$, which implies that there exists a finite constant $\bar{\beta}_v$ such that $\beta_v(h_v(x(t))) \leq \bar{\beta}_v$, for all $t\in [t',t_{\max})$, which contradicts $\lim_{t\to t_{\max}} \beta_v(h_v(x(t))) = \infty$.  
\begin{figure}
	\centering
	\includegraphics[width=0.45\textwidth]{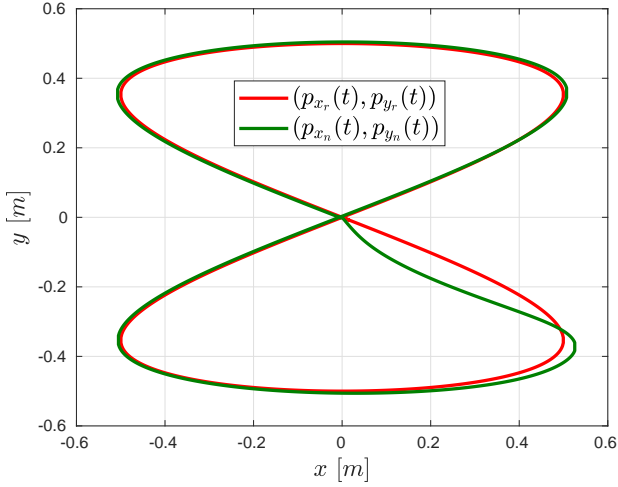}
	\caption{The desired reference trajectory $(p_{x_r}(t),p_{y_r}(t))$ (red), and the nominal system trajectory $(p_{x_\textup{n}}(t), p_{y_\textup{n}}(t))$ (green) under $u_\textup{n}$, for $t\in[0,8.5]$ seconds.}
	\label{fig:traj nominal}
\end{figure}
By also using $x(t) \in \widetilde{\mathcal{C}}_v$, for all $t\in[t_0,t']$ and the compactness of $\widetilde{\mathcal{C}}_v$, we conclude the boundedness of $x(t)$ and $\beta_v(h_v(x(t)))$, for all $t\in[t_0,t_{\max})$.  By further invoking  \cite[Th. 2.1.4]{bressan2007introduction}, we conclude that $t_{\max} = \infty$, and the boundedness of $x(t)$, $\beta(h(x_1(t)))$, $u(x(t))$ for all $t\in[t_0,\infty)$.
\end{proof}

\begin{figure}
	\centering
	\includegraphics[width=0.45\textwidth]{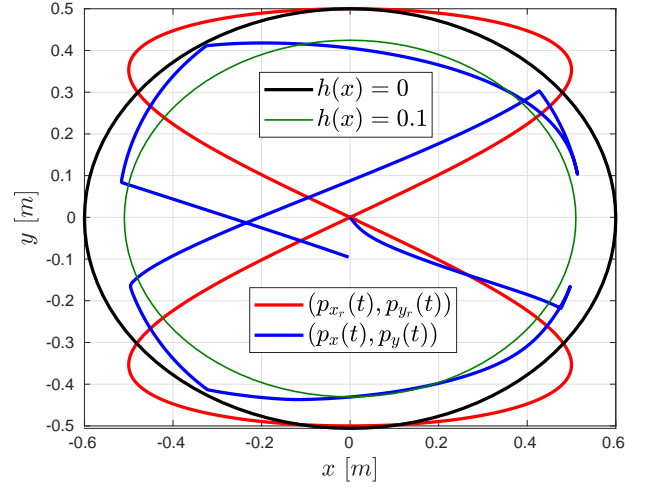}
	\caption{The desired reference trajectory $(p_{x_r}(t),p_{y_r}(t))$ (red), and the system trajectory $(p_{x}(t), p_{y}(t))$ (blue), for $t\in[0,8.5]$ seconds, along with the boundaries  $h(x)=0$ (black) and $h(x)=0.1$ (green).}
	\label{fig:traj safe}
\end{figure}

\section{Simulation Results} \label{sec:simulations}

We validate the proposed algorithm with a simulation example. More specifically, we consider an underactuated unmanned aerial vehicle (UAV) with state variables $x = [x_1,\dots,x_6] = [p_x,p_y,\phi,v_x,v_y,\omega]^\top$ evolving subject to the dynamics $\dot{p}_x = v_x$, $\dot{p}_y = v_y$, $\dot{\phi} = \omega$, and
\begin{align*}
    m\dot{v}_x &= -C_D^v v_x - u_1 \sin(\phi) - u_2 \sin(\phi) \\
    m\dot{v}_y &= -(mg + C_D^v v_y) + u_1 \cos(\phi) + u_2 \cos(\phi) \\
    2I\dot{\omega} &= - C_D^\phi \omega - l u_1 + l u_2,
\end{align*}
where $m=1.25$, $I=0.03$ are the quadrotor's mass and moment of inertia, respectively, $g=9.81$ is the gravity constant, $l=0.5$ is the arm length, and $C_D^v=0.25$, $C_D^\phi=0.02255$ are aerodynamic constants. We consider that the UAV aims to track the helicoidal trajectory $p_{x_r} \coloneqq \frac{1}{2}\sin(\frac{3}{2}t)$,  $p_{y_r} \coloneqq \frac{1}{2}\sin(\frac{3}{4}t)$ (see Fig. \ref{fig:traj nominal}) via an appropriately designed nominal control input $u_\textup{n}$. We wish to bound the position $(p_x,p_y)$ of the UAV 
through the sphere  $h(x) = 0.36 - \|[p_x,p_y]^\top\|^2$. We use the local safety controller \eqref{eq:control law general}, with $\beta = \frac{1}{h}$, $\mu_x = 0.1$,  $\kappa_x = 1$, $h_v(x) = 100 - e_2^2$, $\beta_v \coloneqq \frac{1}{h_v}$, $\mu_v = 100$, $\kappa_v = 100$, 
while setting  $\underline{\varepsilon} = 0.05$, $\bar{\varepsilon} = 5$ in Alg. \ref{alg:main alg}. The data measurement and hence the execution of Alg. \ref{algo:overapprox-datapoints} occurs every $0.1$ seconds. 
For the case when $\|\hat{g}^i(x_c)^\top \nabla{x_2}h_j(x_c)\| \leq \underline{\varepsilon}$ for some $x_c$, we use an optimization solver that aims to find an ellipsoidal $h_{j+1}(x)$ such that $\mathcal{C}_{j+1} \subset \mathcal{C}_j$ and maximize the value $\|\hat{g}^i(x_c)^\top \nabla{x_2}h_{j+1}(x_c)\|$. 

The simulation results from the initial condition $[0,0.2,0,-0.3,0,0]^\top$ are illustrated in Figs. \ref{fig:traj safe}-\ref{fig:G_error} for $t\in[0, 8.5]$ seconds. In particular, Fig. \ref{fig:traj safe} depicts the reference helicoidal trajectory (red) and the system trajectory under the safety controller \eqref{eq:control law general} (blue) along with the boundaries of the barrier $h(x) = 0$ (black) and local barrier function $h(x) = \mu = 0.1$ (green). 
One can verify that the system position is successfully confined in the set $\textup{Int}(\mathcal{C})$ defined by $h(x) > 0$, verifying thus the theoretical findings. Moreover, Fig. \ref{fig:error_2} depicts the evolution of the error $e_2(x(t)) = [e_{2_1}(x(t)),e_{2_2}(x(t))]$, which is successfully confined in the sphere imposed by $h_2(x)$; Fig. \ref{fig:rho_1} depicts the evolution of the discrete variable $\rho_1(t)$, from which it is concluded that $\|g^i(x)^\top \nabla_{x_2} h_v(x)\|$ falls below $ \underline{\varepsilon}$ several times and a new function $h_2(x)$ is found, as per Algorithm \ref{alg:main alg}. While $h_2$ is activated, however, $\|g^i(x)^\top \nabla_{x_2} h_2(x)\|$ is always above $\underline{\varepsilon}$, not requiring thus a new $h_3(x)$. Finally, Fig. \ref{fig:control inputs} depicts the required control input and Fig. \ref{fig:G_error} shows the evolution of the error norm $\|\widetilde{g}^i(x)\|$, $i\in\mathbb{N}$. From Fig. \ref{fig:G_error} it can be verified  that the condition \eqref{eq:local cond2} does not always hold; nevertheless, the proposed control algorithm is still able to guarantee the system safety. Hence, on can conclude that it is not necessary for Theorems \ref{th:local}, \ref{th:final}.

\begin{figure}
	\centering
	\includegraphics[width=0.45\textwidth]{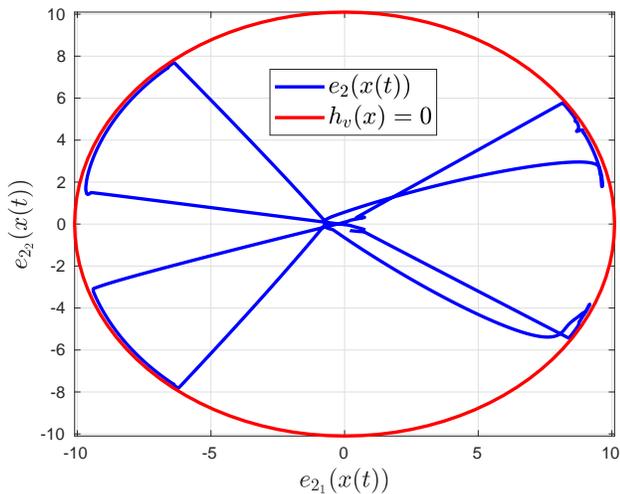}
	\caption{The velocity error signal $e_2(x(t))$ (blue) for $t\in[0,8.5]$ seconds, and the boundary $h_v(x)=0$ (red).}
	\label{fig:error_2}
\end{figure}

\begin{figure}
	\centering
	\includegraphics[width=0.45\textwidth]{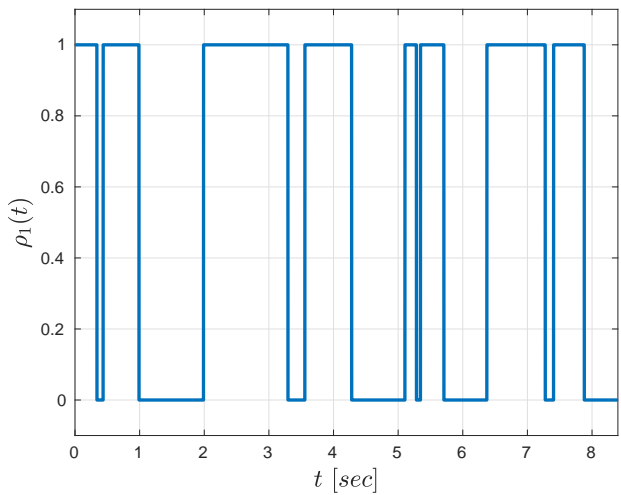}
	\caption{The evolution of the discrete signal $\rho_1(t) \in \{0,1\}$ for $t\in[0,8.5]$ seconds.}
	\label{fig:rho_1}
\end{figure}

\begin{figure}
	\centering
	\includegraphics[width=0.4\textwidth]{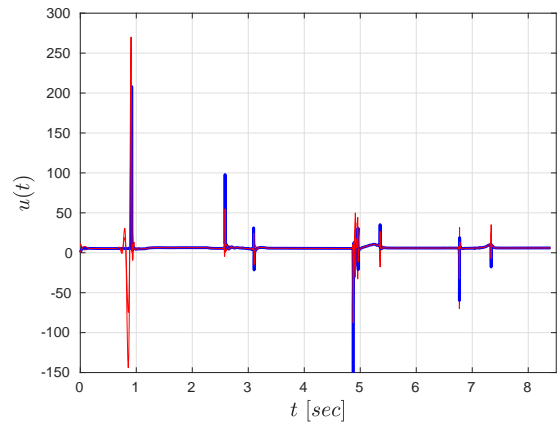}
	\caption{The evolution of the control inputs $u_1(t)$ (red) and $u_2(t)$ (blue) for $t\in[0,8.5]$ seconds. }
	\label{fig:control inputs}
\end{figure}

\begin{figure}
	\centering
	\includegraphics[width=0.4\textwidth]{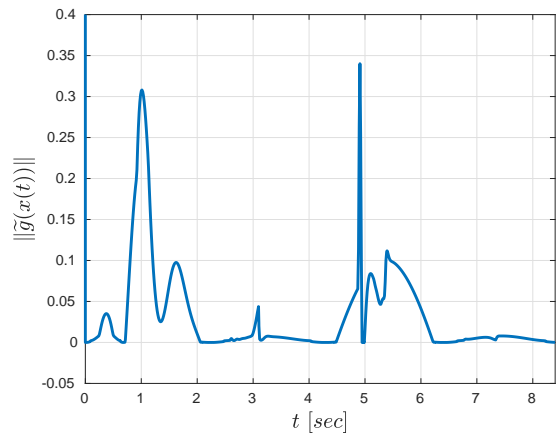}	
	\caption{The evolution of the error $\widetilde{g}^i(x(t))$ for $t\in[0,8.5]$ seconds. }
	\label{fig:G_error}
\end{figure}

 \section{Conclusion and Future Work} \label{sec:conclusion}
We consider the safety problem for a class of $2$nd-order nonlinear unknown systems. We propose a two-layered control solution, integrating approximation of dynamics from limited data with closed form nonlinear control laws using reciprocal barriers. Future efforts will be devoted towards extending the proposed framework to stabilization/tracking.

\bibliographystyle{IEEEtran}
\bibliography{references_journal}

\end{document}